\pgfplotsset{compat=1.18}
\theoremstyle{thmstyleone}%
\newtheorem{theorem}{Theorem}[section]
\newtheorem{proposition}[theorem]{Proposition}%
\newtheorem{problem}[theorem]{Problem}%
\theoremstyle{thmstyletwo}%
\newtheorem{example}{Example}%
\newtheorem{remark}{Remark}%
\theoremstyle{thmstylethree}%
\newtheorem{definition}{Definition}%
\newtheorem{hypothesis}[theorem]{Hypothesis}
\newtheorem{e-proposition}[theorem]{Proposition}
\newtheorem{e-definition}[theorem]{Definition\rm}
\newcommand{\jump}[1]{\lbrack\! \lbrack#1 \rbrack\!\rbrack}
\newcommand{\norm}[1]{\left\| #1 \right\|}
\newcommand{\abs}[1]{\left\vert #1 \right\vert}
\newcommand{\sigI}{\sigma_{\mathrm{I}}}
\newcommand{\sigII}{\sigma_{\mathrm{II}}}
\newcommand{\sigIII}{\sigma_{\mathrm{III}}}
\newcommand{\pI}{p_{\mathrm{I}}}
\newcommand{\pII}{p_{\mathrm{II}}}
\newcommand{\pIII}{p_{\mathrm{III}}}
\def\argmin{\qopname\relax m{argmin}}
\def\sig{{\boldsymbol{\sigma}}}
\def\veps{{\boldsymbol\varepsilon}}
\def\Gc{{\mathsf{G}_c}}
\def\u{\mathbf u}
\def\n{\mathbf n}
\def\K{\mathbb K}
\def\p{\mathbf p}
\def\ns{\boldsymbol{\nu}}
\def\s{\mathbf s}
\def\x{\mathbf x}
\def\sics{\sigma_c^{\s}}
\def\nus{\boldsymbol\nu^{\mathbf s}}
\def\nuss{\boldsymbol\nu_*^{\mathbf s}}
\def\S2{\mathbb S}
\def\Q{\mathbf Q}
\def\Orth3{SO(3)}
\def\T{\mathbf T}
\def\t{\mathbf t}
\def\v{\mathbf v}
\def\k{\mathsf k}
\def\A{\mathsf{A}_0}
\def\H{\textrm{H}}
\def\a{\mathsf a}
\def\w{\mathsf w}
\def\R{\mathbb R}
\def\dV{\mathrm{d}V}
\def\dS{\mathrm{d}S}
\begin{document}
\title{A variational approach to fracture incorporating any convex strength criterion}

\author[1]{\fnm{Blaise} \sur{Bourdin}}\email{bourdin@mcmaster.ca}

\author*[2,3]{\fnm{Jean-Jacques} \sur{Marigo}}\email{jean-jacques.marigo@polytechnique.edu}

\author[2]{\fnm{Corrado} \sur{Maurini}}\email{corrado.maurini@sorbonne-universite.fr}

\author[2,3]{\fnm{Camilla} \sur{Zolesi}}\email{camilla.zolesi@sorbonne-universite.fr}

\affil[1]{\orgdiv{Department of Mathematics \& Statistics}, \orgname{McMaster University},  \city{Hamilton}, \postcode{L8S-4K1}, \state{Ontario}, \country{Canada}}

\affil*[2]{\orgdiv{Institut Jean Le Rond d'Alembert}, \orgname{Sorbonne Université}, \orgname{CNRS}, \city{Paris}, \postcode{F-75005}, \country{France}}

\affil[3]{\orgdiv{Laboratoire de Mécanique des Solides}, \orgname{Institut Polytechnique de Paris},  \orgname{CNRS} \city{Palaiseau}, \postcode{F-91128}, \country{France}}

\date{\today}
\abstract{
    We propose a variational phase-field model of fracture capable of accounting for arbitrary closed convex strength domains. 
    Unlike traditional models based on Ambrosio and Tortorelli regularization, the phase-field variable does not affect the material stiffness. 
    Instead, our elastic energy exhibits linear growth outside a strength domain, which shrinks to 0 as the phase-field variable goes to 1.
    We characterize this model through a fundamental problem on a cube subject to boundary loads.
    We show that the solution of this problem is a transverse cohesive crack, provided that the applied load and the direction of the displacement jumps satisfy a compatibility criterion, which we formulate in terms of Mohr's circles for isotropic strength domains.
    This allows us to derive a hierarchy of strength criteria for which fracture is never possible, sometimes possible or always possible, depending on the direction of the stress tensor.
    We discuss the properties of the model and postulate a ``sharp-interface'' limit in the form of a cohesive law that can be explicitly derived from the form of the phase-field model.
    We give several examples of phase-field models and their cohesive limits.
    The proposed framework unifies within a single consistent variational theory key concepts developed over the centuries to predict or prevent material failure: Griffith and cohesive crack models, damage models, plasticity, strength criteria, and limit analysis.
}
\maketitle

\keywords{keyword1, Keyword2, Keyword3, Keyword4}



\newpage
\tableofcontents
\newpage

\section{Introduction}
Fracture is governed by two fundamental material properties: toughness, which characterizes the energy associated with surfaces of displacement discontinuity, and strength, which defines the maximum stress a material can sustain before failure. 
Toughness determines the threshold for the propagation of existing cracks, whereas crack nucleation in pristine materials relates to the strength.

The construction of a model capable of accounting for both behaviors and material properties \emph{independently} is a well--known challenge. 
Introducing a material strength, \emph{i.e.} postulating that at all time, the stresses remain within a given bounded elastic domain, implies a bulk energy with at most linear growth at infinity for which strains may concentrate along sets of co-dimension 1 (curves and surface in two and three dimension respectively)~\cite{Suq81,DalDeSMor06,Francfort2015}.
In areas of strain localization, the bulk energy degenerates into a \emph{cohesive} surface energy depending on the magnitude of the displacement jump~\cite{BouBraBut95}, which is energetically favorable over the creation of a crack whose energy is independent of the aperture and given by the toughness.
This is essentially incompatible with the fundamental assumption of a Griffith-like surface energy proportional to the cracks length or surface, and independent of their opening, and suggests that reconciling strength and toughness necessarily leads to a cohesive fracture energy~\cite{ConFocIur24,barenblatt1962}.

Existing variational phase-field formulations capture crack nucleation through an effective strength criterion which emerges from the loss of stability of {homogeneous damage states}~\cite{Pham-Marigo-EtAl-2011a}.
In some regimes, these models are in good quantitative agreement with experiences in a wide range of materials~\cite{BouMarMau14,TanLiBou18}.
However, they offer limited and indirect control over the actual strength domain—an issue that becomes particularly pronounced under multiaxial loading~\cite{comiFractureEnergyBased2001,AMOR20091209,FREDDI20101154,KUMAR2020104027,VincentiniZolesi24}.  
In this work, we extend the variational theory of fracture by incorporating classical strength criteria, thus unifying key concepts developed to model material failure within a consistent variational framework.
Drawing from concepts of perfect plasticity~\cite{Suq81,DalDeSMor06} and limit analysis~\cite{Sal83,Salencon1993CISM}, our model uses {a non-linear elastic energy} with linear growth at infinity in order to enforce general strength criteria.
A phase-field variable contracting the admissible stress domain homothetically without degrading the elastic stiffness leads to vanishing stresses along the areas of strain localization.
Hence, we devise a multiaxial model problem on a cube to characterize the sharp interface limit of our model as a \emph{cohesive} fracture energy converging to a Griffith-like energy for ``large'' openings.
We characterize the link between {non-linear} bulk energy, cohesive surface energy, strength domains, compatible displacement jumps, and admissible traction vectors in the multi-axial regime, {borrowing concepts from limit analysis.}

Before introducing our model and main results, we present a brief reviews of variational and phase-field approaches in fracture and plasticity.
We also summarize relevant notations in Table~\ref{tab:notations}.

\subsection{Classical phase-field models of fracture}
\label{sec:classicalPhaseField}
Consider a solid occupying a domain $\Omega\subset \mathbb{R}^n$ ($n=2,\ 3$) in its reference configuration and subject to a combination of time-dependent boundary and body loads and
 displacement boundary conditions.
Let $\A$ be the fourth-order elasticity tensor of the pristine material and $\Gc$ its fracture toughness.
Let $\veps(\u) := \mathrm{sym}(\nabla\u) \in  \mathbb{M}^3_s$ be the linearized strain associated with a displacement field $\u \in \mathbb{M}^3_s$ the  space of $3\times 3$ symmetric matrices.
Phase-field models of fracture were initially introduced in~\cite{BouFraMar00} as a regularization of the Griffith fracture energy
\begin{equation}
    \mathcal{F}(\u,\Gamma):=    \int_{\Omega\setminus\Gamma} 
    \frac{1}{2}{\A}\veps(\u)\cdot\veps(\u)\,\dV 
    +
    {\Gc} \mathcal{H}^{n-1}(\Gamma),
\label{eq:griffith}
\end{equation}
where $\Gamma$ is the crack set where displacement may jump and $\mathcal{H}^{n-1}(\Gamma)$  denotes its surface measure. 
Following~\cite{Braides-1998a} and introducing a regularization parameter $\ell >0$ and a function $\alpha$ representing the fracture surfaces, one defines
\begin{equation}
  \mathcal{F}_\ell(\u,\alpha):= \int_\Omega 
  \left( 
    \frac{\a(\alpha)}{2}{\A}\veps(\u)\cdot\veps(\u)
    + \frac{{\Gc}}{4c_\w}\left(\frac{\w(\alpha)}{\ell}+\ell\,\nabla\alpha\cdot\nabla\alpha\right)
  \right)
  \mathrm{d}V,
\label{eq:classical-PF}
\end{equation}
where $\a$ and $\w$  are continuous monotonic functions such that $\a(0) = \w(1) = 1$ and $\a(1) = \w(0) = 0$, and $c_{\w} = \int_0^1 \sqrt{\w(s)}\, \mathrm{d}s $ is a normalization factor.
Here and henceforth, we denote by \( \sig \cdot \veps = \mathrm{tr}(\sig^T \veps) = \sigma_{ij} \varepsilon_{ij} \) the scalar product between two second-order tensors, and by \( \sig = \A \veps \) the second-order stress tensor \( \sig \) with components \( \sigma_{ij} = (\mathsf{A_0})_{ijkl} \varepsilon_{kl} \).

The link between phase-field and Griffith energy is established in a series of mathematical works proving the $\Gamma$ convergence of $\mathcal{F}_\ell$ to $\mathcal{F}$ as $\ell \to 0$~\cite{Chambolle-2004b,Chambolle-2005,DalIur13}, from which one derives that global minimizers of~\eqref{eq:classical-PF} converge to that of~\eqref{eq:griffith}.

For fixed $\ell$ and when \emph{local} minimization is substituted for \emph{global} minimization, however, a finer analysis of these models reveals a very different behavior with respect to crack nucleation~\cite{PhaMar10a,PhaMar10b,SicMarMau14,MarMauPha16,Mar23}.
In this setting, the phase-field variable $\alpha$ is seen as a damage variable, $\alpha = 0$ representing pristine material and $\alpha = 1$ a fully damaged one.
The first-order optimality condition for the functional~\eqref{eq:classical-PF} at $\alpha = \alpha_0 = \mathrm{cst}$ shows that an undamaged state is only feasible if
\begin{equation}
  {\A}\veps(\u)\cdot\veps(\u) = {\A^{-1}}\sig\cdot\sig \leq
\frac{\Gc}{2c_\w\ell}\frac{\w'(\alpha_0)}{\a'(\alpha_0)}.
\label{eq:elasticDomainAT}
\end{equation}
When $\alpha_0 = 0$, this defines an elastic domain consisting of stress tensors $\sig$  within a set $\K_0$   whose characteristic size scales as $\sqrt{\mathsf{E}_0 \Gc / \ell}$, $\mathsf{E}_0$ denoting the Young's modulus of the considered material.
This expression also defines a strength domain of the form
\begin{equation}
  \label{eq:defK0AT}
  \K_0 := \left\{ \sig \in \mathbb{M}^3_s;\ {\A^{-1}}\sig\cdot\sig \leq 
\frac{\Gc}{2c_\w\ell}\left(\max_{0 \le \alpha_0 \le 1}\frac{\w'(\alpha_0)}{\a'(\alpha_0)} \right)\right\}.
\end{equation}
In the tensile regime, for instance, this expression links tensile strength $\sig_t$ and the length scale $\ell$ through the relation
\[
  \sig_t = \sqrt{\frac{\Gc \mathsf{E}_0}{2c_\w\ell}\left(\max_{0 \le \alpha_0 \le 1} \frac{\w'(\alpha_0)}{\a'(\alpha_0)}\right)}.
\]
In light of this observation, it becomes natural to view~\eqref{eq:classical-PF} as a \emph{gradient damage model}, where the length scale $\ell$ acts as an intrinsic material parameter~\cite{PhaMar10a,PhaMar10b,Mar23}, instead of a regularization of~\eqref{eq:griffith}.
This paradigm has proven effective in describing a broad range of fracture processes in the tensile regime, from crack nucleation to final failure, with good agreement against experimental observations~\cite{BouMarMau14,SicMarMau14,TanLiBou18}.

In the general case however,~\eqref{eq:elasticDomainAT} defines domains defined in terms of critical strain energy density and it is not clear that it can be tailored to match experimental observations of general strength domains.
For instance, in the incompressible limit,~\eqref{eq:defK0AT} leads to infinite strength under hydrostatic tension or compression~\cite{KUMAR2020104027}.
 
Multiple attempts have been made to modify the form of~\eqref{eq:classical-PF} by introducing ``energy splits'', \emph{i.e.} replacing the strain energy density $\frac{\a(\alpha)}{2}\A\veps(\u)\cdot\veps(\u)$ with more complex expressions leading to more flexibility in the definition of the elastic domain~\cite{VincentiniZolesi24,ZOLESI2024105802}.
Whether doing so while preserving the link with the Griffith energy is feasible or so remains an open problem. 
Instead,~\cite{KUMAR2020104027} introduced  a ``driving force'' which is added to the optimality conditions for~\eqref{eq:classical-PF} with respect to $\alpha$, hence renouncing its variational nature and severing the connection with~\eqref{eq:griffith} and the mathematical and thermodynamical foundation upon which phase-field models of fracture were built. 

Several works seek to build phase-field approximation of cohesive fracture models~\cite{lorentz2011,ConFocIur16,ConFocIur24}.
However, their generalization to specific strength domain or cohesive laws is an open problem.
An interesting recent work of~\cite{Feng-Hai-2025a} incorporates realistic strength criteria in a variational cohesive phase-field model by introducing the information on the crack normal as structured deformations~\cite{FREDDI20101154}.

\subsection{Variational models of perfect plasticity}

Accounting for an arbitrary closed convex domain of admissible stresses $\K_0 \subset \mathbb{M}^3_s$ can be achieved through the well-established theory of perfect plasticity, in which the constitutive law for the stress tensor  is of the form $\sig = \A\left(\veps(\u) - \p\right)$, where $\p \in \mathbb{M}^3_s$ is the \emph{plastic strain}.
In a static variational framework (see~\cite{Han-Reddy-1999a,DalDeSMor06} amongst others), the deformation $\u$ and plastic strain $\p$ are  given as the minimizers of the functional
\begin{equation}
  \label{eq:VarPlasticity}
  \int_\Omega \frac12 \mathsf{A}_0(\veps(\u)-\p)\cdot(\veps(\u)-\p) +  \mathrm{H}_{\K_0}(\p)\, \dV,
\end{equation}
where $\H_{\K_0}(\p):= \sup_{\sig\in\K_0} \p\cdot \sig$ is the \emph{support function} of $\K_0$. 

Existence results are obtained for  $\u \in BD(\Omega; \mathbb{R}^3)$ in the space of functions of bounded deformations, and $\p \in \mathcal{M}(\Omega; \mathbb{M}^3_s)$ as a Radon measure; see~\cite{Suq81,Tem85,DalDeSMor06}. 
Formally, because $\H_{\K_0}$ grows linearly at infinity, $\veps$ and $\p$ may concentrate along regions of dimension strictly less than $n = 3$.
It is then natural to introduce an additive decomposition of the strain into a regular ($\mathrm{R}$) and a singular ($\mathrm{S}$) part:
\begin{equation}
  \label{eq:strain-SR-decomposition}
  \veps(\u) = \veps^\mathrm{R}(\u) + \veps^\mathrm{S}(\u),\quad \p = \p^\mathrm{R} + \p^\mathrm{S},\quad \text{with} \quad \veps^\mathrm{S}(\u) = \p^\mathrm{S},
\end{equation}
where the regular parts are assumed to be square-integrable, \emph{i.e.}, $\veps^\mathrm{R}(\u), \p^\mathrm{R} \in L^2(\Omega; \mathbb{M}^3_s)$, and the singular parts may concentrate on sets of dimension strictly less than $n = 3$, \emph{i.e.}, $\veps^\mathrm{S}(\u), \p^\mathrm{S} \in \mathcal{M}(\Omega; \mathbb{M}^3_s)$. 
The requirement of finite energy implies that $\veps - \p$ must be  square-integrable. The singular parts of the linear and plastic strains must therefore coincide.

Concentrations of the singular part of the strain $\veps^\mathrm{S}(\u)$ along sets of co-dimension 1 correspond to discontinuities of the displacement field.
For a given stress state $\sig$ on the boundary $\partial \K_0$ of $\K_0$, whether displacement jumps are possible or not and their direction (tangential, normal) depend on the geometry of the normal to the strength domain, namely whether the outer normal $\boldsymbol{\nu}$ to $\K_0$ at $\sig$ is of the form
\begin{equation}  
\label{eq:compatibilityIntro}
  \ns=\mathbf d\odot\mathbf n:=\frac12\left(\mathbf d\otimes\mathbf n+\mathbf n\otimes\mathbf d\right),\quad 
    \text{with}\quad
    \mathbf d\cdot\n\ge0,
\end{equation}
$\mathbf{d}$ denoting the direction of the displacement jump and $\n$ the normal to its jump set (see~\cite{Francfort2015,FRANCFORT2016125} for the von Mises strength criterion).
Normal stresses do not necessarily vanish in these regions, so that they can  not be seen as fractures.

In absence of an irreversibility criterion,~\eqref{eq:VarPlasticity} is usually referred to as Hencky plasticity, which can be seen as special case of non-linear, non-smooth hyperelastic model~\cite{DuvautLions1976}.
Extending in to a proper quasi-static evolution requires the introduction of a concept of irreversibility for the plastic strain~\cite{DalDeSMor06}.

\subsection{Proposed phase-field model and statement of the main results}
Classical phase-field models can account for well-defined cracks and finite fracture toughness and variational models of plasticity can account for arbitrary strength domains.
Our model is inspired by both approaches.
We propose an energy of the form
\begin{equation}
\mathcal E_\ell(\u,\mathbf p,\alpha):=
\int_\Omega\left(
\frac{\A}{2}(\veps(\u) -\p)\cdot(\veps(\u) -\p)+\k(\alpha)\H_{\K_0}(\p)\right) \dV 
+ \frac{\Gc}{4c_\w}\int_\Omega\left(\frac{\w(\alpha)}{\ell}+\ell\,\nabla\alpha\cdot\nabla\alpha\right) \dV,
\label{eq:plastic-damage-energyIntro}
\end{equation}
where $\k, \w: [0,1] \to [0,1]$ are respectively a monotonically decreasing and a monotonically increasing continuous function of the scalar phase-field variable $\alpha \in [0,1]$ satisfying $\w(0) = \k(1) = 0$ and $\w(1) = \k(0) = 1$.
Note that $\K_0$ may be \emph{any} closed convex subset of the stress space, independent of both the linear elastic stiffness tensor $\A$ and the fracture toughness $G_c$.
We formulate a quasi-static evolution for~\eqref{eq:plastic-damage-energyIntro} including an irreversibility condition of $\alpha$ within the classical framework of rate-independent processes~\cite{Mie05}.

In contrast with variational models of plasticity, we view $\p$ as a non-linear elastic strain, which is not subject to any irreversibility condition.
Unlike classical phase-field models, the linear elastic stiffness tensor $\A$ is not degraded by $\alpha$ and remains constant. 
Instead, the phase-field variable modulates the material strength: it scales the strength domain $\K_0$ by a homotethy, thereby reducing the admissible stress set as $\alpha$ increases.
Indeed, for $\alpha = 0$, this energy reduces to perfect plasticity without irreversibility so that $\p$ can be interpreted as a nonlinear elastic contribution to the total strain $\veps(\u)$.

This model  can be seen as a generalization of the models coupling damage and plasticity introduced in~\cite{AleMarVid14,AleMarVid15}. For antiplane deformations and von Mises strength domains,~\citet{DalOrlToa16} have shown the Gamma-convergence of this class of models to cohesive fracture models as $\ell\to 0$. Yet, their behavior remains poorly understood for two- or three-dimensional elasticity and general strength domains.

\medskip
Section~\ref{sec:Model} presents the variational formulation of the model.
In Section~\ref{sec:fundamentalProblem}, we solve an evolution problem in a cube under prescribed average deformation and fixed loading direction $\mathbf{s}$ in stress space, with $\s \cdot \s = 1$ for finite $\ell$.
The magnitude of the average deformation serves as a scalar loading parameter $t$.
The solution remains linearly elastic, with $\p_t = 0$, $\alpha_t = 0$, and homogeneous strain as long as the stress $\sig_t = \sigma_t \s$ remains in the interior of the undamaged strength domain $\K_0$. 
The stress reaches the boundary of the domain at a critical load $t$ such that $\sig_t = \sigma_c^\s \s \in \partial \K_0$, where $\sigma_c^\s$ denotes the strength in the direction $\s$.
We show that, for loadings beyond this threshold, two qualitatively distinct behaviors may arise, depending on whether the loading direction $\s$ is compatible with displacement discontinuitiesx in the sense that the outer normal $\boldsymbol{\nu}^\s$ to $\K_0$ at $\sig_t$ satisfies~\eqref{eq:compatibilityIntro}.
For loading directions compatible with displacement discontinuities, we construct a solution in which the plastic strain localizes as a measure supported on a cross-section $J_\u$ of the cube and vanishes elsewhere. 
The associated damage field $\alpha_t$ is regular away from the jump set $J_\u$, where it attains a constant maximal value $\bar{\alpha}_t$ and exhibits a discontinuity in its derivative induced by the concentration of plastic strain.
This solution represents the nucleation and progressive evolution of a cohesive crack.
In contrast, when the loading direction $\s$ is incompatible with a displacement jump, the solution remains spatially homogeneous, with both the plastic strain $\boldsymbol{p}_t$ and the damage field $\alpha_t$ uniform throughout the domain. 
We further analyze the stability of these two solution regimes, demonstrating that the homogeneous solution is unstable when displacement jumps are admissible.

In Section~\ref{sec:strength}, we reformulate classical strength criteria based on the Mohr representation of the stress vector. 
For isotropic strength criteria (see~\eqref{eq:defIsotropy})
we describe the set of admissible stress vectors in terms of an intrinsic domain $\mathcal{K}_0$ in the Mohr's representation.
We prove that displacement jumps along a surface with normal vector $\n$ are possible if and only if $\sig \in \partial \K_0$ \emph{and} $\sig\n \in \partial \mathcal{K}_0$.

This concept of compatibility is consistent with classical results from limit analysis~\cite{DruPraGre52,Sal83}.
It introduces a hierarchy of strength domains for which displacement jumps are never, sometimes, or always possible.
We discuss this classification for the classical Drucker-Prager and Mohr-Coulomb families of criteria.

In Section~\ref{sec:SharpLimit}, we study the sharp interface limit of~\eqref{eq:plastic-damage-energyIntro}. 
As $\ell \to 0$, finite energy requires that $\alpha \to 0$ almost everywhere in $\Omega$. 
This constraint and our explicit construction of a fundamental solution suggest the following limiting energy functional:
\begin{equation}
  \label{eq:limit-energy}
  \mathcal{E}_0(\u,\hat\alpha)=
  \int_{\Omega\setminus J_\u} 
 \mathbf \psi_0(\veps)
  \,\mathrm{d} V
  +
  \int_{J_\u}\phi(\jump{\u},\hat\alpha) \,\mathrm{d}S,
\end{equation}
where
\begin{equation*}
\psi_0(\veps)=
\inf_{\p\in{\mathbb{M}_s^3}}\A(\veps-\mathbf p)\cdot(\veps-\mathbf p)+\H_{\K_0}(\p)
=
\inf_{\p\in\K_0}\A(\veps-\mathbf p)\cdot(\veps-\mathbf p)
\end{equation*}
is an elastic energy density with linear growth at infinity in directions where $\K_0$ is bounded.
The surface energy density $\phi$ can be expressed in terms of the displacement jump and a local damage variable ${\hat{\alpha}}$ supported on $J_\u$:
\begin{equation}
  \phi(\jump{\u},\hat\alpha) = 
  \hat\k(\hat\alpha)
  \H_{\K_0}(\n \odot \jump{\u})
  + \Gc\,\hat{\alpha},
  \label{eq:surface-energy-densities-limit-model}
\end{equation}
where
\begin{equation*}
  {\hat{\alpha}} (\alpha):= \frac{\int_0^{\alpha} \sqrt{\w(\beta)}\, \mathrm{d}\beta}{\int_0^1 \sqrt{\w(\beta)}\, \mathrm{d}\beta},\qquad
  \hat\k({\hat{\alpha}}):=\k(\alpha({\hat{\alpha}})).
\end{equation*}
We refer to $\hat{\alpha}$ as a \emph{local damage variable} as it is not subject to any gradient regularization terms, and carries a clear physical interpretation: it represents the fraction of dissipated surface energy relative to $G_c$, the critical energy release rate required to create a fully developed stress-free crack. 

The form of the cohesive law~\eqref{eq:surface-energy-densities-limit-model} highlights the influence of the strength domain $\K_0$ on the cohesive law, through the jump compatibility condition. 
Indeed, we express the cohesive law~\eqref{eq:surface-energy-densities-limit-model} in terms of the intrinsic domains $\mathcal{K}_0$, which is consistent with~\cite[Proposition 9]{ChaLavMar06} for variational cohesive models.
The dependence of the surface energy on $\K_0$ implies that the bulk and surface terms in~\eqref{eq:limit-energy} are linked and cannot be chosen arbitrarily, as established in~\cite{BouBraBut95} for anti-plane cohesive energy. 

The quasi-static formulation for the limit model~\eqref{eq:hatalpha} accounts for the irreversibility of damage by enforcing a growth condition on the local damage variable $\hat{\alpha}$. 

We illustrate the properties of the resulting cohesive model using three classical strength criteria: von Mises, incompressible Drucker-Prager, Tresca with a tension cut-off.

For the simple example
\begin{equation*}
  \k(\alpha) = (1 - \alpha)^\zeta, \quad \w(\alpha) = \alpha^{2}, \quad 1 \leq \zeta <2,
\end{equation*}
the function $\hat \k$ defining the cohesive law~\eqref{eq:surface-energy-densities-limit-model} writes as
\[
\hat\alpha=\alpha ^{2},\qquad\hat\k(\hat\alpha)=1-\sqrt{\hat \alpha}.
\]

All these results  remain valid even in the degenerate case of a linearly rigid material, where the linear elastic stiffness $\A$ is replaced by the rigidity constraint $\veps(\u) = \p$, as in the classical theory of limit analysis for rigid-perfectly plastic materials~\cite{Sal83}.

Section~\ref{sec:Conclusion} concludes the paper and outlines directions for future work.

\begin{table}[h]
  \caption{Overview of the key notation employed in the paper.}
  \label{tab:notations}
  \begin{tabularx}{\textwidth}{cp{0.1\textwidth}X}
  \toprule
&\textbf{Symbol} & \textbf{Description} \\
\hline\\
  \multicolumn{2}{l}{{{General notation}}}\smallskip   \\
  &$\sigma_1,\sigma_2,\sigma_3$& Eigenvalues of the tensor $\sig$ (not ordered)\\
  &$\sigI,\sigII,\sigIII$& Ordered eigenvalues of the tensor $\sig$, $\sigI\ge\sigII\geq\sigIII$\\
  &$\mathbf{a} \wedge \mathbf{b}$ & The cross product of two vectors.\\
  &$\mathbf{a} \odot \mathbf{b}$ & The symmetrized tensor product of two vectors. $\mathbf{a} \odot \mathbf{b} = \frac12\left(\mathbf{a} \otimes \mathbf{b} + \mathbf{b} \otimes \mathbf{a}\right)$.\\

  \multicolumn{2}{l}{{{State variables and fields}}}  \smallskip  \\
  &$\u$ & Displacement vector $\in \mathbb{R}^n$\\ 
  &$\p $ & Nonlinear strain \\ 
  &$\sig$ & Stress tensor\\
  & $\veps$  & Linearized strain, $\veps(\u):={(\nabla \u + \nabla^T \u)}/2$\\
  & $\alpha$ & Phase-field variable\\
  & $\hat\alpha$ & Renormalized damage variable representing the fraction of dissipated energy in the sharp-interface cohesive model\\
  & $\bar\alpha$ & Trace of the phase-field on the jump set $J_\u$\\
  & $\jump{\u}$ &   Displacement jump\\
  & $\p^S$, $\veps^S$  & Singular part of the strains \\ 
  & $\delta_J$ & Dirac surface measure concentrated on the surface $J$ \\ 
  &$\Sigma$ & Normal stress $\Sigma:=\sig\n\cdot\n$\\
  &$\T$ & Tangential or shear stress, $\T:=\sig\n-\Sigma\,\n$\\

  \multicolumn{2}{l}{{{Material constants and functions}}}  \smallskip  \\
  &$\sigma_c$& Maximum allowable normal stress in uniaxial traction\\
  &$\tau_c$& Maximum allowable shear stress in pure shear\\
  &$\Gc$& Fracture toughness\\
  &$E_0$& Young modulus\\
  &$\nu_0$& Poisson ratio\\
  &$\ell$& Regularization length\\
  &$\A$, $\A^{-1}$& Linear stiffness and compliance 4th-order tensors\\
  &$A_0^\s$& Linear stiffness  in the direction $\s$, with $A_0^\s:=\A\s \cdot\s$\\
  &$\boldsymbol{\nu}$& Normal to the yield surface $\partial\K_0$\\
  &$\boldsymbol{\nu}^\s, {\nu}^\s$&  Normal to the yield surface $\partial\K_0$ in the direction $\s$ and its $\s$-projection $\nu^\s:=\boldsymbol{\nu}^\s\cdot\s$\\
  &$\sigma_c^\s$& Maximum allowable stress in the direction $\s$\\
  &$\ell_\mathrm{ch}^\s$& Elasto-cohesive length in the direction $\s$, $\ell_\mathrm{ch}^\s:={\Gc A_0^\s}/{(\sigma_c^\s)^2}$\\
  &${\varepsilon}_e^\s$& {Elastic limit deformation} in the direction $\s$, ${\varepsilon}_e^\s=\sigma_c^\s/A_0^\s$\\
  &${\varepsilon}_c^\s$& {Critical deformation} in the direction $\s$ for the damage criterion\\
  &$\k(\alpha)$& Strength degradation function\\ 
  &$\hat\k(\hat\alpha)$& Renormalised strength degradation function\\ 
  &$\w(\alpha)$& Local dissipation function\\ 
  &$c_\w$& Normalization constant, $c_\w:=\int_0^1\sqrt{\w(\alpha)}\mathrm{d}\alpha$\\
  &$c^*_\w$& Normalization constant, $c^*_\w:=\int_0^1\frac{1}{\sqrt{\w(\alpha)}}\mathrm{d}\alpha$\\
  
  &$k$& Slope parameter of the Drucker-Prager strength domain\\
  \multicolumn{2}{l}{{{Energy and energy densities}}}  \smallskip  \\
  &$\varphi_{0}(\veps,\p)$&Elastic energy density with explicit dependence on the nonlinear deformation\\
  &$\psi_{0}(\veps)$&Elastic energy density as a function of the geometric strain $\veps$ only, $\psi_{0}(\veps):=\min_{\p\in\mathbb M^3_s}\varphi_{0}(\veps,\p)$\\
  &$D_\ell(\alpha,\nabla\alpha)$&Dissipated energy density of the phase-field model\\
  &$\phi_\n(\hat\alpha_t, \jump{\u}) $&Surface energy density of the sharp-interface cohesive model\\
  &$\Phi(\bar\alpha) $&Total surface energy\\
  \multicolumn{2}{l}{{{Sets}}}  \smallskip  \\
  &$\mathbb{M}^3_s$ & set of $3\times 3$ symmetric matrices with real coefficients\\
  & $\mathbb{S}^2$ & Unit sphere in $\mathbb{R}^3$\\
  & $\Omega\subset \mathbb{R}^3$ & Reference domain\\
  & $\Omega_L\subset \mathbb{R}^3$ & Cube of side $L$\\
  & $\partial\Omega$ & Boundary of $\Omega$\\  
  & $J_\u\subset\Omega$ & Jump set for the displacement field $\u$\\
  &$\K_0, \H_{\K_0}(\p)$ & Initial strength domain, convex set subset of $\mathbb M^3_s$ and its support function \\ 
  &$\partial\K_0$ & Initial yield surface, boundary of $\K_0$ \\
  &$\mathring\K_0$& Interior of a set $\K_0$, $\mathring\K_0=\K_0 \setminus \partial\K_0$\\
  & $\mathrm{N}_{\K_0}(\sig)$ & Normal cone to the convex set $\K_0$ in $\sig\in\partial\K_0 $\\
  &$\mathcal{K}_0, \H_{\mathcal{K}_0}(\mathbf{d})$ & Intrinsic domain in the Mohr's space $(\Sigma,T_1,T_2)$ and its support function for isotropic materials \\

   \bottomrule
  \end{tabularx}
\end{table}

\section{Formulation of the phase-field model}
\label{sec:Model}
\subsection{Fundamental requirements}
\label{sec:fundamentalRequirements}

We aim at formulating a regularized fracture model that incorporates the following fundamental  phenomenological features:
\begin{enumerate}
  \item[R1.] \emph{Linear elasticity at small strains} in the bulk, governed by a symmetric positive-definite elasticity tensor $\A$.

  \item[R2.] A \emph{strength criterion} in the bulk, described by a closed convex set $\K_0 \subset \mathbb{M}^3_s$ that defines the domain of admissible stress tensors $\sig$ and such that the material behaves in a linearly elastic manner when the stress lies in the interior of $\K_0$.
  \item[R3.] A \emph{finite fracture toughness} $\Gc$ on the crack surface, which is represented by a codimension-one set $\Gamma$ in the reference configuration $\Omega$. The displacement field $\u$ may be discontinuous across this surface, and the jump set $J_\u \subset \Gamma$ consists of points where the displacement jump $\jump{\u}$ is non-zero. 
  \item[R4.] \emph{Irreversibility of the crack set} $\Gamma$, translating the conditions that crack, once created, cannot self-heal.
\end{enumerate}

In order to satisfy R1 and R2, our model leverages the framework of variational models of plasticity in the bulk~\cite{Han-Reddy-1999a,DalDeSMor06} (see Section~\ref{sec:bulkBehaviour}).

The requirement of finite fracture toughness $\Gc$ implies that, for sufficiently large displacement jumps on a set of normal $\n$, the traction vector  $\sig\n$ must vanish in admissible jump directions, since the cohesive traction law is related to the derivative of the surface energy. 
In order to satisfy R3 and R4, we introduce an irreversible scalar phase-field $\alpha$ that progressively reduces the strength domain $\K_0$ as it grows. The phase-field variable is associated with a dissipated energy density, a gradient-type regularization, and an irreversibility condition, as is customary in phase-field fracture models (Section~\ref{sec:PFstrengthdegradation}).

\subsection{Constitutive model}
\subsubsection{Elastic material behavior in the bulk}
\label{sec:bulkBehaviour}
Given the \emph{linear elastic stiffness} as a fourth-order symmetric positive-definite tensor $\mathsf{A}_0$, and the \emph{strength domain} $\K_0$  as a closed convex subset of $\mathbb{M}^3_s$, we introduce the \emph{elastic energy density} associated with the second-order symmetric tensor $\veps\in \mathbb{M}^3_s$
\begin{equation}
   \label{eq:defPsi0}
  \psi_{0}(\veps) := \min_{\p \in \mathbb{M}^3_s} \varphi_{0}(\veps, \p),
\end{equation}
where
\begin{equation}
  \label{eq:defPhi0}
  \varphi_{0}(\veps, \p) := \frac{1}{2} \mathsf{A}_0(\veps - \p) \cdot (\veps - \p) + \H_{\K_0}(\p),
\end{equation}
and 
\[
\H_{\K_0}(\p) := \sup_{\sig \in \K_0} \sig \cdot \p
\]
denotes the support function of ${\K}_0$.

In all that follows, we assume that $\K_0$ and $\A$ verify the following basic properties.

\begin{hypothesis}[Constitutive assumption: elasticity and strength domain]
    We assume that the elastic tensor $\A$ is a symmetric definite positive fourth-order tensor and that the strength domain $\K_0$ is a closed convex subset of $\mathbb{M}^3_s$, that $\mathbf{0} \in \mathring{\K}_0:=\K_0\setminus\partial\K_0$, and that $\K_0$ is bounded in at least one direction.
\end{hypothesis}

\begin{remark}[Isotropic materials and limiting behaviors.]
In the examples we will specifically focus on the case of isotropic materials, for which $\A$ is parameterized by the Young's modulus $0 < \mathsf{E}_0 < \infty$ and the Poisson ratio $-1 < \nu < 1/2$. In the present framework, the limiting cases of linear compressibility ($\nu = 1/2$) and infinite stiffness ($\mathrm{E}_0 \to \infty$) can be treated without additional difficulty (see the example in Section~\ref{sec:example-DP-incomp}).
\end{remark}

We recall the following classical result, which can be proved using standard tools of convex analysis (see~\cite{DuvautLions1976,EkeTem76,Boyd-Vandenberghe-2004a} for instance). 
\begin{proposition}
    \label{prop:plasticityConstitutive}
    Let $\K_0\subset \mathbb{M}^3_s$ be a closed convex set such that $\mathbf{0} \in \mathring{\K}_0$.
    Consider $\veps \in \mathbb{M}^3_s$. Then, there exists a unique $\p_{\veps}$ achieving minimality in~\eqref{eq:defPsi0}.
    Let $\sig_{\veps} := \mathsf{A}_0(\veps-\p_{\veps})$, then 
    \begin{equation*}
        \sig_{\veps} \in \K_0   
    \end{equation*}
     and $\p_{\veps} \in \mathrm{N}_{\K_0}(\sig_{\veps})$, the normal cone to $\K_0$ at $\sig_{\veps}$, \emph{i.e.}
    \begin{equation}
      \label{eq:pNormalCone}
        (\sig_{\veps} - \sig)\cdot \p_{\veps} \ge 0,\quad \ \forall \sig \in \K_0.
    \end{equation}
    Furthermore, $\p_{\veps} = \mathbf{0}$ if $\mathsf{A}_0\veps \in \K_0$, and $\sig_\veps$ is a subgradient of the convex function $\psi_0$ at $\veps$: $\sig_\veps \in \partial \psi_0(\veps)$.
\end{proposition}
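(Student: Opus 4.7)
The plan is to proceed in four short steps of convex analysis, exploiting the fact that $\H_{\K_0}$ is by construction the Legendre--Fenchel conjugate of the indicator $\chi_{\K_0}$ of the closed convex set $\K_0$.

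\emph{Step 1 (existence and uniqueness).} The quadratic term in~\eqref{eq:defPhi0} is strictly convex in $\p$ because $\A$ is positive definite, and $\H_{\K_0}$ is proper, convex, lower semi-continuous, and non-negative (since $\mathbf 0\in\K_0$ gives $\H_{\K_0}(\p)\ge\mathbf 0\cdot\p=0$). Hence $\varphi_{0}(\veps,\cdot)$ is proper, lsc, strictly convex, and coercive: the quadratic alone grows like $|\p|^{2}$, which absorbs the fact that $\K_0$ may be unbounded and $\H_{\K_0}$ possibly $+\infty$ in some directions. The direct method on the finite-dimensional space $\mathbb{M}^3_s$ yields a unique minimizer $\p_\veps$.

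\emph{Step 2 (optimality and normal cone).} Fermat's rule reads $\mathbf 0 \in \partial_\p \varphi_0(\veps,\p_\veps)$. The quadratic is Gâteaux-differentiable with gradient $-\A(\veps-\p_\veps)=-\sig_\veps$, so the subdifferential sum rule (valid without qualification when one term is everywhere finite and smooth) gives $\sig_\veps \in \partial\H_{\K_0}(\p_\veps)$. Because $\K_0$ is closed convex nonempty, $\H_{\K_0}=\chi_{\K_0}^*$, and the subdifferential inversion formula yields
\begin{equation*}
\sig_\veps \in \partial\H_{\K_0}(\p_\veps)\ \Longleftrightarrow\ \p_\veps \in \partial\chi_{\K_0}(\sig_\veps)=\mathrm N_{\K_0}(\sig_\veps),
\end{equation*}
which simultaneously delivers $\sig_\veps\in\K_0$ (the normal cone is empty outside $\K_0$) and the variational inequality~\eqref{eq:pNormalCone}. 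The special case $\A\veps\in\K_0$ then follows by inspection: $\p=\mathbf 0$ satisfies the condition $\A\veps\in\K_0$ and $\mathbf 0\in\mathrm N_{\K_0}(\A\veps)$ trivially, and uniqueness forces $\p_\veps=\mathbf 0$.

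\emph{Step 3 (subgradient of $\psi_0$).} This is the step I expect to require the most bookkeeping. I would recognize $\psi_0$ as the infimal convolution $\psi_0 = f\,\square\,\H_{\K_0}$ with $f(\x):=\tfrac12\A\x\cdot\x$, so $\psi_0$ is convex and
\begin{equation*}
\psi_0^*(\sig)=f^*(\sig)+\H_{\K_0}^*(\sig)=\tfrac12\A^{-1}\sig\cdot\sig+\chi_{\K_0}(\sig),
\end{equation*}
the factorization being legitimate because the infimum in $\psi_0$ is attained. A short computation using $\sig_\veps\cdot\p_\veps=\H_{\K_0}(\p_\veps)$ (an immediate consequence of $\p_\veps\in\mathrm N_{\K_0}(\sig_\veps)$ together with $\sig_\veps\in\K_0$) and $\A^{-1}\sig_\veps\cdot\sig_\veps=\A(\veps-\p_\veps)\cdot(\veps-\p_\veps)$ then verifies the Fenchel--Young equality $\psi_0(\veps)+\psi_0^*(\sig_\veps)=\sig_\veps\cdot\veps$, which is equivalent to $\sig_\veps\in\partial\psi_0(\veps)$. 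The only potentially delicate point throughout is ensuring that the hypothesis $\mathbf 0\in\mathring\K_0$ together with quadratic coercivity eliminates any constraint-qualification issue arising from the unboundedness of $\K_0$, which is straightforward here since $f$ has full effective domain.
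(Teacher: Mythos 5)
Your proof is correct. The paper itself does not prove Proposition~\ref{prop:plasticityConstitutive}; it states it as a classical result and refers to standard convex-analysis references, and your argument (strict convexity and coercivity for existence/uniqueness, Fermat's rule plus the conjugacy $\H_{\K_0}=\chi_{\K_0}^*$ and subdifferential inversion for the normal-cone characterization, and the infimal-convolution/Fenchel--Young identity for $\sig_\veps\in\partial\psi_0(\veps)$) is precisely the standard route those references take. All the individual steps check out, including the verification $\H_{\K_0}(\p_\veps)=\sig_\veps\cdot\p_\veps$ needed to close the Fenchel--Young equality.
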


The result above implies in particular that the material model defined by~\eqref{eq:defPhi0} satisfies the requirement R1 and R2 of Section~\ref{sec:fundamentalRequirements}: the stress tensor $\sig$ must lie in $\K_0$ which can be interpreted as a strength-domain.

The normality rule~\eqref{eq:pNormalCone} implies that there exists an outer normal $\boldsymbol{\nu}_\sig$  to $\partial\K_0$ at a point $\sig\in\partial\K_0$, such that $\p=\lambda\,\boldsymbol{\nu}_\sig$ for some $\lambda\geq 0$. Hence, the inverse stress-strain relationship is in the form
\begin{equation}
  \label{eq:elastoplasticlaw}
\veps=\A^{-1}\sig+\lambda\,\boldsymbol{\nu}_\sig,\quad\text{with}\quad\lambda \ge 0,
\end{equation}

where $\A^{-1}$ denotes the compliance tensor.
\begin{remark}
   Proposition~\ref{prop:plasticityConstitutive} implies that $\psi_0(\veps) = \frac12\A\veps\cdot \veps$ if $\A\veps \in \K_0$, since $\p_\veps = \mathbf{0}$ if $\A\veps \in \K_0$.

  Consider a strain tensor  of the form $\veps_\lambda := \A^{-1} \sig_0 + \lambda \,\boldsymbol{\nu}_0$ where $\sig_0 \in \partial \K_0$, $\boldsymbol{\nu}_0 \in \mathrm{N}_{\K_0}(\sig_0)$ and $\lambda >0$. Optimality in~\eqref{eq:defPsi0} is achieved by $\p_{\veps_\lambda} = \lambda\,\boldsymbol{\nu}_0$, so that $\A(\veps_\lambda - \lambda \,\boldsymbol{\nu}_0) = \sig_0 \in \K_0$ and $\mathrm{H}_{\K_0}({\veps_\lambda}) = \lambda \,\sig_0 \cdot \boldsymbol{\nu}_0$, from which we conclude that 
    \[
      \psi_0(\veps_\lambda) = \A^{-1}\sig_0\cdot \sig_0 + \lambda \,\sig_0 \cdot \boldsymbol{\nu}_0,
    \]  
    \emph{i.e.} that $\psi_0$ grows linearly at infinity away from directions along which $\K_0$ is bounded.
\end{remark}

\begin{remark}[Analogy with elasto-plasticity]
The constitutive model above coincides with that of a \emph{standard} elasto-plastic material without hardening \emph{à la} Hencky, \emph{i.e.}~without irreversibility condition on $\p$~\cite{DuvautLions1976}. 
In all that follows, we refer to $\p$ as the \emph{non-linear deformation}.
This model leads to a nonlinear and non-smooth, yet reversible, elastic behavior in the bulk, with an energy that grows linearly for large strains in the directions where $\K_0$ is bounded. 
This choice simplifies both the presentation and the analysis, but it can be readily modified to incorporate the effects of residual deformations and plastic dissipation.
\end{remark}

\subsubsection{Phase-field strength degradation}
\label{sec:PFstrengthdegradation}
To model cracks with finite fracture toughness, the cohesive traction—and hence the stress (or at least some components of it)—must vanish for sufficiently large strains. Within a phase-field like approach, this may be achieved by introducing a scalar damage variable $\alpha$ that modify the pristine elastic energy~\eqref{eq:defPhi0} through two basic mechanisms:
\begin{itemize}
  \item \emph{Stiffness degradation}, \emph{i.e.}, replacing the linear stiffness tensor $\A$ with a degraded elastic stiffness tensor $\mathsf{A}(\alpha) \leq \A$, which decreases as $\alpha$ increases;
  \item \emph{Strength degradation}, \emph{i.e.}, replacing the strength domain $\K_0$ with a degraded strength domain $\K(\alpha) \subseteq \K_0$, which  shrinks as $\alpha$ increases.
\end{itemize}

Classical phase-field models of fracture typically assume an unbounded strength domain $\K_0\equiv\mathbb{M}_s^3$ and introduce stiffness degradation~\cite{BouFraMar00,TanLiBou18}. In contrast, we adopt here strength degradation without modifying the stiffness tensor, thereby degrading the elastic energy through a damage-like variable that preserves the linear elastic stiffness near the origin but reduces the energy growth rate at infinity, as illustrated in Figure~\ref{fig:strength-degradation}.
\begin{figure}[tp]
  \centering
  \includegraphics[width=0.43\textwidth]{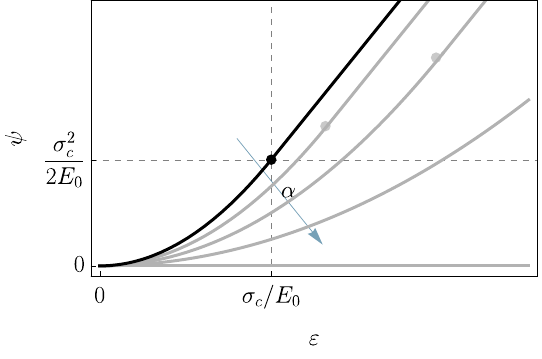}
  \includegraphics[width=0.43\textwidth]{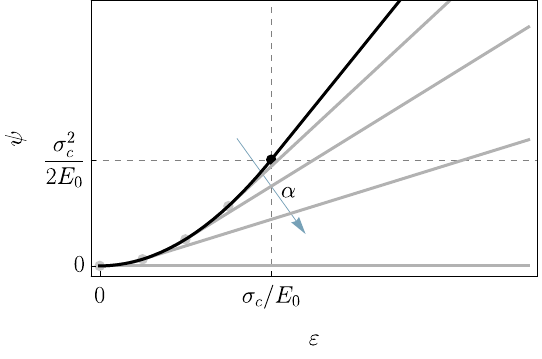}
\caption{Plot of the elastic energy $\psi$ as a function of the strain $\varepsilon$ for uniaxial tension: Stiffness degradation (left) reducing the curvature of the function close to the origin,  vs. strength degradation (right) reducing the slope at infinity. Here, $\varepsilon$ denotes strain, $\sigma_c$ the uniaxial strength, and $E_0$ the initial stiffness. The black lines represent the energy defined by Equation~\eqref{eq:defPhi0} for $\alpha = 0$. Gray lines show the energy for increasing values of the damage parameter $\alpha$. For each curve, the dots mark the transition between the regime with quadratic growth and linear growth. For increasing $\alpha$, these critical strains (and stresses) go to zero for the strength degradation, while they diverge to infinity for the classical stiffness degradation. }

  \label{fig:strength-degradation}
\end{figure}

For simplicity, we assume that the degradation occurs through a {homothetic contraction} of the initial strength domain $\K_0$, governed by a monotonically decreasing \emph{degradation function} $\k: \alpha \mapsto \k(\alpha) \in [0,1]$ such that $\k(0) =1$ and $\k(1) = 0$. Accordingly, we define
\[
\K(\alpha) = \k(\alpha) \K_0.
\]
Since $\H_{\K(\alpha)}(\p)=\k(\alpha)\H_{\K_0}(\p)$, we introduce the degraded elastic energy density
\begin{equation}
  \label{eq:phialpha}
\psi(\veps,\alpha)=\min_{p\in\mathbb{M}^3_s}\varphi(\veps,\p,\alpha),
\quad
\varphi(\veps,\p,\alpha)=\dfrac{\A}{2}(\veps-\p)\cdot(\veps-\p)+\k(\alpha)\H_{\K_0}(\p).
\end{equation}

As material degradation reduces the elastic energy density, we postulate that it is energetically offset by a phase-field approximation of a surface energy, given by
\begin{equation}
  \label{eq:Dell}
D_\ell(\alpha, \nabla\alpha) = \frac{\Gc}{4c_\w} \left( \frac{\w(\alpha)}{\ell} + \ell\, \nabla\alpha \cdot \nabla\alpha \right),
\end{equation}
where $\w$ is a monotonically increasing local dissipation potential satisfying $\w(0) = 0$ and $\w(1) = 1$, $\Gc>0$ is a material constant, $\ell > 0$ is the regularization length scale, and $c_\w:=\int_0^1\sqrt{\w(\alpha)}\mathrm{d}\alpha>0$ is a dimensionless normalization constant. As in classical phase-field models, we will show in subsequent sections that $\Gc$ can be interpreted as the fracture toughness, \emph{i.e.}, the energy dissipated per unit surface in crack-like solutions.

The total  energy density is the sum of~\eqref{eq:phialpha} and~\eqref{eq:Dell}:
\begin{equation}
  \label{eq:Well}
  W_\ell(\veps,\p,\alpha, \nabla\alpha)=\varphi(\veps,\p,\alpha)+D_\ell(\alpha, \nabla\alpha).
\end{equation}
The properties of the model depends on the strength degradation function $\k$ and the dissipation potential $\w$. We ask that they satisfy the following minimal requirements.

\begin{hypothesis}[Constitutive assumption: strength degradation and dissipation potential]
\label{hyp:kw}
We assume that $\K(\alpha) = \k(\alpha) \K_0$ and the constitutive functions $\k: \alpha \in [0,1] \to [0,1]$ and $\w: \alpha \in [0,1] \to [0,1]$ are smooth, with 
\begin{equation}
  \label{eq:kwassumptions}
 \w'(\alpha)> 0,\quad\k'(\alpha)< 0,\quad \k''(\alpha)> 0,
 \quad \w''(\alpha)\ge 0\qquad\forall{\alpha\in(0,1)};
\end{equation}
$\k(\alpha)$ being monotonically decreasing from $\k(0)=1$ to $\k(1)=0$ and $\w(\alpha)$ monotonically increasing from $\w(0)=0$ to $\w(1)=1$.
We further assume that and $c_\w:=\int_0^1\sqrt{\w(\alpha)}\,\mathrm{d}\alpha<\infty$.  
\end{hypothesis}

In all that follows, we use the following two one-parameter families of constitutive laws to illustrate the properties of our model. In both cases $\zeta$ is a scalar model parameter:
\begin{alignat}{6}
\text{Family } \mathsf{M1}: \quad
& \k(\alpha) &=& 1 - \alpha, \qquad
& \w(\alpha) &= (1 - \zeta)\alpha + \zeta \alpha^2,& \quad&& \zeta \in (0, 1]
\label{eq:model-1} \\
\text{Family }\mathsf{M2}: \quad
& \k(\alpha) &=& 1 - \alpha^{\zeta}, \qquad
& \w(\alpha) &= \alpha^2,& \quad &&\zeta \in [1, 2)
\label{eq:model-LS}
\end{alignat}
Both models verify $c_\w<+\infty$ and the remainder properties in Hypothesis~\ref{hyp:kw}. Moreover, the two model families above intersect when $\zeta = 1$ in both~\eqref{eq:model-1} and~\eqref{eq:model-LS}, yielding
\begin{equation}
   \k(\alpha) = 1 - \alpha, \quad 
   \w(\alpha) = \alpha^2, \quad
   c_\w = \frac{1}{2}.
   \label{eq:model-LS-zeta-1}
\end{equation}
\begin{remark} 
  The hypotheses~\eqref{eq:kwassumptions} could be weakened. They are sufficient to ensure that 
\begin{equation}
  \dfrac{\mathrm{d}}{\mathrm{d}\alpha}\dfrac{{\w'(\alpha)}}{\k'(\alpha)}=\frac{\w''(\alpha)\k'(\alpha)-\w'(\alpha)\k''(\alpha)}{\k'(\alpha)^2} < 0,\qquad
  \dfrac{\mathrm{d}}{\mathrm{d}\alpha}\dfrac{\sqrt{\w(\alpha)}}{\k'(\alpha)}
  =
  \frac{\w'(\alpha)\k'(\alpha)-\w(\alpha)\k''(\alpha)/2}{\sqrt{\w(\alpha)}\k'(\alpha)^2}
  < 0,
  \label{eq:slopesfromkw}
\end{equation}
which is in used in Section~\ref{sec:fundamentalProblem}.
These two quantities determine the dependence on $\alpha$ of the slope of the stress-strain response in the homogeneous solution and that of the cohesive law in the crack-like one, respectively, see Section~\ref{sec:model-problem-hom} and Section~\ref{sec:model-problem-loc}.
\end{remark}
\subsection{Static variational formulation}
\label{sec:model-static}
Let us consider a structure occupying the domain $\Omega\subset\R^3$, with an imposed Dirichlet boundary data $\u^D$ on a part of the boundary $\partial_D\Omega$. 
We define the phase-field energy functional
\begin{equation}
  \label{eq:phase-field-energy}
  \mathcal{E}_\ell(\u,\alpha):=\int_\Omega {\psi(\veps(\u),\alpha)+D_\ell(\alpha,\nabla\alpha)}\,\mathrm{d}V
  =\min_{\p\in\mathbb{M}_s^3}\mathcal{E}_\ell(\u,\p,\alpha), 
\end{equation}
with
\begin{align}
  \notag
  \mathcal{E}_\ell(\u,\p,\alpha)&:= \int_\Omega  \varphi(\veps(\u),\p,\alpha) + D_\ell(\alpha,\nabla\alpha)\,\mathrm{d}V\\
  & = \int_\Omega \left(\dfrac{\A}{2}(\veps-\p)\cdot(\veps-\p)+\k(\alpha)\H_{\K_0}(\p)+\frac{\Gc}{4c_\w} \left( \frac{\w(\alpha)}{\ell} + \ell\, \nabla\alpha \cdot \nabla\alpha \right)\right)\,\mathrm{d}V.
  \label{eq:phase-field-energy3}
\end{align}
A formal variational formulation of the static phase-field fracture problem can be written as a minimization problem for the total energy functional, in one of the two equivalent forms:
\begin{equation*}
  \inf_{\u \in \mathcal{C},\, 
  \alpha \in \mathcal{D}} \mathcal{E}_\ell(\u,\alpha) \equiv
  \inf_{\u \in \mathcal{C},\, 
  \alpha \in \mathcal{D},\, \p \in \mathcal{P}} \mathcal{E}_\ell(\u,\p,\alpha),
\end{equation*}
where $\mathcal{C}$, $\mathcal{D}$, and $\mathcal{P}$ denote suitable functional spaces for admissible displacement fields, phase-fields, and nonlinear deformations, respectively, with  $\mathcal{C}$ incorporating the Dirichlet boundary condition on $\u$.

Hints for the appropriate functional setting are provided by the variational theory of perfect plasticity, which corresponds to a special case of the problem above when $\alpha = 0$:
\begin{equation}
  \label{eq:plasticityVariational}
  \inf_{\u \in \mathcal{C}, \,\p \in \mathcal{P}} \int_\Omega \frac{\A}{2}(\veps - \p) \cdot (\veps - \p) + \H_{\K_0}(\p) \, \dV.
\end{equation}
As for variational plasticity models, $\veps$ and $\p$ may admit singular parts which must coincide so that $\veps-\p$ remains square-integrable.
Therefore, we seek solution of~\eqref{eq:plasticityVariational} with
  $\mathcal{C}\subseteq BD(\Omega; \mathbb{R}^3)$ and  $\mathcal{P}\subseteq\mathcal{M}(\Omega; \mathbb{M}^3_{\mathrm{s}})$, where $BD$ is space of functions with bounded deformation and $\mathcal{M}$ is the set of Radon measure~\cite{Suq81,DalDeSMor06}. For the space of admissible phase-fields $\mathcal{D}$, we take  the functions with square-integrable first derivatives, taking values in the interval $[0,1]$: $\mathcal{D} \subseteq H^1(\Omega; [0,1])$.
Furthermore, admissible displacement fields must satisfy prescribed Dirichlet boundary conditions on $\partial_D\Omega$. As in the theory of perfect plasticity, the formulation of such boundary conditions requires particular care due to the potential concentration of the nonlinear strain $\p$ on $\partial_D\Omega$. These technical aspects are not addressed in detail here; we refer the interested reader to~\cite{DalDeSMor06,Francfort2015} for a comprehensive treatment.

\begin{remark}[Cantor-type terms]
The structure of the space $BD$ allows for an additional Cantor-type term in the strain decomposition~\eqref{eq:strain-SR-decomposition}; see~\cite{DalDeSMor06}. In the present work, we neglect this term and proceed with a formal analysis, focusing on specific classes of solutions in which Cantor terms do not appear.
\end{remark}

\subsection{Jump conditions}
\label{sec:jumpconditions}

Crack-like solution includes a displacement field with jump discontinuities along a set $J_\u$ of co-dimension 1 (\emph{i.e.} a union of surfaces in dimension 3 and curves in dimension 2). In the present modelling framework, they are related to a singular concentration of the  strain field on  $J_\u$.
Assuming that $J_\u$ is regular enough that it admits a normal vector at almost every point, integrability of the strain energy density~\eqref{eq:phialpha} and the positive definiteness of $\A$ then mandate that 
\begin{equation}
  \label{eq:psingular}
  \p^S = \veps^S(\u) = (\jump{\u} \odot \n)\, \delta_{J_\u}
\end{equation}
almost everywhere on $J_\u$, $\mathbf{a} \odot \mathbf{b} := \frac{1}{2}(\mathbf{a} \otimes \mathbf{b} + \mathbf{b} \otimes \mathbf{a})$ denoting the symmetrized tensor product of two vectors $\mathbf{a}, \mathbf{b} \in \R^n$, and $\delta_{J_\u}$ the Dirac-delta measure concentrated on $J_\u$.

The geometric compatibility condition~\eqref{eq:psingular} implies  a specific structure of the singular part of the strain tensor. This is formalized in the following definition that will play a pivotal role to disclose the properties of proposed phase-field modelling and the related crack nucleation conditions.

\begin{definition}[Direction of deformation compatible with a displacement jump]
    \label{def:jumpcompatibility} A non-zero  deformation $\ns$ is said to be compatible with a displacement jump if there exists a unit vector $\mathbf n$ and a vector $\mathbf d$  such that 
    \begin{equation}
    \label{eq:ndelta}
    \ns=\mathbf d\odot\mathbf n:=\frac12\left(\mathbf d\otimes\mathbf n+\mathbf n\otimes\mathbf d\right),\quad 
    \text{with}\quad
    \mathbf d\cdot\n\ge0,
    \end{equation}
  where the  condition $\mathbf{d} \cdot \n \ge 0$ encodes the non-interpenetration of matter in the geometrically linear setting.
\end{definition}

Because of Proposition~\eqref{prop:plasticityConstitutive}, the nonlinear deformation $\p$ can be different from zero only for a stress state $\sig\in\partial\K(\alpha)$ and must satisfy the normality condition~\eqref{eq:pNormalCone}, \emph{i.e.} it must be collinear to one of the normals (or the normal if the boundary is regular) to $\partial\K(\alpha)$, see $\eqref{eq:elastoplasticlaw}$. 

Let us consider a unit-norm stress tensor $\s$ defining a direction in the stress space along which $\K_0$ is bounded, \emph{i.e.}, $\exists\,\sigma_c^\s > 0: \, \sigma_c^\s\,\s \in \partial\K_0$. Hence, we refer to $\s$ as a loading direction in the stress space compatible with a displacement jump if there exists a normal $\boldsymbol{\nu}^\s$ to $\partial\K_0$ at $\sigma_c^\s\s$ that is compatible with a displacement jump.
We can therefore conclude that, for a stress state \( \sig \in \partial \K_0 \), a displacement jump is admissible only if there exists a unit normal \( \nus \) to \( \partial \K_0 \) at \( \boldsymbol{\sigma} \), and two vectors \( \mathbf{n} \) and \( \mathbf{d} \), with $\n\cdot\n=1$ and $\mathbf{d}\not = \mathbf{0}$, such that $ \nus $ is in the form~\eqref{eq:ndelta}. Depending on the structure of the admissible normals, the jump may be purely tangential to the jump set $J_\u$—as in the case of a slip line—or may also include an opening component normal to $J_\u$.

We will provide a  detailed characterization of these properties in Section~\ref{sec:strength}.

\begin{remark}[On the analogy with limit analysis]
Solutions with displacement-jump are widely used in the theory of \emph{limit analysis} to derive bounds on the maximum supportable loads in rigid-perfectly plastic materials~\cite{DruPra52,DruPraGre52,Sal83}. The limit analysis problem coincides with the limit of the elasto-plastic problem~\eqref{eq:plasticityVariational} for $\A\to\infty$, which formally consists in replacing the quadratic term in~\eqref{eq:plasticityVariational} with the  constraint $\veps(\u)=\p$. We will further draw on fundamental concepts from this theory in Section~\ref{sec:strength}.
More recently,~\cite{Francfort2015,FRANCFORT2016125} discussed in details the role of  displacement jump discontinuities in von Mises elasto-plasticity.
We do not view the jump discontinuities of such elasto-plastic solutions as cracks because stresses remain unaffected.
In a fracture setting, \emph{cohesive cracks} are characterized by displacement discontinuities and stress vectors that decrease and vanish as the opening increases.
In contrast, stresses vanish along \emph{Griffith cracks}, regardless of the magnitude of the displacement jumps
\end{remark}

\subsection{Quasi-static variational formulation}
\label{sec:evolution-phase-field}

In the rest of this work, we consider solutions to quasi-static evolution problems in which the loading is parameterized by a scalar time-like variable $t$. To fix ideas, let us consider the case of a prescribed displacement $\u^D_t$ on the Dirichlet boundary $\partial_D\Omega$ and denote by $\mathcal{C}_t$ the corresponding affine space of admissible displacements. Let be  $\mathbf U_t$ a particular element of $\mathcal{C}_t$, respecting the condition $\mathbf U_t=\u^D_t$ on $\partial_D\Omega$. Hence, we note $\mathcal{C}^0$ the vector space of admissible variations such that $\v+\mathbf U_t\in\mathcal{C}_t$, $\forall \v\in\mathcal{C}^0$.

When including the presence of external dead loads modelled by the linear functional $\mathcal{W}_{\mathrm{ext}}^t(\mathbf u)$, we denote the energy of the structure at time $t$ in the state $(\mathbf u,\p,\alpha)$ as
\begin{equation}
  \label{eq:Et}
  \quad\mathcal E_\ell^t(\mathbf u,\p,\alpha):=\int_\Omega W_\ell(\veps(\mathbf u),\p,\alpha,\nabla\alpha)\,\dV - \mathcal{W}_{\mathrm{ext}}^t(\mathbf u - \mathbf U_t).
\end{equation} 

We seek, at each time $t \ge 0$, the displacement and phase-fields $(\u_t,\p_t,\alpha_t)\in\mathcal C_t\times\mathcal{P}\times \mathcal{D}$ that satisfy the aforementioned items of the following three governing principles:
\begin{enumerate}[label=(\textbf{\alph*}), ref=\textbf{\alph*}, leftmargin=*, widest=EB]
\begin{subequations}
  \label{eq:QS-evolution}
  \item[(IR)] {\it Irreversibility condition}: 
    \begin{equation*} 
      t \mapsto \alpha_t \text{ is non-decreasing and takes values in } [0,1];
    \end{equation*}
  \item[(ST)] {\it Stability criterion}: For  $(\mathbf{v},\mathbf{q},\beta)\in\mathcal{C}_t\times\mathcal{P}\times \mathcal{D}$; 
  \begin{equation}
    \label{eq:ST}
    \mathcal E_\ell^t(\u_t,\p_t,\alpha_t)\le \mathcal E_\ell^t(\mathbf v,\mathbf{q},\beta), \quad \forall (\mathbf v,\mathbf{q},\beta) \text{ in a neighborhood of } (\u_t,\p_t,\alpha_t)\text{ with }\alpha_t\le\beta\le 1;
  \end{equation}
  \item[(EB)] {\it Energy balance}:
  \begin{equation}
    \label{eq:EB}
     \mathcal E_\ell^t(\bar{\u}_t,\p_t,\alpha_t)=\mathcal E_\ell^0(\u_0,\p_0,\alpha_0)+\int_0^t\frac{\partial\bar{\mathcal E}_\ell^\tau}{\partial \tau}(\u_\tau-\mathbf U_\tau,\alpha_\tau)\,\mathrm{d}\tau,
  \end{equation}
  where, for $\bar{\mathbf v} \in \mathcal C^0$, $\bar{\mathcal E}_\ell^t(\bar{\mathbf v},\mathbf{q},\beta) := \mathcal E_\ell^t(\bar{\mathbf v} + \mathbf U_t,\mathbf{q},\beta)$ and, assuming the time-dependence of the data is regular,
  \end{subequations}
  \begin{equation*}
  \frac{\partial\bar{\mathcal E}_\ell^t}{\partial t}(\bar\u_t,\mathbf{p}_t,\alpha_t) = \int_\Omega \sig_t \cdot \veps(\dot{\mathbf U}_t)\,\dV - \dot{\mathcal{W}}_{\mathrm{ext}}^t(\bar{\u}_t),
  \end{equation*}
  with 
  $$\sig_t ={\A}(\veps(\u_t)-\p_t)=\min_{\p\in\mathbb{M}^3_s}\left(\frac{\A}{2}(\veps(\u_t)-\p)\cdot(\veps(\u_t)-\p)+\k(\alpha_t)\H_{\K_0}(\p)\right).
  $$
\end{enumerate} 

\begin{remark}[On the stability condition]
The notion of proximity introduced in the stability condition must be made precise, which requires the choice of a norm on the space of displacement and phase-fields. We do not specify this choice here. 
\end{remark}

Using standard localization arguments, one can show that the stability principle~\eqref{eq:ST} requires the following first-order necessary stability conditions to be satisfied almost everywhere in  $\Omega$:
\begin{itemize}
  \item \emph{Equilibrium equation}, obtained imposing stability with respect to the displacement $\mathbf{u}$: 
  $$
  \mathbf{div}\sig_t=\mathbf{f}_t,\quad\sig_t=\A(\veps(\mathbf{u}_t)-\p_t),
  $$
  where $\mathbf{f}_t$ are possible bulk forces, supposed null in the rest of this work.
  \item   \emph{Criterion for the evolution of the nonlinear deformation}, obtained imposing stability with respect to the $\p$: 
  $$
  \p_t=\begin{cases}
        \mathbf{0},&\text{if\quad}\sig_t\in\mathring{\K}_0\\
        \lambda_t\,\boldsymbol{\nu}_{\sig_t},&\text{if\quad}\sig_t\in\partial\K_0\\
        \end{cases},\quad\lambda_t\geq 0.
  $$
  \item \emph{Damage criterion}, obtained as stability with respect to the phase-field variable $\alpha$
  \begin{equation}
  \k'(\alpha_t)\H_{\K_0}(\p_t) 
    + 
    \dfrac{\Gc}{4c_\w\ell}{\w'(\alpha_t)}-
    \dfrac{\Gc\ell}{2c_\w}\,\Delta\alpha_t
    \geq 0.
    \label{eq:stab-alpha}
  \end{equation}
\end{itemize}
Furthermore, the energy balance~\eqref{eq:EB} implies the following complementary condition, that states that the damage field can evolve only  where the damage criterion is satisfied as an equality
\begin{equation}
    \left(
      \k'(\alpha_t)\H_{\K_0}(\p_t) 
    + 
    \dfrac{\Gc}{4c_\w\ell}{\w'(\alpha_t)}-
    \dfrac{\Gc\ell}{2c_\w}\,\Delta\alpha_t
 \right) \,\dot\alpha_t =0.
  \label{eq:eb-alpha}
\end{equation}

In the following Section~\ref{sec:fundamentalProblem}, we will discuss and solve for a specific multiaxial model problem the set of evolution conditions  above together with specific boundary conditions. 

\begin{remark}[Two-field formulation]
As an alternative, it can be useful to define the energy and the evolution in terms of the two fields $(\u,\alpha)$, by  defining $\mathcal{E}_\ell^t(\u,\alpha)$ after the elimination of $\p_t$ from the formulation above.
\end{remark}
\begin{remark}[On external forces and global stability]
When the stability condition (ST) is expressed as a global minimization of the energy functional $\mathcal{E}_\ell^t$, external forces cannot be included, as they may drive the energy to $-\infty$, preventing the existence of a minimizer.
Additionally, global stability can be physically unrealistic, as it neglects energy barriers and metastable states that naturally arise in rate-independent or irreversible processes. 
\end{remark}

\section{Multiaxial model problem}
\label{sec:fundamentalProblem}

In order to establish the key properties of the phase-field model proposed in Section~\ref{sec:Model}, we consider two representative classes of solutions under a constant stress: \emph{homogeneous} solutions with uniform strain and damage, and \emph{localized} solutions involving a displacement discontinuity and damage localization. To construct them, we formulate an ad-hoc boundary value problem on a cube that models the evolution of internal fields far from external boundaries, imposing the direction of the loading in the stress space and using the average deformation of the cube as loading parameter. 
This formulation enables the extension of techniques for the construction of explicit analytical solutions, developed previously for one-dimensional problems~\cite{Pham-Marigo-EtAl-2011a,AleMarVid14}, to the multi-axial setting.

This multiaxial model problem is introduced in Section~\ref{sec:model-problem-formulation}. 
Sections~\ref{sec:model-problem-hom} through~\ref{sec:model-problem-loc} present the homogeneous and crack-like solutions to the associated evolution problem. As discussed in Section~\ref{sec:jumpconditions}, the nucleation of a crack is linked to the admissibility of displacement discontinuities, which depends on the stress orientation and the properties of the strength surface. 
In Section~\ref{sec:model-problem-stab}, we discuss the stability of the solutions.  The analysis of the energy evolution in the localized regime will identify an equivalent cohesive law that characterizes the crack behavior. 

\subsection{Problem formulation} 
\label{sec:model-problem-formulation}

\begin{figure}[t] 
   \begin{center}
    \includegraphics[width=2.5in]{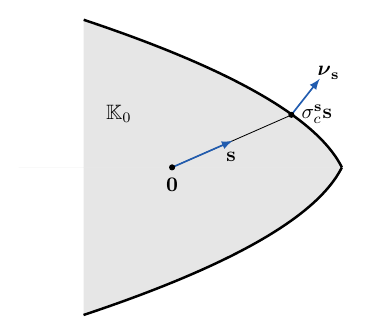} \includegraphics[width=3.in]{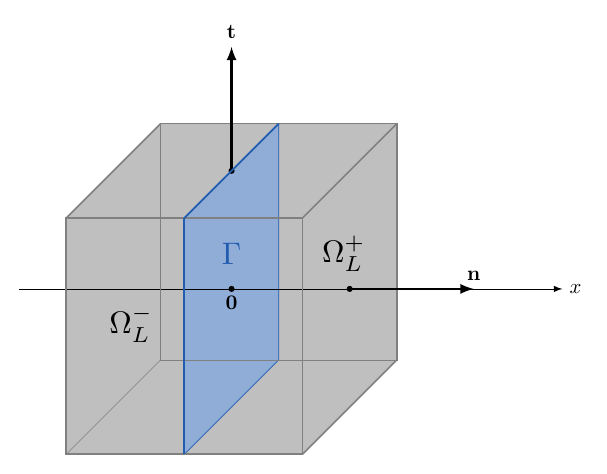} 
   \caption{   \label{ref:F-PBCube}
The multiaxial model problem. Left: A strength domain $\K_0$ in the stress space showing the loading direction $\s$, intersecting the boundary $\partial\K_0$ at the point $\boldsymbol{\sigma}_c^\s = \sigma_c^\s \s$ with normal $\boldsymbol{\nu}^\s$. Right: the domain $\Omega_L$ of the model problem; if the normal $\boldsymbol{\nu}^\s$ is compatible with a displacement jump, one face of the cube is oriented along a direction $\n$ such that $\boldsymbol{\nu}^\s = \mathbf{d} \odot \n$. In this case, we construct an explicit solution with a displacement jump $\mathbf{d}$ across the surface $\Gamma$, where the phase-field reaches its maximum and the nonlinear deformations concentrate as a measure; the corresponding bulk deformations and stress are uniform on $\Omega\setminus\Gamma$.
   }
   \end{center}
\end{figure}
We consider a cube $\Omega_L$ with side length $L$, centered at the origin.
Let $\mathbf{n}$ be a unit vector orthogonal to one of its faces and $x$ be the coordinate in the direction of $\mathbf{n}$ (see Figure~\ref{ref:F-PBCube}-right).
Given a displacement field $\u$ and $\s\in \mathbb{M}^3_s$ such that  $\s\cdot\s=1$,  we denote by
\begin{equation}
   \label{eq:eps_s}
   \bar\varepsilon^\s(\u):=\frac{1}{L^3} \int_{\Omega_L} \s \cdot \veps(\u) \, \mathrm{d}V 
\end{equation}
the average deformation of the cube in the direction $\s$.  

The following problem plays a central role in establishing the fundamental properties of the model and in characterizing its asymptotic behavior as $\ell \to 0$.

\begin{problem}{(Multiaxial model problem)}
   \label{pb:fundamental-problem} 
   Given 
 the model parameters $( \A,\K_0,\k,\w,\Gc,\ell)$,
 a cube $\Omega_L$ of size $L$ and orientation $\n$,
 a loading direction $\s\in \mathbb{M}^3_s$ with $\s\cdot\s=1$,
 an undamaged initial state $\mathbf{z}_0=(\u_0=\mathbf{0},\p_0=\mathbf{0},\alpha_0=0)$,
 determine a solution $\mathbf{z}_t=(\u_t,\p_t,\alpha_t)$ of the quasi-static evolution problem ~\eqref{eq:QS-evolution} for the energy functional $\mathcal{E}_\ell^t(\mathbf{v}, \mathbf{q}, \beta) $ with null body forces
under the constraint of imposed \emph{average deformation $t$ in the direction $\s$}:
\begin{equation}
   \label{eq:loading-condition}
\u_t \in\mathcal{C}_{t} = \left\{ \mathbf{v}:\bar\varepsilon^\s (\mathbf{v})=t  \right\}.
\end{equation}
\end{problem}

The first-order stability condition implies equilibrium, and we recover $\textbf{div}\, \sig_t = \mathbf{0}$ in $\Omega_L$. Hence, the integral constraint~\eqref{eq:loading-condition} implies that
\[
\int_{\partial \Omega_L} \sig_t \bar{\mathbf{n}} \cdot \mathbf{v} \, \mathrm{d}S = 0 \quad \forall \mathbf{v} \in \mathcal{C}_0, \text{ \emph{i.e.}, such that } \int_{\partial \Omega_L} \s \, \bar{\mathbf{n}} \cdot \mathbf{v} \, \mathrm{d}S = 0,
\]
where $\bar{\mathbf{n}}$ denotes the outward unit normal to the faces of the cube.
It follows that there exists a scalar $\sigma_t \in \mathbb{R}$ (which can be interpreted as the Lagrange multiplier associated with the constraint~\eqref{eq:eps_s}) such that $\sig_t \bar{\mathbf{n}} = \sigma_t \s \, \bar{\mathbf{n}}$ on $\partial \Omega_L$. 
The boundary conditions are  compatible with a uniform stress field in $\Omega_L$ collinear with $\s$. 
The loading condition~\eqref{eq:loading-condition} fixes the loading direction $\s$ in the stress space, but control its amplitude by the average strain $t$.\\

\noindent We will establish analytical solutions to Problem~\ref{pb:fundamental-problem} with the following procedure:
\begin{enumerate}
   \item We select a stress direction $\s$ along which $\K_0$ is bounded. We denote by $\sig_0$ the intersection of $\partial \K_0$ with the line through $\mathbf{0}$ in the direction $\s$.
   Let $\boldsymbol{\nu}^\s$ a unit normal to $\partial \K_0$ at $\sig_0$ and $\sics = \|\sig_0\|$ (see Figure~\ref{ref:F-PBCube}-left):
   \begin{equation*}
      \sig_0 = \sics \s \in \partial \K_0.
   \end{equation*}
   \item We then identify two classes of solutions that satisfy the first-order stability condition, the irreversibility condition, and the energy balance for the Problem~\ref{pb:fundamental-problem}:
   \begin{itemize}
      \item the \emph{homogeneous response}, \emph{i.e.}~a solution with spatially constant strain and damage; under our assumptions on the model parameters, this solution is unique and independent of the cube orientation $\n$ and size $L$.
      
      \item solutions with \emph{displacement jumps}, which exist only if the loading direction $\s$ is compatible with a displacement discontinuity, as defined in Section~\ref{sec:jumpconditions}, \emph{i.e.}, if there exist two vectors $\n$ and $\mathbf{d}$ such that the normal $\nus$ to the yield surface $\partial \K_0$ at $\boldsymbol{\sigma}_0 = \sics \s$ is in the form $\nus = \mathbf{d} \odot \mathbf{n}$. We construct an explicit analytical solution with displacement jumps by aligning the cube with one of these two vectors, say $\n$, and choosing a cube of sufficiently small size $L$. 
      The solution will be valid for a regularization length $\ell$ that is sufficiently small compared to $L$.
   \end{itemize}

   \item Finally, we analyze the stability of both the homogeneous and the localized solutions, verifying whether they satisfy the full minimality condition~\eqref{eq:ST}, and not merely its first-order version.
\end{enumerate}

Here and henceforth, we denote by
\begin{equation}
   A_0^\s:=\mathsf{A}_0\s \cdot\s,\qquad  \nu^\s=\boldsymbol{\nu}^\s \cdot\s, 
\label{eq:A0sC0s}
\end{equation}
the projections along the direction $\s$ of the stiffness and the normal to strength domain. 
We introduce the \emph{elasto-cohesive length} in the direction $\s$:
\begin{equation*}
\ell_\mathrm{ch}^\s:=\frac{\Gc A_0^\s}{(\sigma_c^\s)^2}.
\end{equation*}

\subsection{The homogeneous response}
\label{sec:model-problem-hom}

We look for solutions with stress, strain, and phase-field  uniform in space and that evolve with the loading $t$.  
The stress tensor is collinear with $\s$ and lies within the convex set $\k(\alpha_t)\K_0$:
$$
\sig_t(\mathbf{x}) = \sigma_t \s \quad \text{with} \quad \sigma_t \le \k(\alpha_t) \sics,
$$
where $\sigma_t$ is the unknown stress level.

The strain is obtained from the constitutive law and the normality condition, as in~\eqref{eq:elastoplasticlaw}:
$$
\veps_t(\mathbf{x}) = \sigma_t \A^{-1} \s + \lambda_t\, \boldsymbol{\nu}^\s \quad \text{with} \quad
\begin{cases}
\lambda_t  = 0 & \text{if } \sigma_t < \k(\alpha_t) \sics, \\
\lambda_t  \ge 0 & \text{if } \sigma_t = \k(\alpha_t) \sics,
\end{cases}
$$
where $\boldsymbol{\nu}^\s$ is a normal to $\partial \K_0$ at $\sics\s$ and $\lambda_t$ is the scalar plastic multiplier, giving the direction and the amplitude of the nonlinear contribution to the strain, respectively.
The loading condition~\eqref{eq:loading-condition-jump} simplifies to:
$$
 t=\s \cdot \veps_t =\sigma_t/A_0^\s + \lambda_t\,\nu^\s.
$$

The evolution of the phase-field is governed by the irreversibility condition, the first-order stability criterion, and the consistency condition, as in classical approaches~\cite[see \emph{e.g.}][]{MarMauPha16,AleMarVid15}. As long as $\alpha_t < 1$, these read as:
\begin{equation}
\dot\alpha_t \ge 0, \quad 
\k'(\alpha_t) \lambda_t \sics {\nu}^\s + \frac{\Gc}{4c_\w\ell} \w'(\alpha_t) \ge 0, \quad
\left( \k'(\alpha_t) \lambda_t \sics {\nu}^\s + \frac{\Gc}{4c_\w\ell} \w'(\alpha_t) \right) \dot\alpha_t = 0.
\label{eq:damage-criterion-H}
\end{equation}

Our goal is to determine the evolution of $\sigma_t$, $\lambda_t$, and $\alpha_t$ for $t \ge 0$, given that $\alpha_0 = 0$.
The response proceeds through three distinct phases:
(i) a purely elastic phase, where the stress remains inside $\K_0$ and no damage occurs;  
(ii) a nonlinear elastic phase, where the stress reaches the yield surface and nonlinear strains develop without damage;  
(iii) a damaging phase, where the phase-field evolves until complete failure.  

Assuming that ${\w'(\alpha)}/{\vert\k'(\alpha)\vert}$ is a monotonically increasing function of $\alpha$, the response can be described as follows:

\begin{enumerate}
\item \textit{Linear elastic response.}  
For small loading the response is linear and elastic, with vanishing  nonlinear deformation and damage: 
$$
t\in(0,{\varepsilon}_e^\s]:\qquad
\sigma_t =  A_0^\s\, {t},
\qquad  
\lambda_t = 0,\qquad\alpha_t = 0, \quad\text{with}\quad {\varepsilon}_e^\s=\sics/A_0^\s,
$$
where the \emph{linear elastic limit}  for the deformation in the direction $\s$, ${\varepsilon}_e^\s$, is such that $\sigma_t<\sics$.
\item \textit{Nonlinear elastic response without damage.}  
The stress state reaches the yield surface at the stress level $\sics\,\s$, and the imposed average deformation $t$ is accommodated at constant stress by the nonlinear strain of amplitude $\lambda_t$ along the direction $\boldsymbol{\nu}^\s$.  
In this damage-free regime, the nonlinear strain accumulates to satisfy the condition
$
t = \varepsilon_e^\s + \lambda_t\,{\nu}^\s.
$
Thus, the evolution is given explicitly by:
$$
t \in (\varepsilon_e^\s,\varepsilon_c^\s]:\qquad \sigma_t = \sics, \qquad
\lambda_t = \frac{t - {\varepsilon}_e^\s}{\nu^\s}, \qquad \alpha_t = 0.
$$
The upper bound for this regime, denoted as the \emph{critical deformation} ${\varepsilon}_c^\s$, is determined by the damage criterion:
$$
\k'(0)  \sics \,\lambda_t{\nu}^\s + \frac{\Gc}{4c_\w\ell} \w'(0) = 0,
$$
giving:
\begin{equation}
   \label{eq:epsc}
   {\varepsilon}_c^\s = {\varepsilon}_e^\s + \frac{\Gc}{4c_\w\ell\sics }\frac{\w'(0)}{|\k'(0)|},
\end{equation}
with ${\varepsilon}_c^\s\to\infty$ for $\ell \to 0$.

\item \textit{Response with damage.}  
When the loading reaches the critical deformation ${\varepsilon}_c^\s$, the damage  must evolve to satisfy the conditions~\eqref{eq:damage-criterion-H}. In this regime the solution is given by 
\begin{equation}
   \label{eq:talphahom}
t \in (\varepsilon_c^\s,\varepsilon_u^\s):\qquad \sigma_t = \k(\alpha_t)\sics, \qquad
\lambda_t = \frac{t - {\varepsilon}_e^\s}{\nu^\s}, \qquad t=\k(\alpha_t){\varepsilon}_e^\s+\frac{\Gc}{4c_\w\ell\sics } \frac{\w'(\alpha_t)}{\vert\k'(\alpha_t)\vert}.
\end{equation}
The latter equation represents the damage criterion and can be rewritten in the form:
$$
t=\frac{\Gc}{\ell\sics }
\left(
   \k(\alpha_t) \frac{\ell}{\ell_{\mathrm{ch}}^\s}
   +
   \frac{1}{4c_\w} \frac{\w'(\alpha_t)}{\vert\k'(\alpha_t)\vert}
   \right).
$$
Its derivative with respect to $t$ gives
$$
\dot\alpha_t=
   \frac{\ell\sics}{ \Gc}
   \frac{1}
   {\left(\frac{1}{4c_\w }
 \frac{\mathrm{d}}{\mathrm{d}\alpha}\left.\left(\frac{\w'(\alpha)}{\vert\k'(\alpha_t)\vert}\right)\right|_{\alpha = \alpha_t}
- \vert\k'(\alpha_t)\vert\frac{\ell}{\ell_{\mathrm{ch}}^\s}\right)}
   .
$$
To obtain a smooth homogeneous evolution respecting the irreversibility condition $\dot\alpha_t\geq 0$, the following condition must be verified:
\begin{equation}
\dfrac{\ell}{\ell_{\mathrm{ch}}^\s}
\leq
\inf_{\alpha\in(0,1)}
\left(
\dfrac{1}{4c_\w\vert \k'(\alpha)\vert}
 \dfrac{\mathrm{d}}{\mathrm{d}\alpha}
\left(\dfrac{\w'(\alpha)}{\vert\k'(\alpha)\vert}\right)\right),
\label{eq:homogeneous-snapback-condition}
\end{equation}
where, $\forall \alpha\in(0,1)$, $\k'(\alpha)<0$ and $\frac{\mathrm{d}}{\mathrm{d}\alpha}
\left(\frac{\w'(\alpha)}{\vert\k'(\alpha)\vert}\right)>0$ thanks to the Hypothesis~\ref{hyp:kw},~Equation~\eqref{eq:slopesfromkw}. If the condition above is not verified, the solution can exhibit a snap-back in the homogeneous response.

This solution is valid until the \emph{ultimate deformation} $t=\varepsilon_u^\s$ such that $\alpha_t=1$, with
\begin{equation}
   \label{eq:epsu}
\varepsilon_u^\s=
\begin{cases}
   \frac{\Gc}{\sics \ell}\frac{1}{4c_\w}\frac{\w'(1)}{\vert\k'(1)\vert} &\text{if}\quad \vert \k'(1)\vert >0\\
   +\infty &\text{if}\quad \vert \k'(1)\vert =0.
\end{cases}
\end{equation}
\item \emph{Fully damaged state.} For $t>\varepsilon_u^\s$, the solution remains with $\alpha_t=1$ and $\sigma_t=0$ for any $t$. The nonlinear deformation accommodates the imposed strain.
\end{enumerate}

We detail below and in Figure~\ref{fig:modelproblem-solution} (gray lines) the homogeneous response for the model \texttt{M1} presented in~\eqref{eq:model-1}.
\begin{figure}[t]
   \centering
   \includegraphics[height=0.35\textwidth]{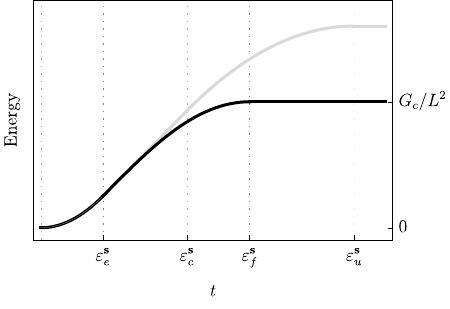}
   \includegraphics[height=0.35\textwidth]{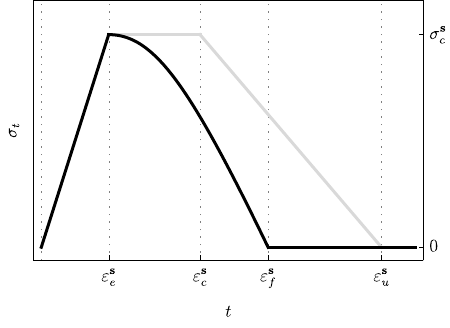}
   \caption{Solution of the multiaxial model problem: energy (left) and stress (right) versus the applied average strain $t$. Black: localized solution; Gray: homogeneous solution. 
   The numerical results are for $\k(\alpha)=1-\alpha$ and $\w(\alpha)=(\alpha+\alpha^2)/2$, corresponding to the model $\mathsf{M1}$ in~\eqref{eq:model-1} with $\zeta=1/2$, with the critical strains given in equations~\eqref{eq:model1-critical-strains}.  
   For $\zeta=1$ the phase with constant stress and linear energy in the homogeneous degenerate, with $\varepsilon_c^\s = \varepsilon_e^\s = \sigma_c^\s / A_0^\s$.}
   \label{fig:modelproblem-solution}
 \end{figure}
  In this case, the damage criterion in~\eqref{eq:talphahom}  writes as:
   $$
   t = (1 - \alpha_t)\varepsilon_e^\s + \frac{\Gc}{4c_\w\ell\sics }(1 - \zeta + 2\zeta \alpha_t).
   $$
   The corresponding damage and stress evolutions are piecewise linear:
   $$
   \alpha_t=
   \begin{cases}
      0&\text{for}\quad t\in(0,\varepsilon_c^\s]\\
      \dfrac{t -\varepsilon_c^\s}{\varepsilon_u^\s- \varepsilon_c^\s}&\text{for}\quad t\in(\varepsilon_c^\s,\varepsilon_u^\s].\\
      1&\text{for}\quad t>\varepsilon_u^\s
   \end{cases}
   \qquad
   \sigma_t=
   \begin{cases}
      A_0^s{t}&\text{for}\quad t\in(0,\varepsilon_e^\s ]\\
      \sics&\text{for}\quad t\in(\varepsilon_e^\s ,\varepsilon_c^\s]\\
      \sics\,\dfrac{\varepsilon_u^\s -t  }{\varepsilon_u^\s- \varepsilon_c^\s}\,&\text{for}\quad t\in(\varepsilon_c^\s,\varepsilon_u^\s].\\
      0&\text{for}\quad t>\varepsilon_u^\s
   \end{cases}
   $$
   where 
   \begin{equation}
   \varepsilon_c^\s = {\varepsilon}_e^\s + (1-\zeta)\frac{\Gc}{4c_\w\ell\sics },
   \qquad
   \varepsilon_u^\s=  (1 + \zeta)\frac{\Gc}{4c_\w\ell\sics }.
   \label{eq:model1-critical-strains}
   \end{equation}
   Figure~\ref{fig:modelproblem-solution} plots the total energy and stress-strain response for the case $\zeta=1/2$ (gray lines). 
   The evolution is monotonic in $t$ provided that $\varepsilon_u^\s > \varepsilon_c^\s$, which for $\zeta > 0$ holds if $\ell$ is sufficiently small.  
   If we take $\zeta = 0$, then $\varepsilon_u^\s < \varepsilon_c^\s$ for all $\ell$, and a snap-back necessarily occurs.  
   For $\zeta = 1$, $\varepsilon_c^\s = \varepsilon_e^\s$, and damage initiates immediately after the elastic phase. Snap-back does not occur if 
   $
   {\ell}<{\ell_{\mathrm{ch}}^\s}
   $. 

   For the model $\mathsf{M2}$ given in~\eqref{eq:model-LS} the damage criterion writes as   
   $$
t=\frac{\Gc}{\ell\sics }
\left(
   (1-\alpha_t^\zeta) \frac{\ell}{\ell_{\mathrm{ch}}^\s}
   +
    \frac{\alpha^{2-\zeta}}{\zeta}
   \right).
$$
The homogeneous response is without snapback provided that 
$$
\frac{\ell}{\ell_{\mathrm{ch}}^\s}< \frac{2-\zeta}{\zeta^2}.
$$
For $\zeta=1$, the solution is the same as for the model $\mathsf{M1}$; for $\zeta>1$ the solution for the damage field is not explicit and it is not reported here. 

\subsection{A crack-like solution when $\nus$ is compatible with displacement jumps}
\label{sec:model-problem-loc}
When the loading direction $\s$ is such that there exists a normal $\nus$ to the strength surface $\partial\K_0$ at the stress state $\sig= \sics\s$  compatible with a displacement jump, 
there exists two vectors $\n$ and $\mathbf{d}_0$ such that $\nus=(\n\odot\mathbf{d}_0)/\Vert\n\odot\mathbf{d}_0\Vert$, see~\eqref{eq:ndelta}. The vector $\n$ will represent the normal to the crack set and $\mathbf{d}_0$ the direction of the displacement jump. Here, we assume here that both vectors have unit norm.

We determine below a crack-like solution  to the multiaxial model problem of Problem~\ref{pb:fundamental-problem} featuring displacement discontinuities and localized damage for load levels $t$ exceeding the elastic threshold, $t \ge \varepsilon_e^\s$.
To this end, we consider a cube $\Omega_L$ having a face with a normal $\n$ and start from a reduced set of \emph{ansatz}:
      \begin{enumerate}
          \item The stress field is homogeneous: $\sig_t(\mathbf x) = \sigma_t \s$, for all $t \ge \varepsilon_e^\s$;
          \item The phase-field vanishes at $t = \varepsilon_e^\s$, and is no more $0$ thereafter: $\alpha_{t} = 0$ for $t= \varepsilon_e^\s$, and $\alpha_t \not\equiv 0$ for all $t > \varepsilon_e^\s$;
          \item The phase-field attains its maximum on the cross-section $\Gamma$:
          $$
          0 \le \alpha_t(\mathbf x) < \bar\alpha_t = \max_{\mathbf y \in \Gamma} \alpha_t(\mathbf y), 
          \quad \forall t > \varepsilon_e^\s, 
          \quad \forall \mathbf x \not\in \Gamma:= \{\x\in\Omega_L: \x \cdot \n = 0 \};
          $$
          \item The evolution is smooth in time.
      \end{enumerate}

We will then show that for models respecting Hypothesis~\ref{hyp:kw} with
\begin{equation}
   \label{eq:kw-conditions-1}
   c^*_{\w}:=\int_0^1 \frac{1}{\sqrt{\w(\beta)}}\mathrm{d}\beta<+\infty,\quad
\end{equation}
and for $\ell/L$ and $L/\ell_\mathrm{ch}^\s$ small enough:
        
\begin{itemize}
\item The displacement experiences a uniform jump $\mathbf{d}_t=d_t\mathbf{d}_0$ on the planar surface $\Gamma$, where the damage is constant and equal to $\bar\alpha_t$ with
\begin{equation} \label{eq:delta-alpha-loc1}
   t = \frac{\Gc}{\sics L} \left(\k(\bar{\alpha}_t) \frac{L}{\ell_{\mathrm{ch}}^\s} + \frac{\sqrt{\w(\bar{\alpha}_t)}}{c_\w|\k'(\bar{\alpha}_t)|}\right),
   \qquad
   {d_t}=\frac{1}{\s\n\cdot\mathbf{d}_0}\frac{\Gc}{\sics }\frac{\sqrt{\w(\bar\alpha_t)}}{c_\w\vert\k'(\bar\alpha_t)\vert}.
\end{equation}
\item The deformations are uniform on $\Omega_L\setminus\Gamma$ and equal to
$$
\veps_t=\sigma_c\,\A^{-1}\s.
$$
\item There exists a smooth evolution with maximal damage $\bar\alpha_t$ monotonically increasing from $0$ to $1$ for $t\in(\varepsilon_e^\s,\varepsilon_f^\s)$
with 
\begin{equation}
   \varepsilon_f^\s=
   \begin{cases}
      \frac{\Gc}{\sics L }\frac{1}{c_\w|\k'(1)|}&\text{if}\quad|\k'(1)|>0, \\
     +\infty&\text{if}\quad|\k'(1)|=0,
   \end{cases}
   \label{eq:epsf}
\end{equation}
provided that  the cube size $L$ is sufficiently small compared to the elasto-cohesive length $\ell_{\mathrm{ch}}^\s:={\Gc A_0^\s}/{(\sigma_c^\s)^2}$, namely:
       \begin{equation}
          \frac{L}{\ell_{\mathrm{ch}}^\s}\leq
          \rho_c:=\inf_{\alpha\in(0,1)}\frac{1}{c_\w\vert\k'(\alpha)\vert}\frac{\mathrm{d}}{\mathrm{d}\alpha}\left(\frac{\sqrt{\w(\alpha)}}{{\vert\k'(\alpha)\vert}} \right).
          \label{eq:loc-snapback-condition1}
       \end{equation}
where $\rho_c$ is non-negative because of the Hypothesis~\ref{hyp:kw}.
\end{itemize} 
If $c^*_{\w}=+\infty$, equations~\eqref{eq:delta-alpha-loc1}-\eqref{eq:loc-snapback-condition1} will be valid only asymptotically for $\ell/L\to 0$. Remark~\ref{rem:AT2-loc-sol} will discuss this latter case  with a specific example. \\

\noindent To construct the solution we proceed with the following steps:
\begin{itemize}
   \item \emph{Strain and stresses on $\Omega_L\setminus\Gamma$.}
Consider a time $t>\varepsilon_e^\s=\sics/A_0^\s$ such that $\bar\alpha_t<1$. Due to the homogeneity of stresses, the monotonicity of the function $\k$, and the maximality of the phase-field at $x=0$, we have
$$\sigma_t\le \k(\bar\alpha_t)\sics<\k(\alpha_t(\mathbf x))\sics\le\sics,\quad \forall \mathbf x\not\in\Gamma.$$
Consequently, nonlinear strain vanishes, and total deformation remains uniform everywhere except on $\Gamma$:
\begin{equation*}
   \p_t(\mathbf x)=\mathbf 0,\quad \veps_t(\mathbf x)=\sigma_t\A^{-1}\s, \quad\forall\mathbf x\not\in\Gamma.
\end{equation*}

\item \emph{Displacement jump on $\Gamma$.}
Let us decompose the loading condition~\eqref{eq:loading-condition} in the contributions coming from $\Omega_L\setminus\Gamma$ where the strain are constants, and $\Gamma$, where a displacement jump $\jump{\u_t}$ is possible:
\begin{equation}
t=\frac{1}{L^3}\int_{\Omega_L\setminus\Gamma}\s\cdot\veps_t\,\mathrm{d}V
+
\frac{1}{L^3}\int_\Gamma \s\n\cdot\jump{\u_t}\,\mathrm{d}S
=\frac{\sigma_t}{A_0^\s}
+
\frac{1}{L^3}\int_\Gamma \s\n\cdot\jump{\u_t}\,\mathrm{d}S
\leq
\k(\bar\alpha_t)\frac{\sigma_c^\s}{A_0^\s}+\frac{1}{L^3}\int_\Gamma \s\n\cdot\jump{\u_t}\,\mathrm{d}S,
\label{eq:loading-condition-jump}
\end{equation}
where we used that $\sigma_t \leq \k(\bar\alpha_t)\sics$. 
Hence, for $t>\varepsilon_e^\s=\sics/A_0^\s$, the jump term on $\Gamma$ must be strictly positive and monotonically increasing with $t$, because $\k(\bar\alpha_t)$ cannot increase with $t$. 
The evolution of the jump, implies also the yielding condition to be satisfied as an equality: $\sigma_t=\k(\bar\alpha_t)\sics$.
Since strains are identical on both sides of $\Gamma$, the displacement jump must be a rigid-body motion:
\begin{equation}
   \label{sautu}
   \jump{\u_t}(\mathbf x)=\boldsymbol{d}_t+\boldsymbol\omega_t\wedge\mathbf x,\quad\forall\mathbf x\in\Gamma.
\end{equation}
Because $\veps_t$ has a singular part concentrated on $\Gamma$ and $\veps_t-\p_t$ must belong to $L^2(\Omega_L)$, the field $\p_t$ must also be a measure (equal to $\veps_t$) on $\Gamma$. Furthermore, as $\p_t$ must be collinear with $\boldsymbol{\nu^\s}=\mathbf d_0\odot\mathbf n/\Vert\mathbf d_0\odot\mathbf n\Vert$, the displacement jump must be collinear with $\mathbf d_0$.
In particular, the translation $\boldsymbol{d}_t$ must be collinear with $\mathbf d_0$:
\begin{equation}\label{sautud}
   \boldsymbol{d}_t={d_t}\mathbf{d}_0,\quad {d_t}\ge0.\end{equation}
Hence,  the loading condition~\eqref{eq:loading-condition-jump} gives
\begin{equation}
   \label{loaddt}
   t=\k(\bar\alpha_t)\,\frac{\sics}{A_0^\s}+\frac{{d_t}}{L}\s\mathbf n\cdot\mathbf{d}_0,
\end{equation}
where rotations does not contribute to the average strain.
\item \emph{Evolution of the phase-field on $\Omega_L\setminus\Gamma$.} 
Since the displacement jump is non-zero and monotonically increasing with $t$ almost everywhere on $\Gamma$, the stresses must satisfy the yielding condition:
$$\sigma_t=\k(\alpha_t(\mathbf x))\sics,\quad \forall \mathbf x\in\Gamma.$$
Therefore, the phase-field attains its maximal value uniformly on $\Gamma$: $\alpha_t(\mathbf x)=\bar\alpha_t$, $\forall\mathbf x\in\Gamma$.
In $\Omega_L\setminus\Gamma$, since $\mathbf p_t=\mathbf 0$, the first-order stability condition~\eqref{eq:stab-alpha} and the energy balance~\eqref{eq:eb-alpha} reduce to:
$$
\w'(\alpha_t)-2\ell^2\Delta\alpha_t\ge0,
\quad 
\big(\w'(\alpha_t)-2\ell^2\Delta\alpha_t\big)\dot\alpha_t=0, 
$$
with the boundary conditions below, where ${\partial{(\cdot)}}/{\partial n}$ denotes the normal derivative to the boundary:
$$
\frac{\partial{\alpha_t}}{\partial n}\geq 0,\quad \dot\alpha_t\frac{\partial{\alpha_t}}{\partial n}= 0,\quad \text{on}\quad \partial\Omega_L.
$$
Given that $\alpha_t(\mathbf x)=\bar\alpha_t$ is uniform on $\Gamma$, we look for a phase-field depending only on the coordinate $x$ along $\mathbf n$ and symmetric with respect to $\Gamma$, leading to:
$$
\w'(\alpha_t(x))-2\ell^2\frac{\mathrm{d}^2\alpha_t}{\mathrm{d}x^2}(x)\ge 0,\quad \forall x>0\quad\text{with}
\quad 
\alpha_t(0)=\bar\alpha_t.
$$
The solution of this problem is classical in phase-field fracture models, see \emph{e.g.}~\cite{Pham-Marigo-EtAl-2011a,AleMarVid14}.
We seek a solution such that the phase-field is positive, not increasing with the distance from $x=0$, and increasing in time in the region $x\in(0,\bar \ell_t/2)$, where the inequality above must be verified as an equality. 
The solution must respect the first integral:
\begin{equation}
   \label{eq:first-integral}
\ell^2\left(\frac{\mathrm{d}\alpha_t(x)}{\mathrm{d}x}\right)^2 = \w(\alpha_t(x))-\w(\alpha_t^L),
\end{equation}
where $\alpha_t^{L}=\alpha_t(L/2)$. The value of the first integral is given by evaluating its integrand at $x=L/2$, where $\alpha_t'(L/2)=0$, because the solution must be not increasing with the distance from the origin and must respect the boundary condition $\alpha_t'(L/2)\geq 0$.
 Consequently,  the phase-field is given by
\begin{equation*}
|x|=\ell\int_{\alpha_t(x)}^{\bar\alpha_t}\frac{\mathrm{d}\beta}{\sqrt{\w(\beta)-\w(\alpha_t^{L})}}\quad\text{for}\quad |x|\le \bar{\ell}_t/2=\ell\int_{\alpha_t^{L}}^{\bar\alpha_t}\frac{\mathrm{d}\beta}{\sqrt{\w(\beta)-\w(\alpha_t^{L})}},
\end{equation*}
and $\alpha_t(x)=0$ {for} $|x|\ge\bar{\ell}_t/2$. To ensure irreversibility, it suffices that $\bar\alpha_t$ is non-decreasing, since this guarantees that $\bar{\ell}_t$ and $\alpha_t(x)$ are non-decreasing for all $x$.
If $c^*_{\w}<\infty$, for  $\ell$ small enough  $\bar{\ell}_t\le L$ and $\alpha_t^{L}=0$. 
In this case, the first integral gives the following condition for the jump of the derivative of the phase-field in $x=0$:
\begin{equation}\label{jumpal'}
   \ell\left[\!\!\left[\frac{\mathrm{d}\alpha_t}{\mathrm{d}x}\right]\!\!\right](0) = -2\sqrt{\w(\bar\alpha_t)}.
\end{equation}
If $c^*_{\w}=+\infty$, $\alpha_t^L$ is the solution of 
$$\int_{\alpha_t^{L}}^{\bar\alpha_t}\frac{\mathrm{d}\beta}{\sqrt{\w(\beta)-\w(\alpha_t^{L})}}=\frac{L}{2\ell}$$
and the equation~\eqref{jumpal'} is verifyied only asymptotically for $\ell/L\to 0$, see the following Remark~\ref{rem:AT2-loc-sol} for the example $\w(\alpha)=\alpha^2$.
\item \emph{Evolution of the phase-field on $\Gamma$.} 
On $\Gamma$, as long as $\bar\alpha_t<1$, the first-order stability condition at $x=0$ formally reads
\begin{equation*}
\k'(\bar\alpha_t)\sics\, \s\cdot\p_t+\frac{\Gc}{4c_\w\ell}\w'(\bar\alpha_t)-\frac{\Gc}{2c_\w}\ell\frac{\mathrm{d}^2\alpha_t}{\mathrm{d}x^2}(0)\ge0,
\end{equation*}
but it must be interpreted in the sense of distributions since $\p_t$ and $\mathrm{d}^2\alpha_t/\mathrm{d}x^2$ are Dirac measures. Expressing it in terms of jumps using~\eqref{sautu} and~\eqref{sautud}, we obtain
\begin{equation}\label{STsaut}
\k'(\bar\alpha_t)\sics\, \s\n\cdot\big({d_t}\mathbf{d}_0+\boldsymbol\omega_t\wedge\mathbf x\big)-\frac{\Gc}{2c_\w}\ell\left[\!\!\left[\frac{\mathrm{d}\alpha_t}{\mathrm{d}x}\right]\!\!\right](0)\ge0\quad \forall \mathbf x\in\Gamma.
\end{equation}
Equality in~\eqref{STsaut} holds whenever $\bar\alpha_t$ increases, and this is true at every point of $\Gamma$ together with~\eqref{jumpal'}. Integrating over $\Gamma$ yields
\begin{equation}\label{STsaut2}
\k'(\bar\alpha_t)\sics\, {d_t}\, \s\n\cdot\mathbf{d}_0+\frac{\Gc}{c_\w}\sqrt{\w(\bar\alpha_t)}\ge0.
\end{equation}

We now show that equality holds in~\eqref{STsaut2} as long as $\bar\alpha_t<1$, using the assumption of continuous evolution and the strict monotonicity of $\k$, which implies that $\k'(\bar\alpha_t)<0$ for $\bar\alpha_t<1$, and $\sqrt{\w}/\vert{\k'}\vert$ guaranteed by Hypothesis~\ref{hyp:kw}, see equation~\eqref{eq:slopesfromkw}. 
At time $t_e={\varepsilon_e^\s}$,~\eqref{STsaut2} holds as an equality since $\bar\alpha_{{t}}=d_{{t}}=0$. Suppose that at some time $t_1$ the inequality is strict while $\bar\alpha_{t_1}<1$. Then there must exist $t_0\in[{\varepsilon_e^\s}, t_1)$ where equality holds but not in the interval $(t_0,t_1)$. During this time, $\bar\alpha_t$ would not evolve, implying by continuity that $\bar\alpha_{t_0}=\bar\alpha_{t_1}$. But then $d_{t_0}$ would be related to $\bar\alpha_{t_0}$ through equality in~\eqref{STsaut2}, and by the inequality at $t_1$, $d_{t_1}\le d_{t_0}$. This contradicts the requirement from~\eqref{loaddt} that ${d_t}$ must increase.
Equality in~\eqref{STsaut2} implies equality in~\eqref{STsaut}, leading to $\boldsymbol\omega_t\wedge\mathbf x=\mathbf 0$ for $\mathbf x\in\Gamma$ and thus a uniform displacement jump: $\jump{\mathbf u_t}(\mathbf x)={d_t}\mathbf{d}_0$.
Eliminating ${d_t}$ between~\eqref{loaddt} and~\eqref{STsaut2}, we obtain the relation between $t$ and $\bar\alpha_t$:
\begin{equation}\label{eq:talpha}
t=\k(\bar\alpha_t)\frac{\sics}{A_0^\s}
+
\frac{\Gc}{c_\w\sics L}\frac{\sqrt{\w(\bar\alpha_t)}}{{\vert\k'(\bar\alpha_t)\vert}}\quad \text{for}\quad \bar\alpha_t<1.
\end{equation}
\item \emph{Irreversibility condition for the phase-field.}
To ensure that the implicit relation $t=f(\bar\alpha_t)$ above defines a monotonically increasing $\bar\alpha_t$, we examine the derivative:
$$
f'(\alpha)=
\vert\k'(\alpha)\vert\frac{\Gc}{\sics L}\left(
\frac{1}{c_\w\vert\k'(\alpha)\vert}\frac{\mathrm{d}}{\mathrm{d}\alpha}
\left(\frac{\sqrt{\w(\alpha)}}{{\vert\k'(\alpha)\vert}}
\right)
-\frac{L}{\ell_{\mathrm{ch}}^\s}
\right),
$$
which is non-negative if and only if~\eqref{eq:loc-snapback-condition1} is satisfied, where $\rho_c$ is non-negative thanks to Hypothesis~\ref{hyp:kw}, see~Equation~\eqref{eq:slopesfromkw}.
If $\rho_c>0$, the solution respects the irreversibility condition provided that the ratio $L/\ell_\mathrm{ch}^\s$ is sufficiently small. 
If $\rho_c=0$, the crack-like solution with a smooth evolution in time would respect the irreversibility condition only asymptotically for $L/\ell_\mathrm{ch}^\s\to 0$. The latter condition can be verified also for a finite cube size $L$ if $\ell_\mathrm{ch}^\s\to\infty$, as for the example of  an incompressible material and an isotropic loading direction $\s$.
\end{itemize}
\begin{remark}[Snap-back]
   For ${L}/{\ell_{\mathrm{ch}}^\s} > \rho_c$, the basic properties of the localized solution outlined in this section remain valid; however, it will not be possible to find a smooth evolution with a monotonically increasing maximum damage level $\bar{\alpha}_t$. 
   The response will in this case exhibit a snap-back instability. 
   We do not discuss this case explicitly in order to avoid the theoretical difficulties related to handling discontinuities in time. 
   For the $\mathrm{M1}$ and $\mathrm{M2}$ models given in~\eqref{eq:model-1}-\eqref{eq:model-LS}, the maximal size of the cube to avoid the snapback is computed to get 
   \[
   \dfrac{L}{\ell_{\mathrm{ch}}^\s} < \rho_c=
\begin{cases}
   \mathrm{M1} \text{ model with $\zeta\in(0,1]$} :&
\dfrac{2 \zeta \sqrt{\zeta}( 1+\zeta)}{
    (1+\zeta)\sqrt{\zeta} - 
    {(1-\zeta)^2 
        \, \operatorname{csch}^{-1} \left( 
            \sqrt{ \dfrac{1}{\zeta} - 1 } 
        \right)
    } 
}\geq 2,\\
\mathrm{M2} \text{ model with $\zeta\in[1,2)$}:&
\dfrac{2 (2 - \zeta)}{\zeta}^2\geq 2. 
\end{cases}
\]
Taking ${L}<2{\ell_{\mathrm{ch}}^\s}$ guarantees the absence of snapback for the two models.
\end{remark}

The localized evolution is fully parametrized by $\bar\alpha_t$, the uniform value of the phase-field on the jump surface $\Gamma$. The corresponding total energy of the cube at time $t$ decomposes into three terms:
\begin{enumerate}
\item \textit{Elastic energy outside the discontinuity surface.} Since nonlinear strains vanish in $\Omega_L\setminus\Gamma$ and the stress and strains are uniform, the energy reduces to
\begin{equation*}
\mathcal E_t^{\mathrm{bulk}}(\bar\alpha_t) =  \frac{(\k(\bar\alpha_t)\sics)^2}{2A_0^\s} {L^3}
\end{equation*}
and it is proportional to the volume of the cube.

\item \textit{Elastic energy on the discontinuity surface.} As the deformations are localized, the energy is given by
\begin{equation}
   \label{eq:Esur}
   \mathcal E_t^{\mathrm{sur}}(\bar\alpha_t) = 
   L^2 \k(\bar\alpha_t)\, \sics\, {d_t} \,\mathbf s \mathbf n \cdot \mathbf{d}_0
  =\Gc\,L^2 \frac{\k(\bar\alpha_t)}{c_\w} \frac{\sqrt{\w(\bar\alpha_t)}}{\abs{\k'(\bar\alpha_t)}}.
\end{equation}

\item \textit{Energy dissipated in the phase-field.} This corresponds to the integral of the $D_\ell(\alpha_t,\nabla\alpha_t)$ term and, given the phase-field profile computed previously, can also be written as a function of $\bar\alpha_t$:

\begin{equation}\label{eq:Dis}
\mathcal E^\mathrm{diss}(\bar\alpha_t) = L^2 { \Gc}  f_\w(\bar\alpha_t),\qquad\text{with}\quad f_\w(\alpha):=\frac{\int_0^{\alpha} \sqrt{\w(\beta)} \, d\beta}{\int_0^{1} \sqrt{\w(\beta)} \, d\beta}.
\end{equation}
As with the surface elastic energy, it is proportional to the surface area and represents a fraction of the Griffith surface energy, determined by the function $f_\w(\bar\alpha_t)$.
\end{enumerate}

The surface energy is thus composed of an elastic contribution and a dissipation term. Both are  proportional to the surface area and depend neither on $\ell$ nor on the direction $\mathbf s$, but only on the maximum value of the phase-field. We can thus introduce the following function giving  the surface energy density in the localized solution:
\begin{equation}\label{eq:Phi}
\Phi(\bar\alpha):= \Gc\left(
   \frac{\k(\bar\alpha)}{c_\w}\frac{ \sqrt{\w(\bar\alpha)} }{\abs{\k'(\bar\alpha)}} +f_\w(\bar\alpha)
\right).
\end{equation}
Its derivative is given by
\begin{equation*}
   \label{eq:Phiprime}
\Phi'(\bar\alpha) = 
\Gc\frac{\k(\bar\alpha)}{c_\w}\frac{\mathrm{d}}{\mathrm{d}\bar\alpha}\left(\frac{ \sqrt{\w(\bar\alpha)} }{\abs{\k'(\bar\alpha)}}\right) \ge 0
\end{equation*}
so that $\Phi(\bar\alpha)$ increases from 0 to $\Gc$ as $\bar\alpha$ increases from 0 to 1, because of the assumption~\eqref{hyp:kw}, see Equation~\eqref{eq:slopesfromkw}.\\

Figure~\ref{fig:modelproblem-solution} shows the evolution of the energy and the stress  with the loading parameter $t$ for both the homogeneous response and the response with localized deformation. During the linear elastic phase (i.e., while $t \le {\varepsilon_e^\s}$), both responses are identical, and energy increases quadratically with $t$. After ${\varepsilon_e^\s}$ and until time $\varepsilon_c^\s$ (which is larger for smaller $\ell$), the homogeneous response energy grows linearly. In contrast, the localized response bifurcates from ${\varepsilon_e^\s}$, with slower energy growth, though both curves remain tangent at ${\varepsilon_e^\s}$.
The energy remains below the Griffith surface energy $\Gc L^2$, reaching it at a finite time for the model used, which corresponds to the  model $\mathsf{M1}$~\eqref{eq:model-1} with $\zeta=1/2$. This highlights a scale effect, as these evolutions depend on the ratio $\Gc/\sics L$. 

\begin{remark}
   \label{rem:AT2-loc-sol}
   The evolution conditions~\eqref{eq:delta-alpha-loc1} and the energy expressions presented above are derived under the assumption that the constant \( c^*_{\w} \), as defined in~\eqref{eq:kw-conditions-1}, is finite. This assumption ensures that the width \( \bar{\ell}_t \) of the region where the phase-field evolves remains bounded. While this hypothesis simplifies the formulation, it is not essential. In cases where condition~\eqref{eq:kw-conditions-1} is not satisfied, the evolution equations~\eqref{eq:delta-alpha-loc1} and the associated energy expressions remain asymptotically valid in the limit \( \ell/L \to 0 \).
   To illustrate this, we consider the specific case of the models $\mathsf{M1}$ and $\mathsf{M2}$ with $\zeta = 1$, introduced in~\eqref{eq:model-LS-zeta-1}, for which $c^*_{\w} =+\infty$. The evolution conditions for $\alpha_t$, derived from the irreversibility, stability, and energy balance conditions, reduce to:
   \[
   \ell^2\, \alpha_t''(x) - \alpha_t(x)= 0, \quad \alpha_t(0) = \bar\alpha_t, \quad \alpha_t'\left({L}/{2}\right) = 0,
   \]
   which admits the solution:
   \[
   \alpha_t(x) = \bar\alpha_t  \frac{\cosh\left(\frac{L}{2\ell} - \frac{x}{\ell}\right)}{\cosh\left(\frac{L}{2\ell}\right)}.
   \]
   Accordingly, the first integral~\eqref{eq:first-integral} becomes:
   \begin{equation*}
   \ell^2 \left(\frac{d\alpha_t(x)}{dx}\right)^2 = \alpha_t^2(x)- \frac{\bar\alpha_t^2}{\cosh^2\left(\frac{L}{2\ell}\right)}.
   \end{equation*}
   As a consequence, the conditions~\eqref{eq:delta-alpha-loc1} are not satisfied exactly, but only in the limit $\ell/L \to 0$, with an exponentially vanishing remainder. Similarly, the dissipated energy of the localized solution evaluates to:
   \[
   \mathcal{E}^\mathrm{diss}(\bar\alpha_t) = L^2 \Gc \bar\alpha_t^2 \tanh\left(\frac{L}{2\ell}\right),
   \]
   which converges exponentially to the expression defined in~\eqref{eq:Dis} as $\ell/L \to 0$.      
\end{remark}

\subsection{The stability of homogeneous and localized responses}

\label{sec:model-problem-stab}
The two obtained solutions satisfy the irreversibility condition, the first-order stability condition, and the energy balance. It remains to verify whether they are indeed stable at every instant in the sense of~\eqref{eq:ST}, or even better, if they are globally stable (meaning that any other admissible state $(\mathbf v,\mathbf q,\beta)$ at instant $t$, satisfying $\beta\ge\alpha_t$, has greater energy). 
Although we are currently unable to rigorously prove that the solution exhibiting localized deformation is globally energy-minimizing, we shall establish a slightly weaker yet insightful result, which provides strong evidence supporting the optimality of this solution.

To each $\bar\alpha\in[0,1]$, we associate the set $\mathcal D(\bar\alpha)$ of phase-fields defined by
\[
\mathcal D(\bar\alpha)=\left\{\beta\in H^1(\Omega): 0\le\beta(\mathbf x)\le\bar\alpha\quad \forall \mathbf x\in\Omega_L,\quad \beta(\mathbf x)=\bar\alpha\quad\forall \mathbf x\in\Gamma\right\}.
\]
In other words, the phase-fields attain their maximal value uniformely on $\Gamma$  (though they may also take the value $\bar\alpha$ elsewhere). The following stability result for the localized response holds:
\begin{proposition}[Optimality of the localized deformation response]
   Provided that $\ell$ is sufficiently small compared to $L$, and that $L$ is sufficiently small compared to the cohesive length $\ell^{\mathbf s}_\mathrm{ch}=\Gc\mathrm{A}_0^\s/(\sics)^2$,
   at every instant $t > {\varepsilon_e^\s}$, the response $(\u_t,\mathbf p_t,\alpha_t)$ constructed in Section~\ref{sec:model-problem-loc} has minimal energy among all admissible fields $(\mathbf v,\mathbf q,\beta)$ satisfying $\mathbf v\in \mathcal C_t$ and $\beta\in\mathcal D:=\cup_{\bar\alpha\in[0,1]}\mathcal D(\bar\alpha)$.
   \end{proposition}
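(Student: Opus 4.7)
The plan is to reduce the minimization over $(\mathbf v,\mathbf q,\beta)$ to a scalar optimization over the uniform trace $\bar\alpha:=\max_{\Omega_L}\beta=\beta|_\Gamma$, by producing a lower bound $\mathcal E_\ell^t(\mathbf v,\mathbf q,\beta)\geq \mathcal E^*(\bar\alpha)$ that is saturated by the constructed triple $(\u_t,\p_t,\alpha_t)$ exactly at $\bar\alpha=\bar\alpha_t$, and then showing that $\mathcal E^*$ is globally minimized on $[0,1]$ precisely at $\bar\alpha_t$ under the no-snap-back hypothesis~\eqref{eq:loc-snapback-condition1}. The construction of $\mathcal E^*$ combines a Modica--Mortola bound for the dissipation term and a Fenchel--Young bound for the bulk elastic energy, the two being coupled only through the pointwise inequality $\beta\leq\bar\alpha$.

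For the dissipation I combine the AM--GM inequality $D_\ell(\beta,\nabla\beta)\geq(\Gc/2c_\w)\sqrt{\w(\beta)}|\nabla\beta|$ with $|\nabla\beta|\geq|\partial_{\mathbf n}\beta|$ and slice $\Omega_L$ perpendicularly to $\mathbf n$. Along each slice at fixed transverse coordinates, the one-dimensional functional $\int_{-L/2}^{L/2} D_\ell(\beta,\beta')\,\dx$ attains the minimum value $\Gc f_\w(\bar\alpha)$ over profiles with $\beta(0)=\bar\alpha$ and $\beta\in[0,\bar\alpha]$ with free endpoints, achieved by the symmetric profile satisfying the first integral~\eqref{eq:first-integral}, provided $\ell/L$ is small enough for the characteristic width $\bar\ell_t$ to fit inside $[-L/2,L/2]$ (exactly if $c_\w^*<+\infty$, only asymptotically otherwise, cf.~Remark~\ref{rem:AT2-loc-sol}). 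Integrating over $\Gamma$ gives $\int_{\Omega_L}D_\ell\,\dV\geq \Gc L^2 f_\w(\bar\alpha)$. For the bulk, the Fenchel conjugate of the degraded elastic energy density in the strain variable satisfies $\psi^*(\sig,\beta)=\tfrac12\A^{-1}\sig\cdot\sig$ whenever $\sig\in\k(\beta)\K_0$. Taking the uniform stress $\sig^*:=\k(\bar\alpha)\sics\,\s$, which lies in $\k(\bar\alpha)\K_0\subseteq \k(\beta)\K_0$ pointwise by monotonicity of $\k$ and $\beta\leq\bar\alpha$, the pointwise Fenchel inequality applied to the regular part of the strain together with the bound $\k(\beta)\H_{\K_0}(\jump{\mathbf v}\odot\mathbf n_{\mathbf v})\geq \sig^*\cdot(\jump{\mathbf v}\odot\mathbf n_{\mathbf v})$ on the jump set of $\mathbf v$, assembled via the loading constraint $\bar\varepsilon^\s(\mathbf v)=t$, yield
\[
\int_{\Omega_L\setminus J_{\mathbf v}}\!\varphi\,\dV+\int_{J_{\mathbf v}}\!\k(\beta)\H_{\K_0}(\jump{\mathbf v}\odot\mathbf n_{\mathbf v})\,\dS \;\geq\; \k(\bar\alpha)\sics L^3 t - \tfrac12 \k(\bar\alpha)^2(\sics)^2 L^3/A_0^\s.
\]
Adding the dissipation bound defines $\mathcal E^*(\bar\alpha):=\k(\bar\alpha)\sics L^3 t - \tfrac12\k(\bar\alpha)^2(\sics)^2 L^3/A_0^\s + \Gc L^2 f_\w(\bar\alpha)$. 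By construction, every inequality above is an equality for $(\u_t,\p_t,\alpha_t)$ (the stress coincides with $\sig^*$ everywhere, the jump is aligned with $\nus$, and the phase-field profile realizes~\eqref{eq:first-integral}), so $\mathcal E^*(\bar\alpha_t)=\mathcal E_\ell^t(\u_t,\p_t,\alpha_t)$.

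It remains to show that $\bar\alpha_t$ is the global minimizer of $\mathcal E^*$ on $[0,1]$. A direct differentiation and a one-line manipulation using~\eqref{eq:talpha} give the compact identity $(\mathcal E^*)'(\bar\alpha) = \k'(\bar\alpha)\sics L^3\,[t-t(\bar\alpha)]$, where $t(\bar\alpha)$ is the right-hand side of the implicit relation~\eqref{eq:talpha} defining the localized evolution. Computing $t'(\bar\alpha)$ one checks that hypothesis~\eqref{eq:loc-snapback-condition1} is precisely the condition $t'(\bar\alpha)\geq 0$ on $(0,1)$; hence $t(\cdot)$ is monotone non-decreasing, and combined with $\k'<0$ one obtains that $(\mathcal E^*)'$ has a unique zero at $\bar\alpha_t$, is strictly negative on $[0,\bar\alpha_t)$ and strictly positive on $(\bar\alpha_t,1]$. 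Chaining $\mathcal E_\ell^t(\mathbf v,\mathbf q,\beta)\geq\mathcal E^*(\bar\alpha)\geq\mathcal E^*(\bar\alpha_t)=\mathcal E_\ell^t(\u_t,\p_t,\alpha_t)$ completes the proof. The main technical subtlety is the sliced dissipation bound: the absence of any prescribed boundary condition on $\beta$ at $\partial\Omega_L$ is harmless because the one-dimensional minimization problem with free endpoints still delivers $\Gc f_\w(\bar\alpha)$, provided the characteristic width fits inside the cube; when $c_\w^*=+\infty$ this is only asymptotic, as in the discussion of Remark~\ref{rem:AT2-loc-sol}.
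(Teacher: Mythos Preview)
Your argument is correct and coincides with the paper's: the Fenchel--Young bound with the test stress $\sig^*=\k(\bar\alpha)\sics\,\s$ is exactly the paper's decoupling $\k(\beta)\H_{\K_0}(\mathbf q)\ge\k(\bar\alpha)\H_{\K_0}(\mathbf q)$ followed by the explicit solution of the resulting Hencky problem, and the one-dimensional minimization of $D_\ell$ over profiles pinned at $\bar\alpha$ on $\Gamma$ is identical in both, leading to the same scalar function (your $\mathcal E^*(\bar\alpha)$, the paper's $g_t(\bar\alpha)\sics L^3$). One small remark: the AM--GM step you invoke is not by itself what delivers the dissipation lower bound (it would give zero for $\beta\equiv\bar\alpha$); the bound $\Gc\,f_\w(\bar\alpha)$ really comes from minimizing the \emph{full} sliced functional $\int_{-L/2}^{L/2} D_\ell(\beta,\beta')\,\dx$, as you correctly state in the next clause and as the paper does in~\eqref{minbeta}.
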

   
   \begin{proof}
   For given $\bar\alpha\in[0,1]$ and $t>{\varepsilon_e^\s}$, we consider the following minimization problem, whose minimum is denoted by $\bar{\mathcal E}_t(\bar\alpha)$:
   \[
   \bar{\mathcal E}_t(\bar\alpha)=\min_{\mathbf v\in\mathcal C_t,\mathbf q,\beta\in\mathcal D(\bar\alpha)}\mathcal E_\ell^t(\mathbf v,\mathbf q,\beta).
   \]
   Since $\k$ is decreasing and $\H_{\K_0}$ is positive, we have $\k(\beta)\H_{\K_0}(\mathbf q)\ge\k(\bar\alpha)\H_{\K_0}(\mathbf q)$ for every $\beta\in\mathcal D(\bar\alpha)$. This yields the lower bound:
   \begin{equation}\label{eq:minmin}
   \bar{\mathcal E}_t(\bar\alpha)\ge
   \min_{\mathbf v\in\mathcal C_t,\mathbf q}\int_{\Omega_L}\left(\frac12\A(\veps(\v)-\mathbf q)\cdot(\veps(\v)-\mathbf q)
   +
   \k(\bar\alpha)\H_{\K_0}(\mathbf q)\right)\mathrm{d}V
   +
   \min_{\beta\in\mathcal D(\bar\alpha)}\int_{\Omega_L} D_\ell(\beta,\nabla\beta)\mathrm{d}V.
   \end{equation}
   The minimization problem in $(\mathbf v,\mathbf q)$ in~\eqref{eq:minmin} is a classical Hencky perfect elasto-plastic minimization problem with yield set $\k(\bar\alpha)\K_0$. By convexity, this problem admits a unique solution in stress. For the case $\bar\alpha=1$, we have $\k(1)=0$, and the minimal energy trivially equals $0$. 
   
   Now consider the case $\bar\alpha<1$. Due to the nature of the imposed loading, the stress solution is uniform and can be written as $\sig_*(\mathbf x)=\sigma_*\mathbf s$, with $\sigma_*\le\k(\bar\alpha)\sics$. We first show that $\sigma_*>0$. Otherwise, since $\mathbf 0\in\K_0$, the nonlinear strain $\mathbf p_*$ would satisfy $\mathbf s\cdot\mathbf p_*\le0$ everywhere in $\Omega_L$, implying the total strain also satisfies $\mathbf s\cdot\veps_*\le0$, contradicting the loading conditions. Next, we show $\sigma_*=\k(\bar\alpha)\sics$. Indeed, if strict inequality were true, the nonlinear strains would vanish, leading to total strain $\veps_*=\sigma_*\A^{-1}\mathbf s$, which implies $t=\sigma_*/{A}_0^\s<{\varepsilon_e^\s}$, contradicting the hypothesis $t>{\varepsilon_e^\s}$.
   
   Thus, $\sigma_*=\k(\bar\alpha)\sics$, and the nonlinear strains satisfy the normality rule $\mathbf p_*(\mathbf x)=\Lambda(\mathbf x)\nus$. Using the loading condition, we find
   \[
   \int_{{\Omega_L}}\Lambda(\mathbf x)\mathrm{d}V\;\mathbf s\cdot\nus=L^3\left(t-\k(\bar\alpha)\frac{\sics}{A_0^\s}\right),
   \]
   which yields the minimal energy
   \begin{equation*}
   \min_{\mathbf v\in\mathcal C_t,\mathbf q}\int_{\Omega_L}\left(\frac12\A(\veps(v)-\mathbf q)\cdot(\veps(v)-\mathbf q)+\k(\bar\alpha)\H_{\K_0}(\mathbf q)\right)\mathrm{d}V
   =L^3\sics\left(\k(\bar\alpha)\,t-\frac{\k(\bar\alpha)^2}{2}\frac{\sics}{A_0^\s}\right).
   \end{equation*}
   This expression remains valid for $\bar\alpha=1$ since $\k(1)=0$.
   
The minimization problem in $\beta$ in~\eqref{eq:minmin} decouples and depends only on $\bar\alpha$. Since the phase-field is prescribed to  $\bar\alpha$ on $\Gamma$, this becomes effectively one-dimensional, and the optimal field $\alpha_*$ depends only on the coordinate $x$. Provided $\ell$ is sufficiently small compared to $L$ and $c^*_{\w}$ is finite, the optimal solution is given explicitly by
   \[
   |x|=\ell\int_{\alpha_*(x)}^{\bar\alpha}\frac{\mathrm{d}\beta}{\sqrt{\w(\beta)}}\quad\text{when}\quad |x|\le d(\bar\alpha)=\ell\int_0^{\bar\alpha}\frac{\mathrm{d}\beta}{\sqrt{\w(\beta)}}\quad\text{and}\quad \alpha_*(x)=0\quad\text{when}\quad |x|\ge d(\bar\alpha),
   \]
   leading to the minimal energy:
   \begin{equation}\label{minbeta}
   \min_{\beta\in\mathcal D(\bar\alpha)}\int_{\Omega_L} D_\ell(\beta,\nabla\beta)\mathrm{d}V
   =
   L^2\frac{\Gc}{c_\w}\int_0^{\bar\alpha}\sqrt{\w(\alpha)}\,\mathrm{d}\alpha.
   \end{equation}
   Returning to the global minimization problem over all admissible fields $\mathbf v\in\mathcal C_t$ and $\beta\in\mathcal D$, we set
   $\bar{\bar{\mathcal E}}_t=\min_{\bar\alpha\in[0,1]}\bar{\mathcal E}_t(\bar\alpha)$. Combining~\eqref{eq:minmin}--\eqref{minbeta}, we obtain
   \[
   \bar{\bar{\mathcal E}}_t\ge
   \min_{\bar\alpha\in[0,1]}g_t(\bar\alpha)\sics L^3,
   \quad\text{with}\quad 
   g_t(\bar\alpha)=t\,\k(\bar\alpha)-\frac{\k(\bar\alpha)^2}{2}\frac{\sics}{A_0^\s} +\frac{\Gc}{c_\w\sics L}\int_0^{\bar\alpha}\sqrt{\w(\alpha)}\,\mathrm{d}\alpha.
   \]
   Analyzing the differentiable function $g_t$, we find that 
   the unique minimum at $\bar\alpha_t$ is given by~\eqref{eq:talpha} (still valid if $\bar\alpha_t=1$), provided $L$ is sufficiently small compared to $\Gc/\sics$. Rewriting in terms of $\bar\alpha_t$, we obtain the final lower bound:
   \begin{equation*}
   \bar{\bar{\mathcal E}}_t\ge L^3
   \frac{(k(\bar\alpha_t)\sics)^2}{2A_0^\s} +L^2\frac{\Gc}{c_\w}\k(\bar\alpha_t)\frac{\sqrt{\w(\bar\alpha_t)}}{\abs{\k'(\bar\alpha_t)}}+L^2\frac{\Gc}{c_\w}\int_0^{\bar\alpha_t}\sqrt{\w(\alpha)}\,\mathrm{d}\alpha,
   \end{equation*}
   which matches precisely the energy of the localized deformation solution from Section~\ref{sec:model-problem-loc}, thereby concluding the proof.
   \end{proof}

   Let us now examine the stability of the homogeneous response. We consider an instant $t$ when the response is nonlinear elastic without damage, \emph{i.e.}, $t\in({\varepsilon_e^\s},\varepsilon_c^\s)$. At this instant, the total strains, nonlinear strains, and stresses associated with the homogeneous response are given by
 \begin{equation}\label{homt}
 \mathbf p_t=\lambda_t\nus,\quad 
 \sig_t=\sics\mathbf s,\quad
 \veps_t=\sics\A^{-1}\mathbf s+\lambda_t\nus, 
 \quad\text{with}\quad 
 \lambda_t=\frac{t-{\varepsilon_e^\s}}{\nu^\s}.
 \end{equation}
Since $t<\varepsilon_c^\s={\varepsilon_e^\s}+\dfrac{\w'(0)}{|\k'(0)|}\dfrac{\Gc}{4c_\w\sics\ell}$, we have
 \begin{equation}\label{ineg}
 \H_{\K_0}(\mathbf p_t)=\lambda_t\sics\mathbf \nu^\s<\frac{\w'(0)}{|\k'(0)|}\frac{\Gc}{4c_\w\ell}.
 \end{equation}
 We introduce a small perturbation such that the displacements, nonlinear strains, and phase-field become $\mathbf v=\u_t+\u_*$, $\veps(\mathbf v)=\veps_t+\veps_*$ with $\veps_*=\veps(\mathbf u_*)$, $\mathbf q=\mathbf p_t+\mathbf q_*$, and $\beta=0+\alpha_*$, where starred quantities denote perturbations. To ensure $\mathbf v\in\mathcal C_t$, we must have
 \begin{equation}\label{admistar}
 \int_{\Omega_L}\mathbf s\cdot\veps_*\,\mathrm{d}V=0.
 \end{equation}
 Let us denote by $\mathcal E_*$ and $\mathcal E_t$ the energies of the perturbed and homogeneous solutions, respectively. Their difference can be expressed as:
 \begin{multline*}\label{EEstar}
   \mathcal E_* - \mathcal E_t
   =\int_{\Omega_L}\Biggl(
   \underbrace{\H_{\K_0}(\mathbf p_t+\mathbf q_*) - \sics\mathbf s\cdot(\mathbf p_t+\mathbf q_*)}_{\texttt{T}_1}+\underbrace{\frac{\Gc}{4c_\w\ell}\w(\alpha_*) - (1-\k(\alpha_*))\H_{\K_0}(\mathbf p_t+\mathbf q_*)}_{\texttt{T}_2}\Biggr)\,\mathrm{d}V\nonumber\\
   +\int_{\Omega_L}\Biggl(
   \underbrace{\frac12\A(\veps_*-\mathbf q_*)\cdot(\veps_*-\mathbf q_*)}_{\texttt{T}_3}+\underbrace{\frac{\Gc\ell}{4c_\w}\nabla\alpha_*\cdot\nabla\alpha_*}_{\texttt{T}_4}\Biggr)\,\mathrm{d}V,
   \end{multline*}
 where we have simplified the expression of term $\texttt{T}_1$ by using~\eqref{homt}--\eqref{admistar} and the identity $\H_{\K_0}(\mathbf p_t)=\sics\mathbf s\cdot\mathbf p_t$.
 
 To show that the homogeneous state at instant $t$ is unstable, we must exhibit a small perturbation such that the energy difference $\mathcal E_*-\mathcal E_t$ becomes negative. The terms $\texttt{T}_3$ and $\texttt{T}_4$ (which are second-order terms) are clearly nonnegative, and so is term $\texttt{T}_1$ by the definition of the support function. Moreover, this term is of first-order and only vanishes if $\mathbf p_t+\mathbf q_*$ lies in the cone of outward normals to $\K_0$ at $\sics\mathbf s$. In that case, since $\mathbf p_t$ itself is in that cone, $\mathbf q_*$ must also belong there.
 
 The only possible source for a negative contribution is thus term $\texttt{T}_2$, which is a mixed term containing both first- and second-order contributions. Furthermore, due to~\eqref{ineg} and the smallness of $\alpha_*$, this term is positive wherever $\mathbf q_*$ is locally small. To obtain a negative contribution, one must therefore ``concentrate'' $\mathbf q_*$ onto a surface. On this surface, $\mathbf p_t$ becomes negligible compared to $\mathbf q_*$, and $\Gc/\ell$ becomes negligible compared to $\H_{\K_0}(\mathbf q_*)$. The remaining dominant second-order term is $(1-\k(\alpha_*))\H_{\K_0}(\mathbf q_*)$. To ensure this term dominates, it is necessary that term $\texttt{T}_1$ vanishes, which implies that $\mathbf q_*$ lies in the cone of outward normals to $\K_0$ at $\sics\mathbf s$.
 
 However, wherever $\mathbf q_*$ concentrates, the strain perturbation $\veps_*$ must also concentrate so that term $\texttt{T}_3$ remains integrable. This leads to a displacement discontinuity (jump) across the surface. Moreover, since $\veps_*$, like $\mathbf q_*$, must belong to the cone of outward normals at $\sics\mathbf s$, there must exist an outward normal direction compatible with such a displacement jump. Thus, we recover precisely the situation previously encountered in constructing the localized deformation response.
 
 In summary, this analysis, which would require rigorous justification, strongly suggests that one cannot expect to find a small perturbation capable of destabilizing the homogeneous response before the instant $\varepsilon_c^\s$, unless there exists a unit outward normal to $\K_0$ at $\sics \mathbf{s}$ that is compatible with a displacement jump. If no such normal exists, the response should remain homogeneous up to the instant $\varepsilon_c^\s$, at which the damage criterion is reached. This critical time diverges as $1/\ell$ when $\ell\to 0$. Hence, one would recover the classical situation of the Griffith model, in which it becomes impossible to initiate a crack under the considered loading conditions.

 We now examine the case where there exists a unit normal $\nuss$ to $\K_0$ at $\sics \s$ that is compatible with a displacement jump. This normal may differ from the normal $\nus$ previously used to construct a response with localized deformation, as we had selected the cube orientation $\n$ accordingly; here, the cube direction $\n$ and the loading direction $\s$ are chosen independently. Let $\mathbf n_*$ and $\mathbf d_*$ be two unit vectors defining this normal:
\[
\nuss = \frac{\mathbf d_* \odot \mathbf n_*}{\Vert\mathbf d_* \odot \mathbf n_*\Vert},\quad \text{with}\quad \mathbf d_* \cdot \mathbf n_*\ge 0.
\] 
The unit vector $\mathbf n_*$ need not coincide with the cube direction $\mathbf n$. We will construct a perturbation that effectively leads to a negative energy difference, thus destabilizing the homogeneous response. Let $\Gamma_*$ be the intersection of the plane with normal $\mathbf n_*$ passing through $\mathbf 0$ with the cube, dividing it into two subdomains ${\Omega_L}^\pm$, and define $L_*=L^3/\mathrm{area}(\Gamma_*)$. We define the displacement field $\u_*$ as follows:
\[
\u_*(\x) = -\frac{{d}_*}{2L_*}\big(\mathbf d_*\!\cdot\!\x\;\mathbf n_*+\mathbf n_*\!\cdot\!\x\;\mathbf d_*\big)
+\begin{cases}
\mathbf 0,&\text{if } \x\in{\Omega_L}^-,\\[6pt]
{d}_*\mathbf d_*,&\text{if } \x\in{\Omega_L}^+,
\end{cases}
\]
where ${d}_*>0$. The associated strains $\veps_*$ are equal to $-\frac{{d}_*}{L_*}\mathbf d_*\odot\mathbf n_*$ within ${\Omega_L}^\pm$, and the displacement jump $\jump{\u_*}$ equals ${d}_*\mathbf d_*$ across $\Gamma_*$. Thus, the admissibility condition~\eqref{admistar} is satisfied.

We set $\mathbf q_*=\veps(\u_*)$ (in the sense of distributions), implying that terms $\texttt{T}_1$ and $\texttt{T}_3$ vanish. For the remaining two terms, we define the phase-field by:
\[
\alpha_*(\x)=h\left(h-\frac{\x\cdot\mathbf n_*}{L}\right)^+,\quad \x\in{\Omega_L},
\]
where $a^+=\max\{0,a\}$ and $h>0$. Thus, $\alpha_*$ equals $h^2$ on $\Gamma_*$, and its support has width $2hL$ around $\Gamma_*$. It follows that the integrals $\int_{\Omega_L}\nabla\alpha_*\cdot\nabla\alpha_*\,\mathrm{d}V$ and $\int_{\Omega_L}\w(\alpha_*)\,\mathrm{d}V$ are of order $h^3$.

Finally, for the remaining portion of term $\texttt{T}_2$, we have, for the volumetric contribution:
\[
-\int_{{\Omega_L}\setminus\Gamma_*}(1-\k(\alpha_*))\H_{\K_0}(\mathbf p_t+\mathbf q_*)\,\mathrm{d}V
=\sics\left(\frac{{d}_*}{L_*}\s\mathbf n_*\cdot\mathbf d_*-\lambda_t\s\cdot\nus\right)\int_{\Omega_L}(1-\k(\alpha_*))\,\mathrm{d}V,
\]
which is of order $h^3$, whereas the contribution on $\Gamma_*$ is:
\[
-{d}_*(1-\k(h^2))\,\mathrm{area}(\Gamma_*)\sics\s\mathbf n_*\cdot\mathbf d_*.
\]
Therefore, it suffices to choose ${d}_*=\sqrt{h}L_*$ to ensure that $\mathcal E_*-\mathcal E_t<0$ for small $h$, thus confirming the instability of the uniform state.

\subsection{Summary of findings}
We conclude this Section by resuming the main results for the solution of the phase-field model for the multiaxial model problem, where we fixed the loading direction $\s$ in the stress space and control the loading amplitude via the average strain $t$. 
The results critically depend on the properties of the strength domain $\K_0$ in the direction $\s$:
\begin{itemize}
   \item \emph{If the strength domain is unbounded in the direction $\s$}, the solution remains linear elastic and homogeneous for any loading; the stress amplitude is $\sigma_t = A_0^\s\,t$, with the directional elastic stiffness $A_0^\s$ defined in~\eqref{eq:A0sC0s}.

   \item \emph{If the strength domain is bounded in the direction $\s$}, we denote by $\boldsymbol{\sigma}^\s = \sigma_c^\s\s \in \partial\K_0$ the critical stress at which the loading direction intersects its boundary. Nonlinear elastic deformations start to develop after an average deformation $\varepsilon_e^\s:= \sigma_c^\s / A_0^\s$. Then, we must distinguish two cases, depending on whether the normal $\nus$ of the strength domain $\k_0$ at $\boldsymbol{\sigma}^\s$ is compatible with a displacement jump, in the sense of Definition~\ref{def:jumpcompatibility}, or not:
   \begin{itemize}
      \item \emph{If the normal is not compatible with a displacement jump}, homogeneous nonlinear elastic deformations develop without damage at a constant stress amplitude $\sigma_c^\s$. 
      This homogeneous solution remains stable until a critical average deformation  $\varepsilon_c^\s$ defined in~\eqref{eq:epsc}.
      At the critical deformation $\varepsilon_c^\s$, uniform damage begins to develop. 
      The average deformation is accommodated in a homogeneous softening regime with a stress $\k(\alpha_t)\sigma_c^\s$, until reaching the zero-stress level at the ultimate deformation $\varepsilon_u^\s$ given in~\eqref{eq:epsu}. 
      For models with $\w'(0)=0$, see \emph{e.g.}~\eqref{eq:model-LS-zeta-1}, we have $\varepsilon_c^\s = \varepsilon_e^\s$, and damage starts to develop directly at the elastic limit. 
   
      \item \emph{If the normal is compatible with a displacement jump}, localized solutions are possible and can compete with the homogeneous solution since the elastic limit $\varepsilon_e^\s$. 
      The localized solution described in Section~\ref{sec:model-problem-loc} involves a displacement jump in the direction $\mathbf{d}_0$ across a planar surface with normal $\n$ such that $\nus = \n \odot \mathbf{d}_0/\Vert\n \odot \mathbf{d}_0\Vert$. 
      The nonlinear deformations localize on the jump surface, where the damage reaches its maximum value $\bar\alpha_t$. The rest of the domain deforms elastically. 
      The stress decreases from $\sigma_c$ to $0$, and $\bar\alpha_t$ increases from $0$ to $1$, while the average deformation increases from $\varepsilon_e^\s$ to $\varepsilon_f^\s$, the latter being defined in~\eqref{eq:epsf}. The localized solutions are stable and have a lower energy than the homogeneous solutions. In this case, the homogeneous solutions are unstable after the elastic limit $\varepsilon_e^\s$.
   \end{itemize}
\end{itemize}

The critical deformations $\varepsilon_e^\s$ and $\varepsilon_f^\s$, at which the localized deformation begins to develop and reaches the zero-stress level, are independent of $\ell$, as are their associated energies, for $\ell$ sufficiently small. In contrast, the critical deformation $\varepsilon_c^\s$, marking the onset of the damage process in the homogeneous response, diverges as $\ell \to 0$. 
In the limit, the homogeneous solution remains stable under any loading in a direction that is not compatible with a displacement jump; see Section~\ref{sec:model-problem-stab}.

The localized solutions represent the phase-field regularization of cohesive cracks. 
The multiaxial behavior of the model strongly depends on the fine properties of the strength surface $\partial\K_0$, which determine whether solutions with displacement jumps are admissible for a given orientation of the stress tensor. 
In the next section, we will introduce a special representation of the stress and the traction vector on a jump surface, which simplifies the characterisation of this property and clarifies the link between the strength domain and the properties of the cohesive law. 
This representation provides the basis for the discussion of the model's limiting behavior as $\ell \to 0$ in Section~\ref{sec:SharpLimit}.

\section{Strength criteria, Mohr's representation of the stress, and  intrinsic curves}
\label{sec:strength}
The mathematical and phenomenological properties of the energy of the phase-field model defined by the energy functional~\eqref{eq:phase-field-energy} strongly depends on the choice of the strength criteria $\K_0$. 
This section revisits classical isotropic strength criteria and introduces a complementary characterization of stress states that are compatible with displacement jumps.
To this end, we draw upon and extend tools from the theory of Limit Analysis~\cite{DruPra52,Sal83}, including the representation of stress tensors via Mohr's circles and the description of material strength through their envelopes, referred to as intrinsic domains.  

\subsection{Mohr's circles and their envelopes}
\label{sec:Mohr}
The material of this sub-section is classical within the field of strength of materials, but we are not aware of a self-contained reference that presents them in a form accessible to a mathematically oriented audience.
For this reason, we provide a formal presentation of the key notions of the fundamental properties that will be used throughout the remainder of the paper.

\subsubsection{Mohr's representation of the stress vector }
Let $\sig$ be a stress tensor, $\n\in\mathbb{S}^2$ be a unit vector, and $\sig\n\in\mathbb{R}^3$ be the associated stress vector.
We can write
\begin{equation}
    \label{decomp}
    \sig\n = \Sigma\, \n + \mathbf{T},
\end{equation}
with $\Sigma = \sig\n\cdot \n $ and $\mathbf{T} \in \n^\perp := \left\{ \mathbf{v} \in \mathbb{R}^3; \n \cdot \mathbf{v} = 0\right\}$.
Let $(\mathbf{i},\mathbf{j})$ be an orthonormal basis of $\n^\perp$, and $T_1,T_2$ be the components of $\T$ in this basis.
We have that  
\[
    \|\T\|^2 = T_1^2+T_2^2,\qquad    \|\sig\n\|^2 = \Sigma^2 + \|\T\|^2.
\]

\begin{remark}[Parallel transport of the tangential components of the stress vector]
The orthonormal basis $(\mathbf{i},\mathbf{j})$ is of course not unique, and therefore neither are $T_1$ and $T_2$.
Consider however a unit vector $\n$ on the sphere $\mathbb S^2$ and an orthogonal basis $(\mathbf{i},\mathbf{j})$ of its tangent space plane.
Given any unit vector $\n^*$, we can use a parallel transport to build an orthonormal basis $(\mathbf{i}^*,\mathbf{j})^*$ of its tangent space, which allows us to compare the components $\Sigma, T_1,T_2$ of the stress vector $\sig \n$ when $\sig$ or $\n$ vary.
Up to the definition of a single reference basis $(\mathbf{i},\mathbf{j})$, we can therefore represent $\sig\n $ in terms of the triplet $(\Sigma, T_1, T_2)$ and $\T$ in terms of $(T_1,T_2)$.
We refer to these characterizations as their Mohr's representation.
Note in particular that $(\Sigma, T_1, T_2)$ depends linearly on $\sig$. 
\end{remark}

Let $\{\sigma_1,\sigma_2,\sigma_3\}$ be the eigenvalues of $\sig$ and 
$\{\mathbf{e}_1,\mathbf{e}_2,\mathbf{e}_3\}$ the corresponding eigenvectors, which form an orthonormal basis of $\mathbb{R}^3$. 
Here and henceforth, we use roman numbering to denote eigenvalues and matching eigenvectors ordered so that 
($\sigI\geq\sigII\geq\sigIII$).

By~\eqref{decomp}, we have that
\[
    \left\|\sig \n -\left(\frac{\sigI+\sigIII}{2}\right)\n\right\|^2 = \left| (\sig \n) \cdot \n - \left(\frac{\sigI+\sigIII}{2}\right)\right|^2 + \|\T\|^2.
\]
Writing then 
\[
    \n=n_\mathrm{I}\,\mathbf{e}_\mathrm{I}+n_{\mathrm{II}}\,\mathbf{e}_\mathrm{II}+n_\mathrm{III}\,\mathbf{e}_\mathrm{III},
\]
we have that 
\[
    \sig \n = \sigI n_\mathrm{I}\,\mathbf{e}_\mathrm{I}+ \sigII n_{\mathrm{II}}\,\mathbf{e}_\mathrm{II}+\sigIII n_\mathrm{III}\,\mathbf{e}_\mathrm{III},
\]
from which we derive that 
\[
    \left\|\sig \n -\left(\frac{\sigI+\sigIII}{2}\right)\n\right\|^2 = \left(\frac{\sigI-\sigIII}{2}\right)^2 \left(n_\mathrm{I}^2+n_\mathrm{II}^2\right)+\left(\sigII- \left(\frac{\sigI+\sigIII}{2}\right)^2\right)n_\mathrm{III}^2. 
\]
Noticing that $\left|\sigII - \left(\frac{\sigI+\sigII}{2}\right) \right| \le \left|\frac{\sigI-\sigIII}{2}\right|$, we obtain that 
\begin{equation}
    \label{MajMohr}
    \left(\Sigma-\frac{\sigI+\sigIII}{2}\right)^2+T_1^2+T_2^2\le \left(\frac{\sigI-\sigIII}{2}\right)^2.
\end{equation}
This proves that in its Mohr representation, the stress vector $\sig\n$ belongs to the sphere of radius $\frac{\sigI-\sigIII}{2}$ centered at $\left(\frac{\sigI+\sigIII}{2},0,0\right)$. 

Consider a vector $\n$ in the plane $(\mathbf{e}_\mathrm{III}, \mathbf{e}_\mathrm{I})$ and let $\theta$ be the angle between $\n$ and $\mathbf{e}_\mathrm{III}$ (\emph{i.e.} the polar angle of $\n$ in the basis $\mathbf{e}_\mathrm{III},\mathbf{e}_\mathrm{I}$ is $-\theta$)
\[
    \n = \cos \theta \mathbf{e}_\mathrm{I} - \sin \theta \mathbf{e}_\mathrm{III},
\]
and let 
\[
    \t = \sin \theta \,\mathbf{e}_\mathrm{I} + \cos \theta\, \mathbf{e}_\mathrm{III}
\]
be a unit vector orthogonal to $\n$ in the same plane, see Figure~\ref{fig:F-Angle}-left.
Then
\[
    \sig\n = \sigI\cos \theta \mathbf{e}_\mathrm{I} -\sigIII\sin \theta \mathbf{e}_\mathrm{III}
\]
is also in the plane $(\mathbf{e}_\mathrm{III}, \mathbf{e}_\mathrm{I})$.
If for $\theta=0$ (\emph{i.e.} $\n = \mathbf{e}_\mathrm{I}$), we chose $\mathbf{i} = \mathbf{e}_\mathrm{III}$ and $\mathbf{j} = \mathbf{e}_\mathrm{II}$ and use the parallel transport of this basis for all $\theta$, the Mohr representation of $\sig\n$ is of the form $(\Sigma,T_1,0)$, \emph{i.e.} $T_2 = 0$.
The decomposition~\eqref{decomp} gives then
\begin{equation}
    \label{eq:GCMohr}
    \Sigma=\frac{\sigIII+\sigI}2+\frac{\sigI-\sigIII}2\cos 2\theta,\quad T_1=\frac{\sigI-\sigIII}2\sin 2\theta.
\end{equation}
Similar constructions with $\sig\n$ in the plane $(\mathbf{e}_\mathrm{I}, \mathbf{e}_\mathrm{II})$ and $(\mathbf{e}_\mathrm{II}, \mathbf{e}_\mathrm{III})$ give respectively the two inner Mohr's circles.
Note that these are not relevant in this work so that in all that follows, we will refer to the outer Mohr's circle as \emph{the} Mohr's circle.

\begin{figure}[h!] 
    \centering
    \includegraphics[width=2.in]{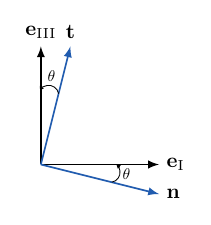} 
    \includegraphics[width=3.in]{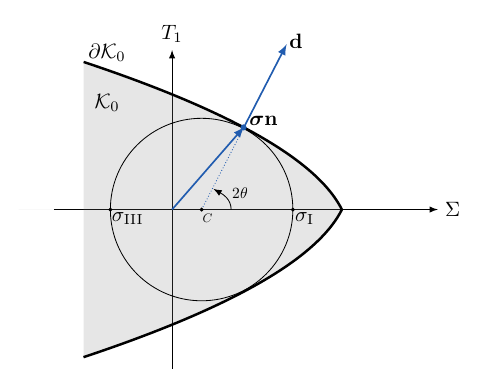} 
    \caption{Left: Orientation of reference system $\boldsymbol{t}-\boldsymbol{n}$ and definition of the angle $\theta$. Right: Mohr's representation in the section $T_2=0$ of the space $(\Sigma,T_1,T_2)$  showing the section of the intrinsic domain $\mathcal K_0$ and   a Mohr's circle tangent to  its boundary. The stress vector $\sig \n$ corresponding to the point of contact between the circle and $\partial\mathcal K_0$ determines also the orientation  $\theta$ of the normal and the direction of displacement jump $\mathbf{d}$, which must be in the normal cone of $\mathcal K_0$ at  $\sig \n$.}
    \label{fig:F-Angle}
 \end{figure}

\subsubsection{Intrinsic domain for an isotropic strength criterion}
\label{sec:intrinsic-curve}
Suppose that $\mathbf{0} \in \mathring{\K}_0$ and that $\K_0$ is bounded in at least in one direction.
For any unit vector $\n \in \mathbb{S}^2$, we define
\begin{equation}
    \label{eq:defcalK0}
    \mathcal{K}_0(\n) := \left\{\sig\n;\ \sig \in \K_0 \right\}.
\end{equation}
As the image of a convex set by a linear application, $\mathcal{K}_0(\n)$ itself is convex.
Using~\eqref{eq:GCMohr}, we also see that $\mathcal{K}_0(\n)$ is contained in the union of the spheres of center $C=((\sigI+\sigIII)/2,0,0)$ and radius $R=(\sigI-\sigIII)/2$.

When $\K_0$ is \emph{isotropic}, \emph{i.e.}
\begin{equation}
    \label{eq:defIsotropy}
    \Q^T\sig\Q \in \K_0\  \forall \sig \in \K_0,\ \Q \in SO(3),
\end{equation}
where $SO(3)$ is the group of rotations in $\mathbb{R}^3$, it is possible to fully describe $\mathcal{K}_0(\n)$ in terms of the outer Mohr's circle.

We begin by establishing the following property :
\begin{proposition}
    \label{prop:Kappa0n}
  Let $\K_0$ be a closed convex isotropic subset of $\mathbb{M}^3_s$ containing $\mathbf{0}$.
  Then $\mathcal K_0(\n)$ in $\mathbb R^3$ is a surface of revolution around the axis $(\Sigma,T_1,T_2)=(1,0,0)$ and is independent of $\n$. 
  We refer to it as the \emph{intrinsic domain} of admissible stress vectors.
\end{proposition}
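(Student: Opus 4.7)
The plan is to establish the two claims separately—the rotational symmetry of $\mathcal{K}_0(\n)$ about the $\Sigma$-axis and its independence of the choice of $\n$—by invoking the isotropy hypothesis~\eqref{eq:defIsotropy} of $\K_0$ with two different families of rotations in $SO(3)$.

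First, I would fix an arbitrary $\n \in \mathbb{S}^2$ together with an orthonormal basis $(\mathbf{i},\mathbf{j})$ of $\n^\perp$, so that every stress vector $\sig\n$ has a well-defined Mohr representation $(\Sigma, T_1, T_2)$. To establish the rotational symmetry, I would consider the one-parameter family $\Q_\theta \in SO(3)$ of rotations about the axis spanned by $\n$. By isotropy, $\sig_\theta := \Q_\theta^T \sig \Q_\theta \in \K_0$ for every $\sig \in \K_0$, and since $\Q_\theta \n = \n$ a direct computation yields $\sig_\theta \n = \Q_\theta^T (\sig\n)$. Because $\Q_\theta^T$ fixes $\n$ and rotates $\n^\perp$ by $-\theta$, the normal component $\Sigma$ is left invariant while $(T_1, T_2)$ is rotated by $-\theta$. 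Thus $\mathcal{K}_0(\n)$ is invariant under the full $SO(2)$ group of rotations about the $\Sigma$-axis, which is exactly the desired axis of revolution.

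Next, to show independence of $\n$, I would take any two unit vectors $\n, \n' \in \mathbb{S}^2$ and pick any $\Q \in SO(3)$ mapping $\n$ to $\n'$. For $\sig \in \K_0$, isotropy provides $\tilde{\sig} := \Q \sig \Q^T \in \K_0$, and the identity $\tilde{\sig}\n' = \Q(\sig\n)$ follows immediately. Taking the orthonormal basis $(\Q\mathbf{i}, \Q\mathbf{j})$ on $(\n')^\perp$, the fact that $\Q$ preserves inner products implies that the Mohr coordinates of $\tilde{\sig}\n'$ in this basis coincide with those of $\sig\n$ in $(\mathbf{i}, \mathbf{j})$. Hence $\mathcal{K}_0(\n)$ and $\mathcal{K}_0(\n')$ yield identical Mohr sets \emph{for these particular bases}. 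Any other orthonormal basis of $(\n')^\perp$ differs from $(\Q\mathbf{i}, \Q\mathbf{j})$ by a planar rotation, and by the first step this leaves the set unchanged. Independence of $\n$ follows.

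The main subtlety lies in the interplay between the two steps: the Mohr representation is intrinsically defined only up to a planar rotation in the $(T_1, T_2)$ plane stemming from the basis ambiguity on $\n^\perp$, so the very formulation of ``independence of $\n$'' requires the rotational symmetry to first render this ambiguity irrelevant. Establishing the symmetry before the independence sidesteps this issue and makes both conclusions unambiguous. The hypotheses of closedness, convexity and $\mathbf{0} \in \mathring{\K}_0$ play no role in this particular statement; they will matter for the subsequent characterization of the intrinsic curve as an envelope of Mohr's circles.
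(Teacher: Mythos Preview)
Your proposal is correct and follows essentially the same approach as the paper: both exploit isotropy first with rotations fixing $\n$ to establish the $SO(2)$-symmetry about the $\Sigma$-axis, and then with rotations carrying $\n$ to $\n'$ to establish independence of $\n$. The only cosmetic difference is that the paper phrases the second step by separately showing that the interval $\mathcal I(\n)$ of admissible normal stresses and the function $\overline T_\n(\Sigma)$ are $\n$-independent, whereas you argue directly that the Mohr sets coincide; your handling of the basis ambiguity via the first step is in fact slightly cleaner. Your closing remark that closedness, convexity, and $\mathbf 0\in\mathring{\K}_0$ are not used here is also accurate.
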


\begin{proof}
    Let us first show that for a fixed $\n$, $\mathcal K_0(\n)$ is a surface of revolution around the axis $(\Sigma,T_1,T_2)=(1,0,0)$. 
    Since $\sig = \mathbf 0 \in \K_0$, we have $(0,0,0) \in \mathcal K_0(\n)$. Let $\mathcal I(\n)$ be the intersection of $\mathcal K_0(\n)$ with the axis $(1,0,0)$. 
    Let us take $\Sigma \in \mathcal I(\n)$; then there exists $\sig \in \K_0$ such that $\sig\n \cdot \n = \Sigma$. 
    By isotropy, all tensors $\sig^* = \Q\sig\Q^T$ with $\Q \in \Orth3$ leaving $\n$ invariant, \emph{i.e.}, $\Q\n = \n$, also belong to $\K_0$, and hence $\sig^*\n$ belongs to $\mathcal K_0(\n)$. Moreover, $\sig^*\n \cdot \n = \Sigma$, and defining $\T^* := \sig^*\n - \Sigma\,\n = \Q\T$, where $\T = \sig\n - \Sigma\,\n$, we find $\norm{\T^*} = \norm{\T}$. By varying $\Q$, we generate all vectors $\T^*$ orthogonal to $\n$ having the same norm as $\T$. Consequently, for fixed $\Sigma$, all triplets $(\Sigma, T_1, T_2)$ with the same norm belong to $\mathcal K_0(\n)$, which is the desired property.

    Let us introduce
    $$\overline T_\n(\Sigma) = \sup_{(\Sigma, T_1, T_2) \in \mathcal K_0(\n)} \sqrt{T_1^2 + T_2^2}$$
    which, for a given admissible normal stress, gives the maximum possible norm of the tangential stress. To show that $\mathcal K_0(\n)$ is independent of $\n$, it suffices to show that both $\mathcal I(\n)$ and $\overline T_\n$ are independent of $\n$.
    \begin{itemize}
        \item \emph{ $\mathcal I(\n)$ is independent of $\n$.} Let $\n \in \mathbb S^2$, $\Sigma \in \mathcal I(\n)$, $\Q \in \Orth3$, and $\n^* = \Q\n$. Then there exists $\sig \in \K_0$ such that $\sig\n \cdot \n = \Sigma$, and hence $\sig^* = \Q\sig\Q^T \in \K_0$ and $\sig^*\n^* \cdot \n^* = \Sigma$, so that $\Sigma \in \mathcal I(\n^*)$. Therefore, $\mathcal I(\n) \subset \mathcal I(\n^*)$. Reversing the roles of $\n$ and $\n^*$ gives the reverse inclusion and thus equality. Since $\n$ and $\n^*$ can be chosen arbitrarily in $\mathbb S^2$, $\mathcal I(\n)$ is independent of $\n$. We now denote this interval simply as $\mathcal I$.

        \item \emph{ $\overline T_\n$ is independent of $\n$.} Let $\n \in \mathbb S^2$, $\Sigma \in \mathcal I$, $\Q \in \Orth3$, and $\n^* = \Q\n$. Then there exists $\sig \in \K_0$ such that $\sig\n \cdot \n = \Sigma$. Furthermore, the tangential stress $\T = \sig\n - \Sigma\,\n$ satisfies $\norm{\T} \le \overline T_\n(\Sigma)$. Then $\sig^* = \Q\sig\Q^T \in \K_0$, $\sig^*\n^* \cdot \n^* = \Sigma$, and $\T^* := \sig^*\n^* - \Sigma\,\n^* = \Q\T$. Since $\norm{\T} = \norm{\T^*} \le \overline T_{\n^*}(\Sigma)$, we have $\overline T_\n(\Sigma) \le \overline T_{\n^*}(\Sigma)$. Reversing the roles of $\n$ and $\n^*$ gives the reverse inequality, hence equality. Since $\Sigma$ is arbitrary in $\mathcal I$, the invariance property is established.
    \end{itemize}
\end{proof}

In all that follows, we will focus on isotropic strength domains, which can be described in the following way.

\begin{definition}[Intrinsic domain]
    \label{def:intrinsic}
Let $\K_0 \subset \mathbb{M}^3_s$ be an isotropic strength domain containing $\mathbf{0}$.
The \emph{intrinsic domain} associated to $\K_0$ is
\begin{equation}
    \label{eq:domint}
    \mathcal{K}_0 := \left\{(\Sigma,T_1,T_2) \in \mathbb{R}^3:\ \Sigma\in\mathcal I,\ \sqrt{T_1^2+T_2^2} \le \overline{T}(\Sigma)\right\}.
\end{equation}
\end{definition}

The following fundamental property, illustrated in~Figure~\ref{fig:F-Angle}-right establishes the link between Mohr's circles and intrinsic domains:
\begin{proposition}
The meridian section $\mathcal {S}_0$ of $\mathcal{K}_0$ associated with the isotropic strength criterion $\K_0$ (see Figure~\ref{fig:F-Angle}) is the envelope of all Mohr's circles that are compatible with the strength criterion, \emph{i.e.} all circles in the Mohr's plane $(\Sigma,T_1,0)$ defined by the equation
$$\left(\Sigma-\frac{\sigI+\sigIII}{2}\right)^2+T_1^2=\frac{(\sigI-\sigIII)^2}{4},
$$
where $(\sigI,\sigIII)$ are the maximum and minimum principal stresses  of some stress tensor $\sig \in \K_0$.
\end{proposition}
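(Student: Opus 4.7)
The plan is to prove two complementary facts and combine them to identify $\partial\mathcal{S}_0$ with the envelope of the family.

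\emph{Step 1: each compatible Mohr's circle lies in $\mathcal{S}_0$.} Fix $\sig\in\K_0$ with ordered eigenvalues $\sigI\geq\sigII\geq\sigIII$ and associated eigenvectors $\mathbf{e}_\mathrm{I},\mathbf{e}_\mathrm{II},\mathbf{e}_\mathrm{III}$. Let $\n(\theta)=\cos\theta\,\mathbf{e}_\mathrm{I}-\sin\theta\,\mathbf{e}_\mathrm{III}$. By the parametrization~\eqref{eq:GCMohr}, the Mohr representation $(\Sigma(\theta),T_1(\theta),0)$ of $\sig\n(\theta)$ sweeps out the full outer Mohr's circle of $\sig$ as $\theta$ runs over $[0,2\pi)$. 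Each such point lies in $\mathcal{K}_0(\n(\theta))$, hence in $\mathcal{K}_0$ by Proposition~\ref{prop:Kappa0n}, and, sitting in the meridian plane $\{T_2=0\}$, in $\mathcal{S}_0$. By convexity of $\mathcal{S}_0$, the closed disk $D_\sig$ bounded by this circle is also contained in $\mathcal{S}_0$.

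\emph{Step 2: every point of $\partial\mathcal{S}_0$ lies on some compatible Mohr's circle.} Let $P=(\Sigma^*,T_1^*,0)\in\partial\mathcal{S}_0$. Since $\mathcal{S}_0=\mathcal{K}_0\cap\{T_2=0\}$ and $\mathcal{K}_0=\mathcal{K}_0(\n)$ for any $\n$, there exist $\sig\in\K_0$ and $\n$ with $\sig\n$ of Mohr representation $P$. The inequality~\eqref{MajMohr} places $P$ in the closed disk $D_\sig$. Since $D_\sig\subset\mathcal{S}_0$ by Step 1, $P$ cannot be interior to $D_\sig$ without contradicting $P\in\partial\mathcal{S}_0$; hence $P\in\partial D_\sig$, the outer Mohr's circle of this $\sig$.

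\emph{Step 3: tangency.} At $P\in\partial\mathcal{S}_0\cap\partial D_\sig$, pick any supporting line $L$ of the convex set $\mathcal{S}_0$. Since $D_\sig\subset\mathcal{S}_0$, the line $L$ also supports $D_\sig$ at $P$. But a disk admits a unique supporting line at a boundary point, namely the tangent to its bounding circle. Therefore $L$ is that tangent, i.e., the Mohr's circle of $\sig$ is tangent to $\partial\mathcal{S}_0$ at $P$, which is the defining envelope property.

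The principal obstacle is Step 2: ensuring that a boundary point of $\mathcal{S}_0$ is actually carried by a compatible Mohr's circle, rather than by some chord strictly interior to a $D_\sig$. This is resolved by the containment $D_\sig\subset\mathcal{S}_0$ established in Step 1, which forces any boundary point of $\mathcal{S}_0$ lying in $D_\sig$ to sit on $\partial D_\sig$. The final tangency then follows cleanly from the uniqueness of the supporting line to a disk at a boundary point, together with convexity of $\mathcal{K}_0$, which is inherited from $\K_0$ through the linear map $\sig\mapsto\sig\n$.
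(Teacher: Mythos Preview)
Your proof is correct and rests on the same two pillars as the paper's: (i) each outer Mohr circle sits inside $\mathcal{S}_0$ (the paper obtains this via the parallel-transport discussion leading to~\eqref{eq:GCMohr} and then a rotation into the plane $T_2=0$; you cite~\eqref{eq:GCMohr} directly), and (ii) inequality~\eqref{MajMohr}, which traps any point of $\mathcal{S}_0$ inside the disk $D_\sig$ of the tensor that produced it. Where you diverge is in the packaging. The paper simply combines (i) and (ii) to conclude the set equality $\mathcal{S}_0=\bigcup_{\sig\in\K_0} D_\sig$, reading ``envelope'' as the region swept by the disks. You instead restrict~\eqref{MajMohr} to boundary points, use $D_\sig\subset\mathcal{S}_0$ to force $P\in\partial D_\sig$, and then add Step~3, deducing genuine tangency of $\partial\mathcal{S}_0$ to the circle through the uniqueness of the supporting line to a disk. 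Your route is slightly longer but yields the classical differential-geometric envelope property explicitly, which the paper's terse conclusion leaves implicit; the paper's route gives the set-theoretic identity $\mathcal{S}_0=\bigcup D_\sig$ directly, which you do not state but which follows immediately from your Step~1 together with~\eqref{MajMohr} applied to \emph{all} points of $\mathcal{S}_0$ rather than just boundary ones.
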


In the statement of the property, the word {\it circle} can be replaced by {\it disk}, since, by convexity, if the circle is contained in $\mathcal S_0$, then the entire disk is as well.

\begin{proof}
Let $\sigI \ge \sigIII$ be the extreme principal stresses of some $\sig \in \K_0$ with associated principal directions $\mathbf e_\mathrm{I}$ and $\mathbf e_\mathrm{III}$. 
Choosing $\n$ in the plane $(\mathbf e_\mathrm{III}, \mathbf e_\mathrm{I})$, the stress vector $\sig\n$ traces the corresponding Mohr's circle when $\n$ varies in that plane.

We first show that this circle lies in a meridian plane of $\mathcal K_0$, which depends on the chosen basis $(\mathbf i, \mathbf j)$ of the tangent plane to the unit sphere at $\mathbf e_\mathrm{I}$, and on the principal directions $(\mathbf e_\mathrm{I}, \mathbf e_\mathrm{III})$. When $\theta = 0$, the direction of the tangential stress is $\t = \mathbf e_\mathrm{III}$, whose components in the tangent plane to the sphere at $\mathbf e_\mathrm{I}$ are $(\mathbf i \cdot \mathbf e_\mathrm{III}, \mathbf j \cdot \mathbf e_\mathrm{III})$.
As $\theta$ varies, the vector $\n$ moves along the unit sphere, rotating around $\mathbf e_\mathrm{II}$ since it remains in the plane $(\mathbf e_\mathrm{III}, \mathbf e_\mathrm{I})$. 
Because of the parallel transport of  $(\mathbf i, \mathbf j)$, $\t$ remains orthogonal to both $\n$ and $\mathbf e_\mathrm{II}$, and its components in the transported basis remain unchanged. Thus, the  Mohr's circle lies in the meridional plane generated by $(1,0,0)$ and $(0, \mathbf i \cdot \mathbf e_\mathrm{III}, \mathbf j \cdot \mathbf e_\mathrm{III})$.

By applying an orthogonal transformation $\Q$ fixing $\mathbf e_\mathrm{I}$ and mapping $\mathbf e_\mathrm{III}$ to $\mathbf i$, we can align the circle with the plane $T_2 = 0$. Hence, $\mathcal S_0$ contains all Mohr's circles associated with $\K_0$. Therefore, the intrinsic section contains all major Mohr's circles compatible with $\K_0$. 

Finally, by~\eqref{MajMohr}, no point in $\mathcal S_0$ can lie outside the envelope of these circles. Therefore, $\mathcal S_0$ is precisely their envelope.
\end{proof}

In the following, the support function of the convex set $\mathcal K_0$ will play a key role. By definition, this is the function $\H_{\mathcal K_0}$ defined on $\mathbb{R}^3$ and given by  
\begin{equation}\label{PiK}
\H_{\mathcal K_0}(\Delta,D_1,D_2): = 
\sup_{(\Sigma,T_1,T_2) \in \mathcal K_0} \{\Sigma\Delta + T_1D_1 + T_2D_2\},
\end{equation}  
where the triplet $(\Delta,D_1,D_2)$ can be interpreted as the component of the vector $\mathbf d = \Delta\,\n + \mathbf D$ in the same base used for the stress vector $\sig\n$, such that
$\sig\n \cdot \mathbf{d}=\Sigma\Delta + T_1D_1 + T_2D_2$.
We can also use the support function of $\K_0$ to express $\H_{\mathcal K_0}(\Delta,D_1,D_2)$. Indeed, since for any given $\n$, the set $\sig\n$ describes $\mathcal K_0$ when $\sig$ describes $\K_0$, we obtain  
\[
\H_{\mathcal K_0}(\Delta,D_1,D_2) = \H_{\K_0}(\n \odot \mathbf d).
\]  
In practice, this requires knowledge of $\n$, whereas the definition~\eqref{PiK} is intrinsic to the convex set $\mathcal K_0$ in $\mathbb{R}^3$.

\begin{remark}
    One could redefine the intrinsic domain by including the non-interpenetration condition. This would involve taking the envelope of Mohr's half-circles obtained by restricting $\Sigma$ with $\Sigma \geq (\sigIII + \sigI)/2$. Consequently, a different intrinsic curve would result. However, we will not do this here. We will consider applications where the non-interpenetration condition will be automatically satisfied. 
\end{remark}
\subsection{Characterization of stress states compatible with the jump condition in terms of the Mohr's envelope $\mathcal{K}_0$}
In the case of isotropic strength criteria, the Mohr's representation and intrinsic sections allow for the following characterization of stress states admitting displacement jumps.

\begin{proposition}[Stress states compatible with displacement jumps]
    \label{prop:intrinsiccurvejump}
    Let $\K_0$ be convex and isotropic, $\sig \in \partial \K_0$ and $\boldsymbol{\nu} \in \mathrm{N}_{\K_0}(\sig)$.
    Suppose that there exists $\n \in \mathbb{S}^2$ and $\mathbf{d} \not = \mathbf{0}$ such that $\boldsymbol{\nu} = \n \odot \mathbf{d}$. Then, $\sig \n \in \partial \mathcal{K}_0$.
    Conversely, if there exists $\n \in \mathbb{S}^2$ such that $\sig \n \in \partial \mathcal{K}_0$, then for any  $\mathbf{d} \in \mathrm{N}_{\mathcal{K}_0}(\sig\n)$ with $\mathbf{d} \not =\mathbf{0}$, $\n \odot  \mathbf{d} \in \mathrm{N}_{\K_0}(\sig)$.
\end{proposition}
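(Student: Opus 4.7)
The plan is to reduce both implications to the algebraic identity
\[
\sig\cdot(\n\odot\mathbf{d}) = \sig\n\cdot\mathbf{d},
\]
which holds for every symmetric tensor $\sig$ and every pair of vectors $\n,\mathbf{d}$ because $\sig$ is symmetric, combined with the key consequence of isotropy recorded in Proposition~\ref{prop:Kappa0n}: the set $\mathcal{K}_0(\n)=\{\sig^*\n:\sig^*\in\K_0\}$ coincides with the intrinsic domain $\mathcal{K}_0$ for every $\n\in\mathbb{S}^2$. Together, these two facts let me translate normal-cone relations at $\sig\in\partial\K_0$ directly into normal-cone relations at $\sig\n\in\partial\mathcal{K}_0$, and conversely.

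For the forward implication, I would take $\boldsymbol{\nu}=\n\odot\mathbf{d}\in\mathrm{N}_{\K_0}(\sig)$ and expand the defining inequality of the normal cone: $(\sig-\sig^*)\cdot(\n\odot\mathbf{d})\ge 0$ for every $\sig^*\in\K_0$. Applying the identity above, this becomes $(\sig-\sig^*)\n\cdot\mathbf{d}\ge 0$, that is, $\sig\n\cdot\mathbf{d}\ge \sig^*\n\cdot\mathbf{d}$ for every $\sig^*\in\K_0$. Since $\sig^*\n$ ranges precisely over $\mathcal{K}_0(\n)=\mathcal{K}_0$ as $\sig^*$ ranges over $\K_0$, and since $\sig\n$ itself lies in $\mathcal{K}_0$ (take $\sig^*=\sig$), this says exactly that $\sig\n$ maximizes the linear functional $\mathbf{t}\mapsto \mathbf{t}\cdot\mathbf{d}$ over $\mathcal{K}_0$. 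Because $\mathbf{d}\neq\mathbf{0}$, the maximum cannot be attained in the interior of a convex body, so $\sig\n\in\partial\mathcal{K}_0$, and as a bonus we also obtain $\mathbf{d}\in\mathrm{N}_{\mathcal{K}_0}(\sig\n)$.

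The converse is the same argument run backwards. Starting from $\mathbf{d}\in\mathrm{N}_{\mathcal{K}_0}(\sig\n)$, I have $\sig\n\cdot\mathbf{d}\ge\mathbf{t}\cdot\mathbf{d}$ for every $\mathbf{t}\in\mathcal{K}_0$. For any $\sig^*\in\K_0$, Proposition~\ref{prop:Kappa0n} gives $\sig^*\n\in\mathcal{K}_0(\n)=\mathcal{K}_0$, so $(\sig-\sig^*)\n\cdot\mathbf{d}\ge 0$; invoking the identity in reverse yields $(\sig-\sig^*)\cdot(\n\odot\mathbf{d})\ge 0$ for every $\sig^*\in\K_0$, i.e. $\n\odot\mathbf{d}\in\mathrm{N}_{\K_0}(\sig)$.

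I do not expect a serious obstacle. Once the identity and the isotropy-induced equality $\mathcal{K}_0(\n)=\mathcal{K}_0$ are in hand, both implications are just unwinding definitions. The only point that deserves a line of care is interpreting $\mathrm{N}_{\mathcal{K}_0}(\sig\n)$ coherently, since $\mathcal{K}_0$ has been described in Section~\ref{sec:Mohr} through the abstract Mohr triplet $(\Sigma,T_1,T_2)$ while $\sig\n$ and $\mathbf{d}$ live in $\mathbb{R}^3$; the parallel-transported tangent-plane basis used there makes this identification an isometry, so inner products are preserved and the whole argument goes through unchanged.
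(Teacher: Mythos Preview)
Your proof is correct and follows essentially the same route as the paper's own argument: both directions hinge on the identity $\sig\cdot(\n\odot\mathbf{d})=\sig\n\cdot\mathbf{d}$ to translate the normal-cone inequality for $\K_0$ into that for $\mathcal{K}_0$ and back, using that $\{\sig^*\n:\sig^*\in\K_0\}=\mathcal{K}_0$. Your write-up is in fact slightly more explicit than the paper's in justifying why a nonzero $\mathbf{d}$ forces $\sig\n$ to lie on the boundary and in flagging the isometric identification between $\mathbb{R}^3$ and the Mohr coordinates.
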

\begin{proof}
    Let $\sig \in\partial \K_0$ and $\boldsymbol{\nu} \in \mathrm{N}_{\K_0}(\sig)$. 
    Suppose that there exists a unit vector $\n$ and a non-zero vector $\mathbf{d}$ such that $\boldsymbol{\nu} =\n \odot \mathbf{d}$.
    Since $\sig \in \partial \K_0$, we have that for all $\sig^* \in \K_0$, 
    \[
        \sig \cdot \boldsymbol{\nu} \ge \sig^* \cdot \boldsymbol{\nu},
    \]
    and using the identity $\sig \cdot \n \odot \mathbf{d} = \sig \n \cdot \mathbf{d}$, we get that 
    \[
        \sig \n \cdot \mathbf{d} \ge \sig^* \n \cdot \mathbf{d},
    \]
    so that $\mathbf{d} \in \mathrm{N}_{\mathcal{K}_0}(\sig\n)$.
    Since $\mathbf{d} \not = \mathbf{0}$, by definition of the normal cone, we must then have that $\sig \n \in \partial \mathcal{K}_0$

    Conversely, suppose now that $\sig \n \in \partial \mathcal{K}_0$ and let $\mathbf{d} \in \mathrm{N}_{\mathcal{K}_0}(\sig\n)$. 
    For any $\sig^* \in \K_0$, we then have
    \[
        \sig \n \cdot \mathbf{d} \ge \sig^* \n \cdot \mathbf{d},
    \]
    so that 
    \[
        \sig \n \odot \mathbf{d} \ge \sig^* \n \odot \mathbf{d},
    \]  
    and $\n \odot \mathbf{d} \in \mathrm{N}_{\K_0}(\sig)$.

\end{proof}
    
As a direct consequence of the preceding proposition, we obtain the following characterization of stress states that allow displacement discontinuities.

\begin{proposition}
For an isotropic strength domain $\K_0$, displacement jumps are only possible for stress tensors $\sig \in \partial \K_0$ such that there exists a normal $\n \in \mathbb{S}^2$ for which $\sig\n \in \partial \mathcal{K}_0$.
This excludes all stress tensors $\sig$ whose largest associated Mohr's circle is entirely within the intrinsic domain.  
For those whose largest Mohr's circle is tangent to the intrinsic curve, displacement jump can only occur on the facet whose normal orientation $\n$ with respect to the extreme principal directions corresponds to the angle $\theta$ given by the point of tangency of Mohr's circle with the intrinsic curve (see Figures~\ref{fig:F-Angle}).  
Furthermore, for the non-penetration condition $\mathbf d\cdot\n \geq 0$ to be satisfied, the outward normal $\mathbf d \equiv (\Delta, D_1,D_2)$ to the intrinsic domain $\mathcal{K}_0$ must have its normal component $\Delta \geq 0$, which restricts the angle $\theta$ to the interval $[- \pi/4, \pi/4]$.
\end{proposition}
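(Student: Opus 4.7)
The proposition assembles three geometric consequences of Proposition~\ref{prop:intrinsiccurvejump} and the Mohr representation. My plan is to address the four claims in sequence.

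First, the statement that displacement jumps require a normal $\mathbf n$ with $\sig\mathbf n\in\partial\mathcal K_0$ is a direct translation of Proposition~\ref{prop:intrinsiccurvejump}: Definition~\ref{def:jumpcompatibility} requires the outer normal $\boldsymbol\nu\in\mathrm N_{\K_0}(\sig)$ to factor as $\mathbf d\odot\mathbf n$, and the first half of Proposition~\ref{prop:intrinsiccurvejump} then forces $\sig\mathbf n\in\partial\mathcal K_0$. Hence if no such $\mathbf n$ exists, no jump is admissible.

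For the second claim, I would invoke~\eqref{MajMohr}: for every unit vector $\mathbf n$, the Mohr representation of $\sig\mathbf n$ lies in the closed ball of center $((\sigI+\sigIII)/2,0,0)$ and radius $(\sigI-\sigIII)/2$, which by Proposition~\ref{prop:Kappa0n} is the solid of revolution generated by the largest Mohr's disk. If this disk is contained in the interior of $\mathcal K_0$, then $\sig\mathbf n\in\mathring{\mathcal K}_0$ for every $\mathbf n$, and by the first claim no jump is possible. For the third claim, if the largest Mohr's circle is tangent to $\partial\mathcal K_0$ at a single point, then by~\eqref{MajMohr} the only normals $\mathbf n$ for which $\sig\mathbf n\in\partial\mathcal K_0$ are those whose stress vector hits that tangency point; using the parameterization~\eqref{eq:GCMohr}, I identify the corresponding angle $\theta$ between $\mathbf n$ and $\mathbf e_{\mathrm I}$ by equating the Mohr coordinates of $\sig\mathbf n$ with the coordinates of the tangency point.

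The fourth claim is the only one that requires a small geometric computation. At a tangency point, the outer normal $\mathbf d$ to $\mathcal K_0$ is collinear with the outer normal to the Mohr's circle at that same point, \emph{i.e.} with the radial vector from $((\sigI+\sigIII)/2,0,0)$. From~\eqref{eq:GCMohr}, this radial vector has Mohr components proportional to $(\cos 2\theta,\sin 2\theta,0)$ (up to the rotation of $(\mathbf i,\mathbf j)$ fixing the axis), so $\Delta/\|\mathbf d\|=\cos 2\theta$. The non-interpenetration condition $\mathbf d\cdot\mathbf n=\Delta\ge 0$ from Definition~\ref{def:jumpcompatibility} therefore reduces to $\cos 2\theta\ge 0$, \emph{i.e.} $\theta\in[-\pi/4,\pi/4]$.

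I expect the main subtlety to be the third claim: making precise what ``the angle $\theta$'' means when $\K_0$ is isotropic, since by rotational symmetry the admissible normals $\mathbf n$ form an entire orbit under rotations fixing $\mathbf e_{\mathrm{II}}$ (or, more generally, fixing the axis of any eigenplane realizing the extremal pair $(\sigI,\sigIII)$). The care required is to state the result in terms of the angle $\theta$ with respect to the extreme principal directions in the meridional plane through $\mathbf n$, which is well-defined thanks to the parallel-transport convention used to introduce Mohr's representation.
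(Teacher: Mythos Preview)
The paper does not give a separate proof of this proposition; it simply presents it as a direct consequence of Proposition~\ref{prop:intrinsiccurvejump} and the Mohr machinery of Section~\ref{sec:Mohr}. Your plan is correct and supplies precisely the details the paper leaves implicit, including the key observation for the fourth claim that at a tangency point $N_{\mathcal K_0}$ is contained in the normal cone to the Mohr ball and hence reduces to the radial ray with $\Sigma$-component $\cos 2\theta$.
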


We illustrate the proposition above with the classical example of the von Mises and Tresca strength criteria.

\begin{example}
    The classical von Mises and Tresca strength criteria are defined by the following strength domains:
    \begin{align*}
            \text{von Mises}: 
            \mathbb{K}_0= &\left\{ \sig\in\mathbb M^3_s :\dfrac{\Vert\sig^D\Vert}{\sqrt{2}}=\sqrt{\dfrac{1}{6}}{\sqrt{(\sigI-\sigII)^2+(\sigII-\sigIII)^2+(\sigI-\sigIII)^2}}\leq\tau_c
            \right\},\\
            \text{Tresca}: 
            \mathbb{K}_0=&\left\{ \sig\in\mathbb M^3_s :\dfrac{\sigI-\sigIII}{2}\leq\tau_c
            \right\}.
    \end{align*}

    In the space of the (unordered) principal stresses $(\sigma_1,\sigma_2,\sigma_3)$, both criteria define cylinders with axis along the line $\sigma_1 = \sigma_2 = \sigma_3$. Figure~\ref{fig:VMvsTresca}-left shows their cross-section in the plane $\sigma_3 = 0$. The corresponding intrinsic domain $\mathcal{K}_0$ in Mohr's space $(\Sigma, T_1, T_2)$ is the envelope of the Mohr's circles associated with all admissible stress states. The Mohr's circle corresponding to a generic state with ordered principal stresses $(\sigI, \sigII, \sigIII)$ has its center on the axis $T_1 = T_2 = 0$ with $\Sigma = (\sigI + \sigIII)/2$ and its radius is $R = (\sigI - \sigIII)/2$. The Tresca criterion, which does not depend on the intermediate stress $\sigII$, directly gives the maximum radius $R = \tau_c$, regardless of the center's position. For the von Mises criterion, given the maximum and minimum principal stresses $\sigI$ and $\sigIII$, the largest Mohr's circle radius is obtained when $\sigII = (\sigI + \sigIII)/2$, giving the same maximum radius $R = (\sigI - \sigIII)/2 = \tau_c$. Therefore, the intrinsic domain $\mathcal{K}_0$ for both criteria is the same and corresponds to the cylinder $\sqrt{T_1^2 + T_2^2} \leq \tau_c$, whose cross-section for $T_2 = 0$ is the gray region bounded by the thick black lines in Figure~\ref{fig:VMvsTresca}-right.

For uniaxial traction, \emph{e.g.} $\sigma_2 = \sigma_3 = 0$, the stress state with maximum principal stress $\sigma_1$ corresponds to $\sqrt{3}\tau_c\, \mathbf{e}_1 \otimes \mathbf{e}_1$ for the von Mises criterion, and to $2\tau_c\, \mathbf{e}_1 \otimes \mathbf{e}_1$ for the Tresca criterion; see the green and blue dots in Figure~\ref{fig:VMvsTresca}-left, respectively. The green and blue circles in Figure~\ref{fig:VMvsTresca}-right represent the corresponding Mohr's circles in the cross-section $T_2 = 0$ of the Mohr's space $(\Sigma,T_1,T_2)$. 
The blue circle, associated with the Tresca criterion, is tangent to the intrinsic domain. Using the notation of Figure~\ref{fig:F-Angle}, this implies that, under uniaxial traction, the Tresca criterion admits displacement jumps along planes with normal direction $\theta = \pm\pi/4$. In contrast, the green circle corresponding to the von Mises criterion has radius $\sqrt{3}/2$ and lies strictly inside $\mathcal{K}_0$. This indicates that the von Mises criterion does not admit displacement jumps under uniaxial traction. 

The magenta cross in Figure~\ref{fig:VMvsTresca}-left and the dashed magenta circle in Figure~\ref{fig:VMvsTresca}-right represent the stress state of maximum admissible intensity under pure shear, which is identical for both the von Mises and Tresca criteria. In this case, the Mohr's circle is tangent to the intrinsic domain, and both criteria admit displacement jumps along directions $\theta = \pm\pi/4$.

\end{example} 

\begin{figure}[H]
    \begin{center}
        \begin{center}
            \begin{minipage}{0.4\textwidth}
                \centering
                \includegraphics[width=\textwidth]{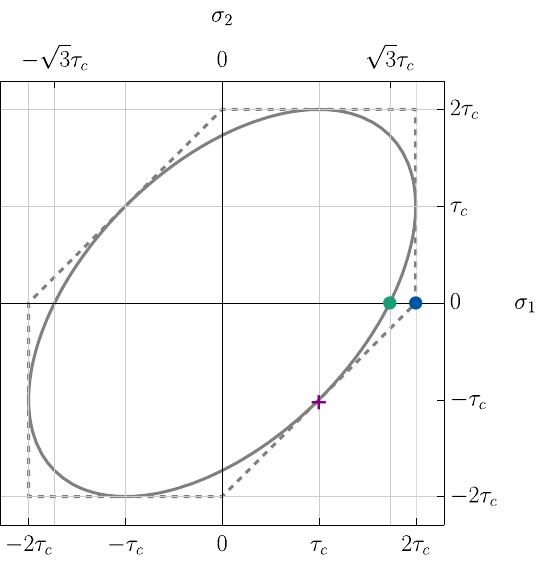}
            \end{minipage}
            \hfill
            \begin{minipage}{0.45\textwidth}
                \centering
                \includegraphics[width=\textwidth]{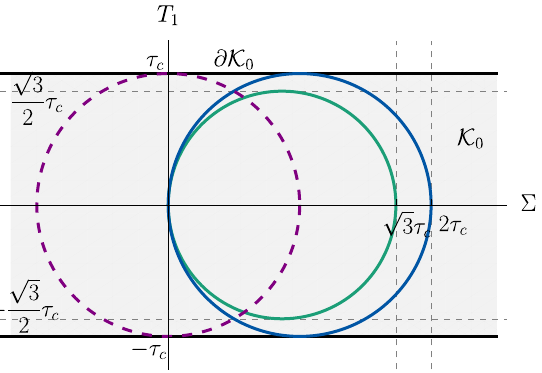}
            \end{minipage}
        \end{center}
        \caption{The von Mises and Tresca criteria define different strength domains (left, shown in the cross-section $\sigma_3 = 0$), but share the same intrinsic domain (right, shown in the cross-section $T_2 = 0$ of the Mohr's space). Green and blue dots (left) mark uniaxial traction states for von Mises and Tresca; the corresponding Mohr's circles are shown in the right panel. The magenta cross (left) represents pure shear; its Mohr's circle is the dashed magenta circle (right).}

    \label{fig:VMvsTresca}
    \end{center}
\end{figure}

\subsection{Examples of isotropic strength criteria and corresponding intrinsic curves}
\label{sec:strenght-examples}
We introduce two main classes of isotropic strength criteria.  
The first, that we name the \emph{Mohr-Coulomb family}, defines the strength directly in the Mohr's plane by prescribing the intrinsic domain \( \mathcal{K}_0 \) via a function \( \overline{T}(\Sigma) \).  
The corresponding admissible stress set \( \K_0 \) in the full stress space is then derived from this construction.
The second class, that we denote the \emph{Drucker-Prager family}, proceeds in the opposite direction: it defines the admissible stress set \( \K_0 \) directly in the stress space, and the intrinsic domain in the Mohr's plane is computed as a consequence. 

We will show that in light of Proposition~\ref{prop:intrinsiccurvejump}, the two classes exhibit fundamentally different behavior in our context.  
For the Mohr-Coulomb family, every stress state \( \sig \in \partial \K_0 \) admits a normal compatible with displacement jumps, whereas this property does not generally hold for the Drucker-Prager family.

\subsubsection{Mohr-Coulomb family}

The Mohr-Coulomb criterion assumes a linear relation between the maximum allowable shear stress $\overline{T}(\Sigma)$ and the normal stress $\Sigma$ in the form 
\begin{equation}
\Vert \mathbf{T}\Vert\leq \overline{T}(\Sigma)=\tau_c-\Sigma\,\tan{\varphi}, 
\label{eq:MC-intrinsic}
\end{equation}
where $\tau_c$ is the \emph{cohesion} and $\varphi$ is the \emph{friction angle}. 
The function $\overline{T}(\Sigma)$ above defines an intrinsic curve, and hence an intrinsic domain $\mathcal{K}_0$ through~\eqref{eq:defcalK0}-\eqref{eq:domint}:
\begin{equation*}
    \mathcal K_0=\left\{(\Sigma,T_1,T_2)\in \mathbb R^3: \sqrt{T_1^2+T_2^2}\le  \tau_c-\Sigma\,\tan{\varphi}\right\}.
\end{equation*}
The corresponding maximum allowable normal stress is $\sigma_c=\tau_c\cot{\varphi}$ and the interval $\mathcal I$ is $(-\infty,\sigma_c]$. 
Admissible stresses are all and only the stress tensors for which the corresponding Mohr's circle lies inside, or is tangent to the intrinsic section $\mathcal{S}_0$, \emph{i.e.}~the meridian cross-section of $\mathcal K_0$. Imposing this condition and using the representation of the Mohr's circle in terms of the ordered eigenvalues $\{\sigI,\sigII,\sigIII\}$ gives the strength domain
\begin{equation*}
    \K_0=
    \left\{
    \sig\in\mathbb M^3_s: 
    (\sigI-\sigIII)+(\sigI+\sigIII)\sin\varphi-2\tau_c\cos\varphi\leq 0
    \right\}.
\end{equation*}
Since the Mohr's circle is defined only by the maximum and minimum eigenvalues of $\sig$, only the values of $\sigI$ and $\sigIII$ matter for stress criteria defined by the intrinsic curve. The criterion is isotropic, because the direction of the eigenvectors is not taken into account.
The meridian cross-section of $\mathcal{K}_0$ in the Mohr's plane and the domain of admissible stresses $\K_0$ in the stress space are represented in Figure~\ref{fig:MC-domains}.
\begin{figure}[H]
    \begin{center}
        \includegraphics[width=0.25\textwidth]{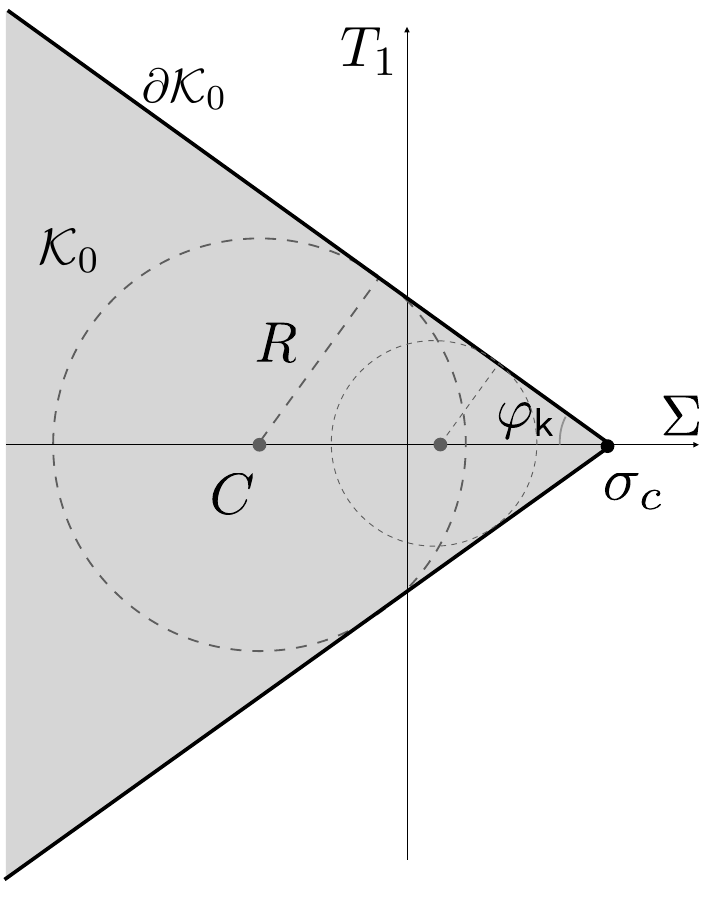}
        \includegraphics[width=0.3\textwidth]{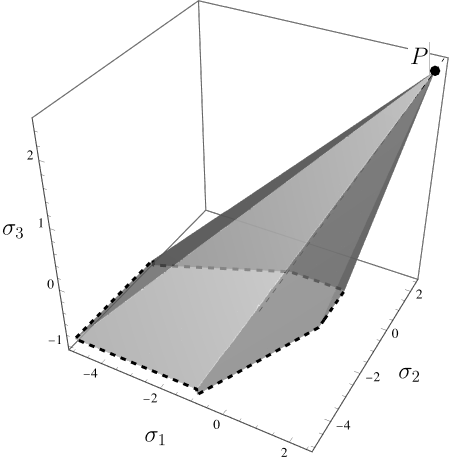}
        \quad
        \includegraphics[width=0.3\textwidth]{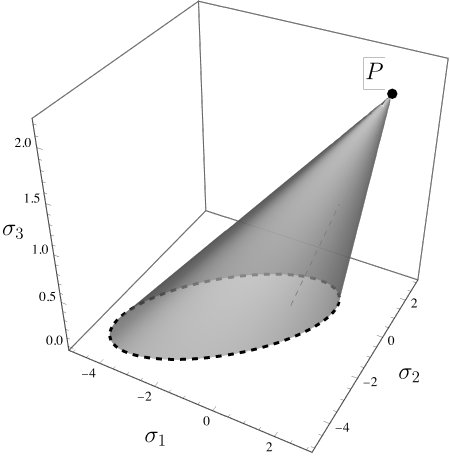}
        \caption{Strength domain  for the Mohr-Coulomb and Drucker-Prager criteria and corresponding intrinsic section in the Mohr's representation. The two criteria have the same intrinsic domain $\mathcal{K}_0$ (left)  but different strength domains $\K_0$ (center for Mohr-Coulomb and right for Drucker-Prager). 
        The strength domains are represented in the space of the (unsorted)  eigenvalues of $\sig$. The figures are for $\varphi=\pi/8$ and $\tau_c=1$.
    }
    \label{fig:MC-domains}
    \end{center}
\end{figure}

The  support functions of the strength domain $\K_0$ and the intrinsic domain $\mathcal{K}_0$ are given by~\cite{Sal83}:
\begin{eqnarray}
    \notag
\H_{\K_0}(\p)&=&
\begin{cases}
    \tau_c\,\cot{\varphi}\,\mathrm{tr}(\p)
    & \text{ if } \mathrm{tr}(\p)\geq \left(\vert{p_1}\vert+\vert{p_2}\vert+\vert{p_3}\vert\right)\sin{\varphi},\\
    +\infty&\text{ otherwise,}
\end{cases}\\
    \mathrm{H}_{\mathcal{K}_0}(\Delta,D_1,D_2) &=& 
    \begin{cases}
        \tau_c\,\cot{\varphi}\,\Delta
        & \text{ if } \Delta\geq \sqrt{\Delta^2+D_1^2+D_2^2}\sin{\varphi},\\
        +\infty&\text{ otherwise}.
    \end{cases}
\label{eq:MC-K0-supportfunction}
\end{eqnarray}

\begin{remark}
The Mohr-Coulomb criterion is usually used to model soils and rocks behavior in conjunction with limit analysis theories to bound the solution for ridig-perfectly-plastic problems with test functions including displacement jumps~\cite{Sal83,Salencon1993CISM,CHEN75}. 
\end{remark}

\begin{remark}
In this Section, we present expressions for the support functions of the several classical strength domains $\mathbb{K}_0$, without detailing their generally non-trivial computation. These expressions are available in standard texts on Limit Analysis~\cite{Sal83,Salencon1993CISM,CHEN75}. The derivation relies on the following key result: for a given symmetric matrix $\p$ with ordered eigenvalues $(\pI,\pII,\pIII)$, the following equality holds
\begin{equation}
\max_{\sig\in\mathbb M(\sigI,\sigII,\sigIII)} \sig\cdot\p = \sigI\pI + \sigII\pII + \sigIII\pIII,
\label{eq:suptensors}
\end{equation}
where $\mathbb {M}(\sigI,\sigII,\sigIII)$ denotes the set of all symmetric matrices whose ordered eigenvalues are $(\sigI,\sigII,\sigIII)$.
Given this, the determination of the support function reduces to the tedious solution of relative simple convex optimization problems in three dimensions, see \emph{e.g.}~\cite[Section 3.6]{CHEN75}. The proof of the above identity, a subtle algebraic result given for granted in the cited texts, was communicated to us by Patrick Ballard~\cite{Ballard25}, and relies on  Cauchy's interlace theorem, see also~\cite[Theorem~4.3.53, p.~255]{HornJohnson}.  Given the support function for the strength domain in  the stress space $\mathbb{K}_0$, the support function for $\mathcal{K}_0$ is  found by replacing $\p$ with $\mathbf{d}\odot\mathbf{n}$.
\end{remark}

Notable limit cases or variants of the Mohr-Coulomb criterion include the following:
\begin{itemize}
    \item \emph{Tresca criterion (maximal shear)}. It bounds the tangential component of the stress vector only, by imposing $\| \T\| \leq \overline{T}(\Sigma)=\tau_c$. It gives:
        \begin{align*}
        \K_0 & = \left\{\sig\in\mathbb M^3_s: \sigI-\sigIII\leq 2\tau_c \right\},\\
        \mathcal {K}_0 & =\left\{(\Sigma,T_1,T_2)\in\mathbb R^3: \sqrt{T_1^2+T_2^2}\le\tau_c\right\},\\
        \H_{\K_0}(\p) & =
            \begin{cases}
                \tau_c\left(\vert{p_1}\vert+\vert{p_2}\vert+\vert{p_3}\vert\right)  & \text{ if } \mathrm{tr}(\p)= 0\\
                +\infty  & \text{ otherwise,}
            \end{cases}\\
            \H_{\mathcal{K}_0}(\Delta,D_1,D_2) &=
            \begin{cases}
                \tau_c\,\sqrt{\Delta^2+D_1^2+D_2^2} & \text{ if } \Delta= 0\\
                +\infty & \text{ otherwise.}
            \end{cases}
        \end{align*}

\item \emph{Rankine criterion (maximal tension)}. It bounds the maximum normal component of the stress vector by imposing  $\Sigma\leq\sigma_c$. It gives:
    \begin{align*}
        K_0 & = \left\{ \sig\in\mathbb M^3_s: \sigI\leq \sigma_c \right\},\\
        \mathcal K_0 & =\left\{(\Sigma,T_1,T_2)\in\mathbb R^3: \Sigma<\sigma_c\right\}\cup\{\sigma_c,0,0\},\\
        \H_{\K_0}(\p) & =
            \begin{cases}  
                \sigma_c\,\mathrm{tr}(\p) & \text{ if } \mathrm{tr}(\p)\geq 0,\\
                +\infty & \text{ otherwise,}
            \end{cases}\\
        \H_{\mathcal{K}_0}(\Delta,D_1,D_2) & =
            \begin{cases}
                \sigma_c\,\sqrt{\Delta^2+D_1^2+D_2^2} & \text{ if } \Delta\geq 0,\\
                +\infty & \text{ otherwise.}
            \end{cases}
    \end{align*}

\item \emph{Tresca with a tension cut-off}. It is a combination of the two criteria above, obtained when imposing $\Sigma\leq\sigma_c$ and $\norm{\T}\leq\tau_c$:
    \begin{align*}
        \K_0 & = \left\{ \sig\in\mathbb M^3_s: \sup\{\sigI-\sigIII-2\tau_c,\sigI-\sigma_c\}\leq 0 \right\},\\
        \mathcal K_0 & = \left\{ (\Sigma,T_1,T_2)\in\mathbb R^3: \Sigma\le \sigma_c, {(\Sigma-\sigma_c+\tau_c)^+}^2+T_1^2+T_2^2\le\tau_c^2 \right\},\\
        \H_{\K_0}(\p) & =
            \begin{cases}
                \tau_c\left(\vert{p_1}\vert+\vert{p_2}\vert+\vert{p_3}\vert-\mathrm{tr}(\p)\right)+\sigma_c\mathrm{tr}(\p) & \text{ if } \mathrm{tr}(\p)\geq 0,\\
                +\infty&\text{otherwise,}
            \end{cases}\\
    \H_{\mathcal{K}_0}(\Delta,D_1,D_2) & =
        \begin{cases}
            \tau_c\,(\sqrt{\Delta^2+D_1^2+D_2^2}-\Delta)+\sigma_c\,\Delta & \text{ if } \Delta\geq ,0\\
            +\infty & \text{ otherwise,}
    \end{cases}
    \end{align*}
    where $(\cdot)^+=\sup(\cdot,0)$.
\end{itemize}

\begin{remark}
All the criteria above define the admissible stress only in terms of the maximum and minimum eigenvalues $\sigI$ and $\sigIII$.
This implies that for all the stress states $\sig\in\partial\mathbb{K}_0$ there exist a normal $\n$ such that  $\sig\n\in\partial\mathcal{K}_0$. Hence, because of Proposition~\ref{prop:intrinsiccurvejump}, all the point $\partial\K_0$ have a normal compatible with a displacement jump.
\end{remark}
\subsubsection{Drucker-Prager family}
\label{sec:Mohr-DP}
The Drucker-Prager criterion~\cite{DruPra52} is defined in terms of the norm of the spherical and the deviatoric parts of the stress tensor $\sig$. The strength stress $\K_0$ domain is defined as
\begin{equation}
    \label{eq:DP}
    \K_0=
    \left\{
        \sig\in\mathbb M^3_s: \frac{\norm{\sig^D}}{\sqrt{2}}\le 3 k\left(\sigma_c-\frac{\mathrm{tr}(\sig)}{3}\right)
    \right\}\mbox{\quad with\quad} 0< k<\frac{1}{2\sqrt3}, \quad\sigma_c>0.
\end{equation}
$\K_0$ is a cone pointed at $\sigma_c\, \mathbf I$, where $\sigma_c$ and $\tau_c=3k\,\sigma_c$  are the maximum allowable stress in uniaxial traction and pure shear, see Figure~\ref{fig:MC-domains}.
The support function is computed by using the decomposition in the spherical and deviatoric parts, which gives
\begin{equation}
    \label{piDP}
    \H_{\K_0}(\mathbf p)=
    \begin{cases}
        \sigma_c\text{tr}(\mathbf p)&\mbox{if } \text{tr}(\mathbf p)\ge3k \sqrt2\norm{\mathbf p^D},\\
        +\infty&\mbox{ otherwise.}
    \end{cases}
\end{equation}
Being formulated in terms of the invariants of the stress tensor, the criterion is isotropic.  
The intrinsic domain can be computed using Mohr's circle representation.  
Given the maximum and minimum principal stresses, $\sigI$ and $\sigIII$, we express the intermediate principal stress as  $\sigII = \chi \sigIII + (1 - \chi) \sigI$, with
$\chi \in [0,1]$.
 Plugging this expression into~\eqref{eq:DP} written in terms of the eigenvalues, one finds that  $\sig\in\partial \K_0$ when 
$f_k(\chi)R=3\sqrt{3}k(\sigma_c-C)$ 
with $f_k(\chi)=2\sqrt{1-\chi+\chi^2}+(1-2\chi)\sqrt{3}\, k,$
where $C=(\sigI+\sigIII)/2$ is the center of the Mohr's circle and $R=(\sigI-\sigIII)/2$ its radius.
Given the center, the maximum radius is found when $\chi$, \emph{i.e.}~$\sigII$, minimizes $f_k(\chi)$. This gives $\chi_k=(1+\sin\varphi_k)/2$, with $\sin\varphi_k={3k}/{\sqrt{1-3k^2}}$. The maximum radius as a function of the position of the center is 
$
R=\sin\varphi_k(\sigma_c-C)$ with $C\le\sigma_c$.
The straight lines $T=\pm\tan\varphi_k(\sigma_c-\Sigma)$ are tangent to all the  these circles. They represent the intrinsic curve and the intrinsic domain is
 \begin{equation*}
    \mathcal K_0=\left\{(\Sigma,T_1,T_2)\in\mathbb R^3:\Sigma\le\sigma_c, \sqrt{T_1^2+T_2^2}\le(\sigma_c-\Sigma)\tan\varphi_k\right\}\mbox{\quad with\quad} \varphi_k=\arcsin\frac{3 k}{\sqrt{1-3k^2}}.
\end{equation*}
Recalling that $\sigma_c = \tau_c\,\cot{\varphi}$, the intrinsic domain associated with the Drucker-Prager criterion coincides with that of the Mohr-Coulomb criterion when the parameter $k$ is chosen as
$
k = {\sin \varphi}/{\sqrt{3(3 + \sin^2 \varphi)}}.
$
As a result, its support function coincides with that of the Mohr-Coulomb model given in~\eqref{eq:MC-K0-supportfunction}, up to a suitable renaming of variables.

The limit cases of the Drucker-Prager criterion analogue to the Tresca and Rankine criterion introduced for the Mohr-Coulomb criterion are the following:
\begin{itemize}
    \item \emph{von Mises criterion}. It bounds the deviatoric part of the stress tensor only. It gives:
    \begin{align*}
        \K_0 & = \left\{ \sig\in\mathbb M^3_s: \frac{\Vert\sig^D\Vert}{\sqrt{2}}\leq \tau_c \right\},\\
        \mathcal K_0 & = \left\{(\Sigma,T_1,T_2)\in\mathbb R^3: \sqrt{T_1^2+T_2^2}\le\tau_c\right\},\\
        \H_{\K_0}(\p) & =
            \begin{cases}
                \tau_c\sqrt{2}\,\Vert\p\Vert & \text{ if } \mathrm{tr}(\p)= 0,\\
                +\infty & \text{ otherwise,}
            \end{cases}\\
        \H_{\mathcal{K}_0}(\Delta,D_1,D_2) & =
            \begin{cases}
                \tau_c\,\sqrt{\Delta^2+D_1^2+D_2^2} & \text{ if } \Delta = 0,\\
                +\infty & \text{ otherwise.}
        \end{cases}
    \end{align*}

\item \emph{Maximum average stress criterion}. It bounds the
trace of the stress tensor. It gives:
    \begin{align*}
        \K_0 & = \left\{ \sig\in\mathbb M^3_s: \frac{\mathrm{tr}(\sig)}{3}\leq \sigma_c \right\},\\
        \H_{\K_0}(\p) & =
            \begin{cases}  
                \sigma_c\,\mathrm{tr}(\p) & \text{ if } \mathrm{tr}(\p)\geq 0,\\
                +\infty & \text{ otherwise.}
            \end{cases}
    \end{align*}
    The intrinsic domain in this case is the whole space: $\mathcal{K}_0 = \mathbb{R}^3$. 

\end{itemize}
It is also possible to define a tension cut-off on the von Mises criterion, as done for Tresca's, but this is not reported here explicitly.

\begin{remark}
    \label{rem:jumpcompatibility}
    The intrinsic domain of the Drucker-Prager and the Mohr-Coulomb criteria coincide. 
 However, differently from Mohr-Coulomb criterion, only the points on $\partial\K_0$ with  
 $\sigII=(\sigI+\sigIII)/2-\sin\varphi_k(\sigI-\sigIII)/2$ can give a stress vector on the boundary of the intrinsic section. Only for these stress states the normal to $\partial \K_0$ is compatible with a displacement jump.
 Similarly, the Tresca and the von Mises criteria have the same intrinsic domain, but the normal to the von Mises strength domain is compatible with a displacement jump if and only $\sigII=(\sigI+\sigIII)/2$, while  all normal to the boundary of the  Tresca domain are jump-compatible.
 \end{remark}

 \begin{remark}
    When $k = 1/(2\sqrt{3})$, then $\varphi_k = \pi/2$ and the intrinsic domain coincides with the half-space $\Sigma\leq\sigma_c$. If $k > 1/(2\sqrt{3})$, all points in the $(\Sigma, T)$-plane are covered by Mohr'ss circles and the intrinsic domain becomes all of $\mathbb{R}^3$, with no intrinsic curve. Thus, displacement jumps are not possible in this case.
    \end{remark}

\section{The sharp-interface cohesive fracture model}
\label{sec:SharpLimit}
We are now in a position to  identify a sharp interface cohesive model that is compatible with the phase-field model in the limit \(\ell\to  0\).
As throughout the rest of the paper, our approach is for a large part formal. 
Because of the term \(\w(\alpha)/\ell\), as \(\ell \to 0\), the phase-field variable \(\alpha\) must vanish almost everywhere for the phase-field energy functional~\eqref{eq:phase-field-energy3} to remain finite.
Nevertheless, a nontrivial trace of \(\alpha\) persists along the surfaces where the displacement exhibits discontinuities. 
In the following, we infer the energy of the limit model and its evolution conditions by this requirement and the solution of the {multiaxial model problem} of the previous section.
Hence, we present several examples of  response of sharp interface models considering specific strength criteria among those introduced in Section~\ref{sec:strength}. 

\subsection{The limit energy and the governing equations in the bulk and on the jump sets}
Before proceeding with the formulation of a possible limit model, we simplify the energy expressions by introducing a change of variable for the damage parameter. Indeed, the expressions~\eqref{eq:Esur}--\eqref{eq:Phi} involve both functions \(\k(\alpha)\) and \(\w(\alpha)\), although  only a combination of the two is relevant in the limit $\ell\to 0$.
To this end, we define the normalized variable \(\hat\alpha\) and normalized strength degradation function $\hat\k(\hat\alpha)$
\begin{equation}
    \label{eq:hatalpha}
    \hat\alpha={f_\w}(\alpha),\qquad \hat\k(\hat\alpha) = \k(f_\w^{-1}(\hat{\alpha})),
\end{equation}
where ${f_\w}$ is the monotonically increasing  function introduced in Equation~\eqref{eq:Dis} and $f_\w^{-1}$ its inverse. The new damage variable $\hat\alpha$ represents the portion of the  energy dissipated in a localized solution with a maximum damage value  $\alpha$, with respect to the energy dissipated for a fully developed crack where $\alpha$ reaches 1. 
We provide below a specific example, that we will keep throughout this section.
\begin{example}[$\mathsf{M2}$ model]
    \label{ex:M2-sharp}
    For the model $\mathsf{M2}$  introduced in equation~\eqref{eq:model-LS}, the change of variable above  is particularly simple:
    \begin{equation*}
        f_\w(\alpha)=\alpha ^{2},\qquad\hat\k(\hat\alpha):=\k(f_\w^{-1}(\hat\alpha))=1-{\hat \alpha}^{\zeta/2},\quad \zeta\in[1,2).
    \end{equation*}
\end{example}

Assume that  in the limit  of $\ell\to 0$, the solution of the phase-field model are of finite energy.
Because of the term ${\w(\alpha)}/{\ell}$ in the energy, $\alpha$ must vanish almost everywhere on $\Omega$ when $\ell\to 0$. 
We identify the cracks as the sets of co-dimensions $1$ where $\alpha$ is not vanishing and the displacement may jump.
Denoting by $J_\u$ the surface where the displacement jumps, the material behavior is composed of a bulk contribution on $\Omega\setminus J_\u$ and a surface term on $J_\u$. 
We introduce the energy functional of the sharp interface model in the following form:
\begin{equation*}
        \mathcal{F}(\u,\hat\alpha)=
        \int_{\Omega\setminus J_\u} 
       \mathbf \psi_{0}(\veps(\u))
        \,\mathrm{d} V
        +
        \int_{J_\u}\phi_\n(\hat\alpha,\jump{\u}) \,\mathrm{d}S.
\end{equation*}

The following hypotheses mirror Hypotheses~\ref{hyp:kw} on the functions \(\k\) and \(\w\) in the phase-field model for the function \(\hat{\k}\) in the sharp-interface model.

\begin{hypothesis}
   The function \(\hat\k\) appearing in the definition of the surface energy~\eqref{eq:phin} after the change of variable~\eqref{eq:hatalpha} satisfies the following properties:

   \begin{enumerate}
    \item \(\hat\k\)  is continuous on $[0,1]$, continuously differentiable on $(0,1)$, and satisfies \(\hat\k(0) = 1\) and \(\hat\k(1) = 0\). 
    \item Its derivative \(\hat\k'\) is negative or zero, diverges to \(-\infty\) as \(\hat\alpha \to 0\), and is strictly increasing. In particular, \(\hat\k\) is strictly convex, and \(\hat\k'\) can vanish only at \(\hat\alpha = 1\).
   \end{enumerate}
\end{hypothesis}

Indeed, since
    \begin{equation*}
        \hat\k'(\hat\alpha):=
        \frac{\mathrm{d}\hat\k(\hat\alpha)}{\mathrm{d}\hat\alpha}=
        \frac{\mathrm{d}\k(\alpha)}{\mathrm{d}\alpha} \frac{\mathrm{d}{f_\w^{-1}}(\hat{\alpha})}{\mathrm{d}\hat\alpha}
        =
        \left.c_\w
        \frac{\k'(\alpha)}{\sqrt{\w(\alpha)}}
        \right\vert_{\alpha=f^{-1}_\w(\hat{\alpha})},
    \end{equation*}
the properties of $\hat\k'$ are consistent with the Hypothesis~\ref{hyp:kw}.
The divergence of $\hat\k'$ at $\hat\alpha = 0$ follows from the condition $\w(0) = 0$, in particular $f_\w'(0)=0$ and $(f_\w^{-1})'(0)=+\infty$.
The assumption \(\hat\k'(0) = -\infty\) ensures that  \(\hat\Phi(0) = 0\). As a result, the function \(\hat\Phi(\hat\alpha)\) increases from \(0\) to \(\Gc\) as \(\hat\alpha\) varies from \(0\) to \(1\). The variable \(\hat\alpha\) can thus be interpreted as the fraction of surface energy dissipated by damage, relative to the total Griffith surface energy.
    
Starting from the solution of the {multiaxial model problem} for the phase-field model with a finite \(\ell\), as established in Section~\ref{sec:model-problem-loc}, we find that consistency with the phase-field model in the limit \(\ell \to 0\) requires the evolution of the sharp-interface problem to satisfy the following conditions, both in the bulk region \(\Omega \setminus J_\u\) and across the cohesive interfaces \(J_\u\):

\begin{enumerate}
\item {\it  Bulk behavior.} 
In the bulk $\Omega\setminus J_\u$ the material behavior is nonlinear elastic without damage (standard Henky-like law) determined by  the linear elastic stiffness $\A$, the closed convex set of admissible stresses $\K_0$, and the normality rule for the nonlinear deformations:

\begin{equation}
    \label{eq:bulk-limit-behaviour}
    \sig= \A(\veps-\mathbf p),\quad
    \sig\in \K_0,\quad
    (\sig-\sig^*)\cdot\mathbf p\ge 0, \quad\forall\sig^*\in\K_0.
\end{equation}
The constitutive law can also be written in the form 
\begin{equation}
\sig\in\dfrac{\partial\psi_{0}}{\partial\veps}(\veps),
\label{eq:sig-sharp}
\end{equation}
where the derivative in~\eqref{eq:sig-sharp} is to be interpreted as a subdifferential wherever $\psi_0$ is not smooth.
\item {\it Surface behavior.} In a point of the jump set $J_\u$ with normal $\n$, the relation between displacement jumps $\jump{\u}$ and surface traction $\sig\n$ is a non-smooth interfacial law with an activation criterion, without elasticity, but exhibiting a progressive degradation of the activation threshold:
\begin{equation}
    \label{eq:surface-limit-behaviour}
    \jump{\u}\cdot\n\ge0,\quad
    \sig\n\in \k(\hat\alpha)\mathcal K_0(\n),\quad
    (\sig\n-\sig^*\n)\cdot\jump{\u}\ge 0\quad \forall \sig^*\n\in \k(\hat\alpha)\mathcal K_0(\n)
\end{equation}
and 
\begin{equation*}
    \dfrac{\partial\hat\alpha}{\partial t}\ge0,\quad
    \dfrac{\partial\phi_\n}{\partial\hat\alpha}(\hat\alpha,\jump{\u})\ge0,\quad
    \dfrac{\partial\phi_\n}{\partial\hat\alpha}(\hat\alpha,\jump{\u})\dfrac{\partial\hat\alpha}{\partial t}=0
\end{equation*}
\begin{equation}
    \label{eq:phin}
    \phi_\n(\hat\alpha,\jump{\u})=\hat\k(\hat\alpha)\H_{\mathcal K_0(\n)}(\jump{\u})+\Gc\hat\alpha.
\end{equation}

For an isotropic strength criterion, the surface energy density can  be expressed in terms of the support function of the intrinsic domain, thereby removing the explicit dependence on the normal \(\n\):
\begin{equation}
\phi(\hat\alpha,\mathbf d) = \hat\k(\hat\alpha)\, \H_{\mathcal K_0}(\mathbf d) + \Gc\, \hat\alpha,
\label{eq:surface-energy-density-intr}
\end{equation}
where \(\mathbf d \equiv (\Delta, D_1, D_2)\) is a triplet representing the displacement jump in Mohr's representation, which implicitly account for the dependence on $\n$.

In the sharp interface model, the surface energy density \(\Phi(\bar\alpha)\) defined in~\eqref{eq:Phi} becomes:
\begin{equation}
    \label{hatPhi}
    \hat\Phi(\hat\alpha) = \left( \hat\alpha - \frac{\hat\k(\hat\alpha)}{\hat\k'(\hat\alpha)} \right)\Gc.
\end{equation}

Rheologically, this corresponds to a frictional slider without a spring, whose slip threshold gradually decreases. The evolution of the threshold is itself described by a rate-independent law governed by an energy functional, under the standard framework of irreversibility, stability, and energy balance.
\end{enumerate}

The evolution law for the damage variable is merely a reinterpretation of the first-order stability condition applied to the discontinuity surfaces, see~\eqref{STsaut}, combined with the consistency condition provided by the energy balance, which states that damage evolves only when the activation criterion is met. Owing to the convexity of $\phi_\n$ with respect to $\hat\alpha$, for a fixed displacement jump, the evolution of $\hat\alpha$ can be analyzed locally: gradient terms vanish, and the problem reduces to the minimization of the surface energy.\\

\begin{remark}
        In the following, we leverage the evolution conditions~\eqref{eq:bulk-limit-behaviour}--\eqref{eq:surface-limit-behaviour} together with the loading condition~\eqref{eq:loading-condition} of the {multiaxial model problem} to construct explicit solutions for selected examples.
        In the time-discrete setting, a possible variational formulation for the evolution problem of the sharp-interface limit model defines the solution $(\u_{i+1}, \hat\alpha_{i+1})$ at time step $t_{i+1}$, given the damage field $\hat\alpha_{i}$ and the cumulated jump set $\Gamma_{i} := J_{\u_i} \cup \Gamma_{i-1}$ at time step $i$, as the solution to the following minimization problem:
\begin{equation*}
    \inf_{\u \in \mathcal{C}_{i+1}}
    \left(
    \min_{\hat\alpha \in \mathcal{A}_{i+1}(J_\u)}
    \mathcal{F}(\u, \hat\alpha)
    \right),
\end{equation*}
where $\mathcal{C}_{i+1}$ denotes the space of admissible displacements at step $i+1$, consisting of fields in $BD(\Omega)$ satisfying the hard-device loading condition at time $t_{i+1}$. 
For a candidate displacement $\u$, the damage field in the sharp-interface model is defined on $\Gamma := J_\u \cup \Gamma_{i-1}$. A possible functional setting for the damage field is
\begin{equation*}
    \mathcal{A}_{i+1}(J_\u) := \left\{ \hat\alpha \in L^\infty(J_\u \cup \Gamma_{i-1}) \;: \; \hat\alpha_{i} \leq \hat\alpha \leq 1 \right\}.
\end{equation*}
Although a time-continuous variational formulation based on a threefold evolution principle---irreversibility, stability, and energy balance---could also be considered, its statement requires several additional modeling choices and is therefore not pursued here. 
\end{remark}

\subsection{The cohesive law on the jump set}
We analyze here the cohesive interface law obtained in the sharp interface model. To this end, we solve the evolution problem for the damage $\hat\alpha$ on a generic point of the jump set $J_\u$ for a given displacement jump history $\{\jump{\mathbf u_t}\}_{t\in(t_0,t_1)}$. We treat first the general case, and then provide an explicit example for a specific model. We briefly discuss the effect of the damage irreversibility on the cohesive law when the displacement jump  is not monotonically increasing.

\begin{proposition}[Evolution of damage and cohesive force at a point on a discontinuity surface] 
    \label{P-Evolhat1}
    Suppose that the evolution of the displacement jump at a point of a discontinuity surface with normal vector $\n$ is given, for $t \ge 0$, by
    \[
    \jump{\mathbf u_t} = t \,\mathbf d_0,
    \]
    where $\mathbf d_0$ is a unit vector satisfying $\mathbf d_0 \cdot \n \ge 0$, and such that
    \(
    F_0 := \mathrm{H}_{\mathcal K_0(\n)}(\mathbf d_0) < +\infty.
    \)
    Then, the following properties hold:
\begin{enumerate}
    \item The damage evolution problem at this point, starting from an initial damage level $\hat\alpha_0 < 1$, admits a unique time-continuous solution, which is given by
\begin{equation*}
    \hat\alpha_t = 
    \begin{cases}
        \hat\alpha_0 & \text{if } t \le t_0:= \dfrac{\Gc}{|\hat\k'(\hat\alpha_0)| F_0}, \\
        (-\hat\k')^{-1} \left( \dfrac{\Gc}{t F_0} \right) & \text{if } t_0 < t < t_1, \\
        1 & \text{if } t \ge t_1:=\dfrac{\Gc}{|\hat\k'(1)| F_0}.
    \end{cases}
\end{equation*}
\item  The solution satisfies
\[
\hat\alpha_t = \argmin_{\hat\alpha_0 \le \hat\alpha \le 1} \, \phi_\n(\hat\alpha, t\, \mathbf d_0).
\] 
\item For $t \ge t_0$, the cohesive force, \textit{i.e.}, the component of the stress vector collinear with $\mathbf d_0$, has magnitude $F_t = \hat\k(\hat\alpha_t) F_0$, which decreases with time and vanishes at the finite time $t_1$ provided that $\hat\k'(1) \ne 0$. This magnitude coincides with the derivative of the minimum  surface energy density with respect to $t$, which is itself equal to $\hat\Phi(\hat\alpha_t)$ as defined in~\eqref{hatPhi}:
\begin{equation*}
\phi_\n(\hat\alpha_t, t\,\mathbf d_0) = \hat\Phi(\hat\alpha_t), \qquad
F_t = \frac{d}{dt} \phi_\n(\hat\alpha_t, t\,\mathbf d_0).
\end{equation*}
As a consequence, $\phi_\n(\hat\alpha_t, t\,\mathbf d_0)$ is a strictly increasing and strictly concave function of $t$ over the interval $(t_0, t_1)$, and remains constant at $\Gc$ for $t \ge t_1$, provided $t_1$ is finite.
\end{enumerate}
\end{proposition}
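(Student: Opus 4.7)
The plan is to exploit the fact that, with $\jump{\mathbf u_t}=t\,\mathbf d_0$ and the positive 1-homogeneity of $\H_{\mathcal K_0(\n)}$, the surface energy density reduces to the one-parameter family
\begin{equation*}
\phi_\n(\hat\alpha,t\,\mathbf d_0)=tF_0\,\hat\k(\hat\alpha)+\Gc\,\hat\alpha,
\end{equation*}
which is strictly convex in $\hat\alpha$ (since $\hat\k$ is strictly convex and $F_0>0$) and affine in $t$. The local evolution conditions stated after \eqref{eq:phin} then reduce to the scalar KKT system $\dot{\hat\alpha}_t\ge 0$, $tF_0\,\hat\k'(\hat\alpha_t)+\Gc\ge 0$, together with the complementarity between these two inequalities and the irreversibility constraint $\hat\alpha_0\le\hat\alpha_t\le 1$.

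First I would build $\hat\alpha_t$ explicitly, regime by regime. As long as $|\hat\k'(\hat\alpha_0)|\,tF_0<\Gc$, the stability inequality is strict at $\hat\alpha_0$, so complementarity forces $\dot{\hat\alpha}_t=0$; this yields the constant phase $\hat\alpha_t=\hat\alpha_0$ precisely for $t<t_0$. At $t=t_0$ the inequality becomes an equality. For $t>t_0$, maintaining it as an equality requires $|\hat\k'(\hat\alpha_t)|=\Gc/(tF_0)$, and since $-\hat\k'$ is by hypothesis positive, strictly decreasing, and maps $(0,1]$ onto $[|\hat\k'(1)|,+\infty)$, the inversion yields the stated formula $\hat\alpha_t=(-\hat\k')^{-1}(\Gc/(tF_0))$. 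Since $t\mapsto\Gc/(tF_0)$ is decreasing, $\hat\alpha_t$ is increasing, so irreversibility is automatically satisfied. This intermediate phase ends at the unique $t_1=\Gc/(|\hat\k'(1)|F_0)$ at which $\hat\alpha_t=1$; for $t\ge t_1$, the constraint $\hat\alpha\le 1$ is active and $\hat\alpha_t\equiv 1$ (with the natural convention $t_1=+\infty$ when $\hat\k'(1)=0$).

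Next I would identify the obtained $\hat\alpha_t$ with $\argmin_{\hat\alpha_0\le\hat\alpha\le 1}\phi_\n(\hat\alpha,t\,\mathbf d_0)$. The minimizer of this box-constrained strictly convex programme is unique, and its KKT conditions (interior stationarity together with complementary slackness for the two box constraints) coincide line by line with the evolution conditions written above, which in turn yields uniqueness of the time-continuous evolution.

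Finally, for the cohesive force and the identity $\phi_\n(\hat\alpha_t,t\,\mathbf d_0)=\hat\Phi(\hat\alpha_t)$, the main tool is the envelope theorem. Setting $g(t):=\min_{\hat\alpha}\phi_\n(\hat\alpha,t\,\mathbf d_0)$, $g$ is the infimum of a family of functions affine in $t$, hence concave; by the envelope formula, $g'(t)=\partial_t\phi_\n(\hat\alpha_t,t\,\mathbf d_0)=F_0\,\hat\k(\hat\alpha_t)=F_t$, which is the claimed cohesive force. Substituting the implicit stability relation $tF_0=-\Gc/\hat\k'(\hat\alpha_t)$ into the expression of $\phi_\n$ yields directly $\Gc\bigl(\hat\alpha_t-\hat\k(\hat\alpha_t)/\hat\k'(\hat\alpha_t)\bigr)=\hat\Phi(\hat\alpha_t)$ as in \eqref{hatPhi}. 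Strict concavity of $g$ on $(t_0,t_1)$ follows from the strict monotone decrease of $F_t=g'(t)$ there, which itself follows from $\hat\k$ strictly decreasing and $t\mapsto\hat\alpha_t$ strictly increasing; on $[t_1,+\infty)$, $\hat\k(1)=0$ forces $g(t)\equiv\Gc$. The main subtlety to treat carefully is the matching at $t_0$ and $t_1$: continuity of $\hat\alpha_t$ and of the active sets must be checked there to guarantee that $g$ is globally $C^1$ with $g'=F_t$ across the full interval, but this is a direct consequence of the continuity of $(-\hat\k')^{-1}$ on its domain.
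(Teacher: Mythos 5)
Your proposal is correct and follows essentially the same route as the paper: regime-by-regime construction of $\hat\alpha_t$ from the scalar complementarity system, identification with the minimizer via strict convexity of $\phi_\n$ in $\hat\alpha$, and recovery of $F_t$ and $\hat\Phi(\hat\alpha_t)$ by substituting the consistency relation $tF_0\hat\k'(\hat\alpha_t)+\Gc=0$. Your appeal to the envelope theorem and to concavity of an infimum of affine functions is a slightly cleaner packaging of what the paper obtains directly from the consistency condition and the monotone decrease of $F_t$, but it is the same argument in substance.
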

\begin{proof} 
\begin{enumerate}
    \item This is essentially a classical result, which follows from the strict convexity of $\phi_\n$ with respect to $\hat\alpha$. We briefly recall the main steps of the proof; see~\cite[Proposition 2.4]{BouFraMar08} for a similar result. One first shows that there exists a unique, continuous, and non-decreasing solution $\hat\alpha_t$ starting from $\hat\alpha_0$ that satisfies the inequality
    \[
    - t \,\hat\k'(\hat\alpha_t) F_0 \le \Gc, \quad \forall t \ge 0,
    \]
    and increases only when equality holds. If $\hat\alpha_0 > 0$, then the inequality is strict at $t=0$, and remains so by continuity until a time $t_0$ such that $t_0 |\hat\k'(\hat\alpha_0)| F_0 = \Gc$, which is finite due to the assumption on $\mathbf d_0$ and the fact that $\hat\k'$ can only vanish at $1$. Then, for $t > t_0$, equality holds at all times due to the negativity and monotonicity of $\hat\k'$, the non-decreasing character of $\hat\alpha$, and continuity (see the argument used to prove equality in~\eqref{STsaut2}). This equality yields an explicit expression for $\hat\alpha_t$ for all $t > t_0$, and $\hat\alpha_t$ reaches $1$ at a time $t_1$ such that $t_1 |\hat\k'(1)| F_0 = \Gc$, which is finite if $\hat\k'(1) < 0$ and infinite if $\hat\k'(1) = 0$. In the case where $\hat\alpha_0 = 0$, we simply have $t_0 = 0$, and damage starts evolving immediately from $t = 0$.
    
    \item We now show that the function $\hat\alpha_t$ introduced above indeed minimizes the energy at each time $t$. This is obvious for $t = 0$. For $t > 0$, the energy is a continuously differentiable and strictly convex function of $\hat\alpha$ at fixed $t$. As such, it reaches its minimum
    at $\hat\alpha_0$ if the derivative at $\hat\alpha_0$ is nonnegative,
 at $1$ if the derivative at $1$ is nonpositive,
     and at the unique point in $(\hat\alpha_0, 1)$ where the derivative vanishes, otherwise.
    Since the derivative reads
    \[
    \frac{\partial \phi_\n}{\partial \hat\alpha}(\hat\alpha, t \,\mathbf d_0) = t \,\hat\k'(\hat\alpha) F_0 + \Gc,
    \]
    this brings us back to the previous discussion involving the inequality criterion, which precisely governs the activation of damage and the evolution of $\hat\alpha_t$.
    
    \item During the evolution of $\hat\alpha$, the stress vector remains on the boundary of the intrinsic domain $\hat\k(\hat\alpha_t) \mathcal K_0(\n)$, which shrinks as damage increases. Since
    \[
    \sig_t \n \cdot \mathbf d_0 = \hat\k(\hat\alpha_t) \H_{\mathcal K_0(\n)}(\mathbf d_0),
    \]
    the magnitude of the cohesive force is $\hat\k(\hat\alpha_t) F_0$, and hence decreases with time.
    
    In terms of energy, we have
    \[
    \phi_\n(\hat\alpha_t, t\, \mathbf d_0) = t \hat\k(\hat\alpha_t) F_0 + \hat\alpha_t \Gc,
    \]
    which coincides with $\hat\Phi(\hat\alpha_t)$ defined in~\eqref{hatPhi}, since $t \,\hat\k'(\hat\alpha_t) + \Gc = 0$ when $t_0 < t < t_1$, and $\hat\alpha_t = 1$ for $t \ge t_1$. This directly implies that the energy increases with $t$.
    Differentiating $\phi_\n$ 
    with respect to $t$ yields that the cohesive force $F_t = \hat\k(\hat\alpha_t) F_0$ corresponds to the derivative of the energy, due to the consistency condition:
    \(
    \left( t \,\hat\k'(\hat\alpha_t) F_0 + \Gc \right) \frac{d \hat\alpha_t}{dt} = 0.
    \)
    From this, one deduces the concavity of the energy as a function of time.
    
\end{enumerate}
\end{proof}

Note that only the cohesive force necessarily vanishes as the displacement amplitude increases; the component of the stress vector orthogonal to $\mathbf d_0$ may remain unaffected by the damage process. An illustration of this behavior will be provided with the von Mises-type model developed in the following section.

This result can be extended to arbitrary displacement jump paths.
An analogous uniqueness result still holds, and the evolution of the damage variable can be made explicit as a function of the displacement jump trajectory. We state the result without proof, as the argument is a straightforward extension of the previous one.

\begin{proposition}
    \label{prop:Evolhat2}
    Starting from an initial damage level $\hat\alpha_0 \in [0,1)$, consider a displacement jump trajectory $t \mapsto \mathbf d_t$ originating from $\mathbf 0$, and satisfying $\mathbf d_t \cdot \n \ge 0$ and $\mathrm{H}_{\mathcal K_0(\n)}(\mathbf d_t) < +\infty$. Then, at time $t$, the damage variable $\hat\alpha_t$ depends only on 
    \[
        {H}^{*}_t: = \max_{s \le t} \mathrm{H}_{\mathcal K_0(\n)}(\mathbf d_s),
    \]
    and is given by
    \[
    \hat\alpha_t =
    \begin{cases}
    \hat\alpha_0 & \text{if } {H}^{*}_t \le \dfrac{\Gc}{|\hat\k'(\hat\alpha_0)|}, \\[1ex]
    (-\hat\k')^{-1}\left(\dfrac{\Gc}{ {H}^{*}_t}\right) & \text{if } \dfrac{\Gc}{|\hat\k'(\hat\alpha_0)|} <  {H}^{*}_t < \dfrac{\Gc}{|\hat\k'(1)|}, \\[1ex]
    1 & \text{if }  {H}^{*}_t\ge \dfrac{\Gc}{|\hat\k'(1)|}.
    \end{cases}
    \]    
\end{proposition}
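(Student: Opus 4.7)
The strategy is to reduce the general case to the monotone one already handled in Proposition~\ref{P-Evolhat1} by observing that the evolution conditions depend on the trajectory $\mathbf d_s$ only through the scalar quantity $\H_{\mathcal K_0(\n)}(\mathbf d_s)$, and that irreversibility promotes its pointwise values to the running supremum $H^*_t$. First I would recall the local evolution conditions at a point of the jump set: irreversibility $\dot{\hat\alpha}_s\ge 0$, the first-order stability inequality $-\hat\k'(\hat\alpha_s)\,\H_{\mathcal K_0(\n)}(\mathbf d_s)\le \Gc$, and the consistency condition that $\hat\alpha_s$ evolves only when stability is saturated. These are obtained, as in Proposition~\ref{P-Evolhat1}, by minimizing the strictly convex surface energy $\phi_\n(\hat\alpha, \mathbf d_s)$ in $\hat\alpha$ subject to the constraint $\hat\alpha\ge \hat\alpha_{s^-}$.

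Next I would exploit the monotonicity of $-\hat\k'$, which by the standing hypothesis is strictly decreasing on $[0,1]$ from $+\infty$ at $0^+$ to $|\hat\k'(1)|$ at $1$. For the \emph{lower bound}, at any $s\le t$ the stability inequality gives $\hat\alpha_s\ge (-\hat\k')^{-1}\!\bigl(\Gc/\H_{\mathcal K_0(\n)}(\mathbf d_s)\bigr)$ whenever the right-hand side is less than $1$, and $\hat\alpha_s=1$ otherwise. Combining with irreversibility $\hat\alpha_t\ge \hat\alpha_s$ and taking the supremum over $s\le t$, the decreasing character of $(-\hat\k')^{-1}$ allows us to commute sup and inverse:
\[
\hat\alpha_t\ge \max\!\Bigl(\hat\alpha_0,\ (-\hat\k')^{-1}(\Gc/H^*_t)\Bigr),
\]
with the convention that the right-hand term equals $1$ when $H^*_t\ge \Gc/|\hat\k'(1)|$.

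For the \emph{upper bound}, I would mimic the argument used in part~(1) of Proposition~\ref{P-Evolhat1}: by consistency, $\hat\alpha$ can only increase at times where the stability inequality is an equality, hence $\hat\alpha_t$ cannot exceed the minimal value compatible with stability at every $s\le t$, which is precisely the right-hand side displayed above. The strict convexity of $\phi_\n$ in $\hat\alpha$ ensures that this minimal value is also the unique continuous non-decreasing choice satisfying irreversibility, stability and consistency simultaneously, giving equality. Translating the condition $(-\hat\k')^{-1}(\Gc/H^*_t)\lessgtr\hat\alpha_0$ and $\lessgtr 1$ into the corresponding thresholds $H^*_t\lessgtr \Gc/|\hat\k'(\hat\alpha_0)|$ and $H^*_t\lessgtr \Gc/|\hat\k'(1)|$ yields the three cases stated.

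The main obstacle, as in Proposition~\ref{P-Evolhat1}, is to make the consistency-based upper bound rigorous in the time-continuous setting: one must ensure that no spurious jumps of $\hat\alpha_t$ can occur, which requires a continuity argument based on the strict monotonicity of $-\hat\k'$ and on the non-decreasing character of $H^*_t$. Once the monotone case is understood, however, the extension to an arbitrary trajectory is essentially a book-keeping exercise, since all dependence on $\mathbf d_s$ collapses into the scalar envelope $H^*_t$.
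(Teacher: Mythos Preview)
Your proposal is correct and aligns with the paper's intended approach: the paper explicitly states this result \emph{without proof}, noting only that ``the argument is a straightforward extension of the previous one'' (namely Proposition~\ref{P-Evolhat1}). Your sketch---collapsing the trajectory dependence into the scalar $\H_{\mathcal K_0(\n)}(\mathbf d_s)$, then combining the stability inequality, irreversibility, and consistency exactly as in the monotone case to obtain matching upper and lower bounds in terms of the running supremum $H^*_t$---is precisely that extension.
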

\begin{remark}
The knowledge of the displacement jump is sufficient to determine the damage variable. As a result, the question of the regularity of the damage reduces to that of the displacement jump itself. From a mathematical standpoint, the introduction of a memory variable on the discontinuity surfaces does not entail additional difficulties.
\end{remark}

\subsubsection{Example of a cohesive model and its associated response}
We consider the following family of functions for $\hat\k$, depending on a parameter $\zeta$:
\begin{equation}
\label{eq:kappak}
\hat\k(\alpha) = 1 - \alpha^{\hat\zeta} \quad \text{with} \quad 0 < \hat\zeta < 1,
\end{equation}
which satisfies all the required properties. 
As shown in Example~\ref{ex:M2-sharp}, this model can be regarded as the sharp interface limit of the model $\mathsf{M2}$~\eqref{eq:model-LS} with a parameter  $\zeta=2{\hat\zeta}$.
 
The limiting case $\hat\zeta \to 1$ corresponds to the Dugdale model, which is excluded here since it yields cohesive forces that are not strictly decreasing as damage increases. The limiting case $\hat\zeta \to 0$ leads to cohesive forces that drop immediately to zero (as in Griffith's model), yet starting from a finite initial value. As a consequence, the total cohesive energy vanishes, making the model physically unacceptable.

When $0 < \hat\zeta < 1$, since $\hat\k'(1) = -\hat\zeta < 0$, complete rupture with vanishing cohesive force is reached in finite time. The corresponding surface energy density $\hat\Phi(\alpha)$, which characterizes the degradation level of the surface, reads
\[
\hat\Phi(\alpha) = \frac{1}{\hat\zeta} \left( \alpha^{1 - \hat\zeta} - (1 - \hat\zeta)\alpha \right) \Gc.
\]

As shown in Proposition~\ref{P-Evolhat1}, the determination of the cohesive law requires solving the following minimization problem
\begin{equation}
\label{phistar}
\delta \ge 0 \mapsto \phi_*(\delta) := \min_{\alpha \in [0,1]} \left\{ \hat\k(\alpha) \delta + \alpha \right\}.
\end{equation}
With the choice~\eqref{eq:kappak} for $\hat\k$, the solution writes as:
\[
\alpha_* =
\begin{cases}
(\hat\zeta \delta)^{1/(1 - \hat\zeta)} & \text{if } 0 \le \hat\zeta\delta \le 1, \\
1 & \text{if } \hat\zeta\delta \ge 1,
\end{cases}
\qquad
\phi_*(\delta) =
\begin{cases}
\delta - \left( \dfrac{1}{\hat\zeta} - 1 \right) (\hat\zeta \delta)^{1/(1 - \hat\zeta)} & \text{if } 0 \le \hat\zeta\delta \le 1, \\[1ex]
1 & \text{if } \hat\zeta\delta \ge 1,
\end{cases}
\]
where $\alpha_*$ denotes the minimizer.

Under the conditions of Proposition~\ref{P-Evolhat1}, consider a surface element with normal $\n$ subjected to a displacement jump of increasing amplitude in the direction $\mathbf d_0$, such that $F_0 = \H_{\mathcal K_0(\n)}(\mathbf d_0) < +\infty$. Then, the surface energy and cohesive force as functions of the displacement amplitude $\|\mathbf d\|$ are given by
\begin{equation}
\label{phiFdelta}
\phi(\|\mathbf d\|) = \phi_*(\delta)\,\Gc, \quad 
F(\|\mathbf d\|) = \phi_*'(\delta)\, F_0
\quad \text{with} \quad
\delta = \frac{F_0 \|\mathbf d\|}{\Gc}.
\end{equation}
The graphs of $\phi$ and $F$ are shown in Figure~\ref{fig:ModeleCohesif} for three values of the parameter: $\hat\zeta = 1/4$, $1/2$, and $3/4$, corresponding to $\zeta=1/3$, $1$, $3$, respectively. The surface energy reaches $\Gc$, and the cohesive forces vanishes when $\delta$ reaches $1/\hat\zeta$. In particular, for $\hat\zeta = 1/2$, the decay of the cohesive forces is linear.

\begin{figure}[th] 
   \centering
   \includegraphics[width=\textwidth]{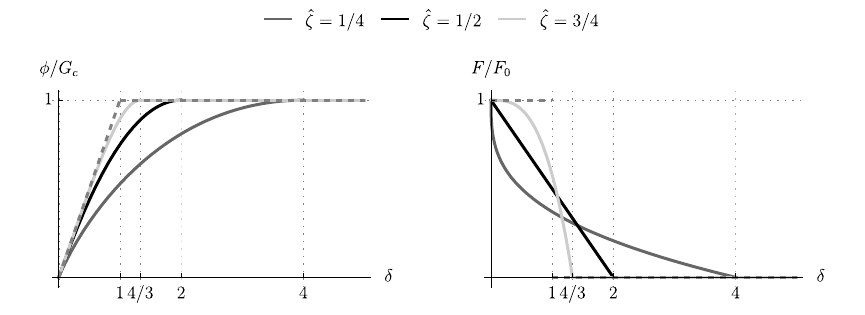} 
   \caption{Response of a surface element to an imposed displacement jump of increasing amplitude in a given direction, for cohesive models from the family~\eqref{eq:kappak} with $\hat\zeta = 1/4$, $1/2$, or $3/4$: the surface energy (left) and the cohesive force (right) are shown as functions of the non-dimensional displacement jump amplitude $\delta:=F_0\Vert\mathbf{d}\Vert/\Gc$. The dashed curve corresponds to the Dugdale model, \emph{i.e.}, the limiting case $\hat\zeta = 1$.}
   \label{fig:ModeleCohesif}
\end{figure}
If we repeat this loading test by unloading when $\delta$ reaches $\delta_1$, progressively reducing the displacement jump amplitude to zero, then, since damage no longer evolves, the cohesive force remains constant at the value reached at $\delta_1$, namely $F_0 \phi_*'(\delta_1)$, and the surface energy decreases linearly. When the displacement jump reaches zero, all the elastic surface energy has been recovered, but a residual surface energy $\hat\alpha_1 \Gc$ remains, corresponding to the damage level $\hat\alpha_1 = (\hat\zeta \delta_1)^{1/(\hat\zeta - 1)}$ attained during loading; see Figure~\ref{fig:EssaiCohesif} and Proposition~\ref{prop:Evolhat2}.
If the surface element is then reloaded in the same direction, the cohesive force stays fixed at $F_0 \phi_*'(\delta_1) = \hat\k(\hat\alpha_1) F_0$, and damage does not evolve as long as $\delta \le \delta_1$. As soon as $\delta > \delta_1$, damage starts increasing again, and both the surface energy and the cohesive force are once more given by~\eqref{phiFdelta}.
If we alternate such loading phases with unloading-reloading phases, we obtain a response as illustrated in Figure~\ref{fig:EssaiCohesif}: unloading-reloading phases are reversible, while overloading phases are irreversible, leading eventually to full damage $\hat\alpha = 1$ with surface energy $\Gc$ and vanishing cohesive force.

\begin{figure}[t] 
   \centering
   \includegraphics[width=\textwidth]{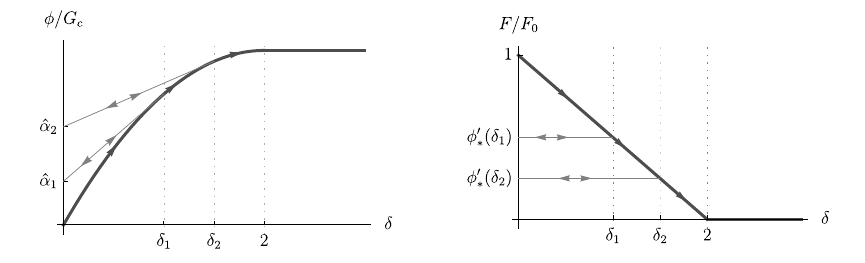} 
   \caption{Response of a surface element subjected to loading phases interspersed with unloading-reloading cycles in a prescribed displacement-jump direction for the model corresponding to $\hat\zeta = 1/2$ in equation~\eqref{eq:kappak}. 
   The unloading-reloading takes place at constant cohesive stress (light-gray lines), being purely elastic.}
   \label{fig:EssaiCohesif}
\end{figure}

\subsection{Examples with specific strength criteria}
\label{sec:Examples}
We present several examples of isotropic strength criteria to illustrate the proposed approach. 
For each case, we define the constitutive behavior of the sharp-interface model and discuss the dependence on the loading direction in the stress space through the solution multiaxial model problem of Section~\ref{sec:model-problem-formulation}. 

To characterize the cohesive law, we  adopt the Mohr representation of the stress vector and the displacement jump introduced in Section~\ref{sec:Mohr}, namely
$\boldsymbol{\sigma}\mathbf{n} = \Sigma\,\mathbf{n} + \mathbf{T}$ and $\mathbf{d} = \Delta\,\mathbf{n} + \mathbf{D}$,
where $\Sigma$ and $\Delta$ are the normal components, and $\mathbf{T}$ and $\mathbf{D}$ denote the tangential parts.

\subsubsection{von Mises criterion}
\label{sec:VM}

For the von Mises criterion,  the strength domain $\K_0$  in the stress space and the associated support functions $\H_{\K_0}(\p)$  are given in Section~\ref{sec:Mohr-DP}.
Thus, assuming isotropic elastic behavior, the bulk strain energy density is expressed as
\[
\varphi_{_0}(\boldsymbol{\varepsilon},\mathbf{p})=
\begin{cases}
    \mu_0\|\boldsymbol{\varepsilon}^D-\mathbf{p}\|^2+\frac{K_0}{2}(\textrm{tr}\,\boldsymbol{\varepsilon})^2+\sqrt{2}\tau_c\|\mathbf{p}\|&\text{if}\quad\textrm{tr}(\mathbf{p})=0,\\
    +\infty&\text{ otherwise}.
\end{cases}
\]
where $\mu_0>0$ and $K_0>0$ denote the shear and bulk moduli, respectively. Minimizing $\varphi_0(\boldsymbol{\varepsilon},\mathbf{p})$ with respect to $\mathbf{p}$ at fixed $\boldsymbol{\varepsilon}$ gives alternative expressions in terms of $\boldsymbol{\varepsilon}$  only:
\begin{equation}
\psi_{_0}(\boldsymbol{\varepsilon})=\frac{K_0}{2}(\textrm{tr}\,\boldsymbol{\varepsilon})^2+
\begin{cases}
\mu_0\|\boldsymbol{\varepsilon}^D\|^2 & \text{if }\|\boldsymbol{\varepsilon}^D\|\leq\dfrac{\tau_c}{\sqrt{2}\mu_0}\\[8pt]
\sqrt{{2}}\tau_c\|\boldsymbol{\varepsilon}^D\|-\dfrac{\tau_c^2}{2\mu_0} & \text{ otherwise.}
\end{cases}
\label{psiVM}
\end{equation}
The energy density is continuously differentiable and exhibits linear growth with respect to the deviatoric strain once the latter is sufficiently large. Stresses follow by differentiation.

This criterion is isotropic since it involves only invariants of $\boldsymbol{\sigma}$ and is unbounded in the direction of spherical tensors. The associated intrinsic domain $\mathcal{K}_0$ in the Mohr's representation is the circular cylinder given in Section~\ref{sec:Mohr-DP}.
Its meridian section is the strip $\mathcal{S}_0=\mathbb{R}\times[-\tau_c,\tau_c]$. Figure~\ref{fig:examples-VM} shows the strength domain (gray region in left figure) and the corresponding intrinsic section (gray region in right figure).

The surface energy density~\eqref{eq:surface-energy-density-intr} of the limit cohesive model is determined by the support function of the intrinsic domain $\mathcal{K}_0$ in the Mohr's coordinates.
Hence, for a jump surface with normal $\mathbf{n}$ it is given by
\begin{equation*}
\phi_{\mathbf{n}}(\alpha,\mathbf{d})=
\begin{cases}
\hat{\k}(\alpha)\tau_c\|\mathbf{d}\|+\alpha\,\Gc & \text{if }\mathbf{d}\cdot\mathbf{n}=0,\\
+\infty & \text{ otherwise.}
\end{cases}
\end{equation*}

The condition $\Delta=\mathbf{d}\cdot\mathbf{n}=0$  implies  that the displacement jump must be purely tangential.
Cohesive forces correspond to tangential stresses $\mathbf{T}$, while normal stresses $\Sigma$ do not perform work on displacement jumps.
The cohesive law for the tangential stress is in the form characterized  by Proposition~\ref{prop:Evolhat2} and illustrated in Figures~\ref{fig:ModeleCohesif}-\ref{fig:EssaiCohesif} for $\hat\k$ given in~\eqref{eq:kappak} with 
$$
F=\Vert \mathbf{T}\Vert,\quad\delta=\frac{\tau_c \Vert \mathbf{D}\Vert}{\Gc},\quad F_0=\tau_c.
$$

\noindent This cohesive model has two limitations that deserve specific comments:
\begin{itemize}
    \item 
The strength domain $\K_0$ is unbounded in the direction of spherical tensors. A primary consequence is that if the stress direction imposed is the identity tensor $\mathbf{I}$, crack nucleation does not occur, and the material response remains linearly elastic regardless of loading magnitude. Another consequence is that, on any fully damaged crack ($\hat{\alpha}=1$), where cohesive forces vanish, the stress vector is not necessarily zero since normal stresses remain unaffected by damage.

\item As discussed in Remark~\ref{rem:jumpcompatibility}, for $\K_0$ corresponding to the von Mises criterion it is possible to create a displacement jump when the stress $\boldsymbol{\sigma}\in\partial\K_0$ only if the ordered eigenvalues of $\sig$ verify the condition $\sigII=(\sigI+\sigIII)/2$. This means that it allows for crack nucleation only when this condition is met.
\end{itemize}

To illustrate these points, we consider a test on a cube  as in the {multiaxial model problem} of Section~\ref{sec:fundamentalProblem}, but now with the sharp-interface model obtained for $\ell\to 0$. As in Section~\ref{sec:fundamentalProblem}, for any loading direction $\s$ in the stress space, we assume that the cube size $L$ is sufficiently small with respect to the elasto-cohesive length $\ell_{ch}^\s$ so as to avoid snap-back in the evolution.

\begin{figure}[h!] 
    \begin{center}
    \begin{minipage}[c]{0.5\textwidth}
    \centering
    \includegraphics[width=\textwidth]{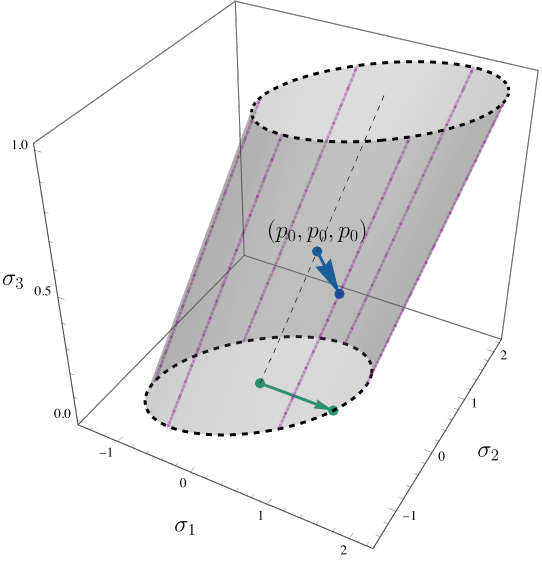}
  \end{minipage}%
  \hspace{0.02\textwidth}%
  \begin{minipage}[c]{0.4\textwidth}
    \centering
    \vspace*{0.02\textheight}
    \includegraphics[width=\textwidth]{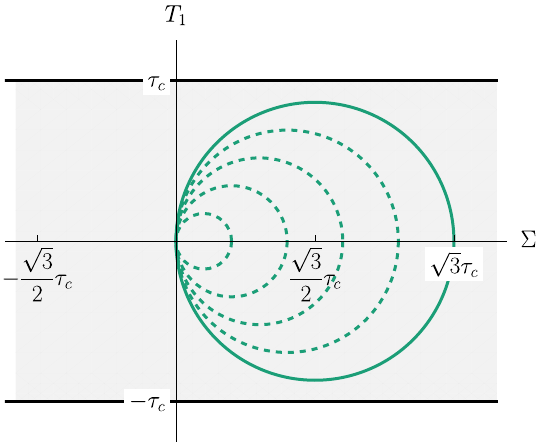}\medskip\\

    \vfill

    \includegraphics[width=\textwidth]{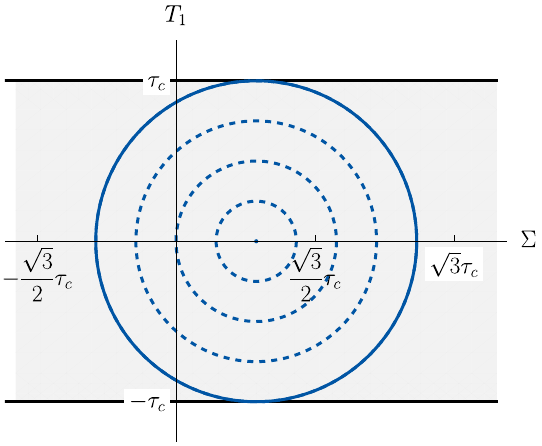}
  \end{minipage}
    \end{center}
    \caption{
Example with the von Mises strength criterion. {Left:} The cylinder represents the strength surface $\partial \K_0$ in stress space. The purple lines highlight portions of the surface where the normal is compatible with a displacement jump. The green path corresponds to a uniaxial loading example, while the blue path represents a shear loading superimposed with an isotropic (spherical) prestress $p_0$. {Right:} Mohr's representation showing the intrinsic section $\mathcal{S}_0$ and the evolving Mohr's circles corresponding to the stress paths: green for uniaxial tension (top) and blue for shear with superimposed isotropic prestress $p_0$ (bottom).
}
   \label{fig:examples-VM}
\end{figure}
 As a first example, let us consider the case of a uniaxial traction test where $\s = \mathbf{e}_1 \otimes \mathbf{e}_1$ for some unit vector $\mathbf{e}_1$, in green in Figure~\ref{fig:examples-VM}. The stress tensor $\sig_t = \sigma_t \s$ has the trivial eigenvalues $\sigI = \sigma_t$, $\sigII = \sigIII = 0$, and its magnitude is controlled through the loading condition~\eqref{eq:loading-condition} on the average strain $\bar{\veps}^\s(\u)$, coinciding here with the space-average of the deviatoric part of the strain tensor.
The stress reaches the boundary of the strength domain $\K_0$ when $\sigma_t = \sqrt{3}\,\tau_c$. 
However, since $\sigII \neq (\sigI + \sigIII)/2$, it is not possible to nucleate a crack in this case. For larger loading, the response remains elastic with constant stress $\sqrt{3}\,\tau_c\,(\mathbf{e}_1 \otimes \mathbf{e}_1)$ and homogeneous strains including a nonlinear contribution governed by the normality rule.
The damage remains null, and the evolution is similar to that discussed in~\cite{Francfort2015,FRANCFORT2016125} for the case of von Mises elasto-plasticity.

Let us now consider a mixed loading test where the stress tensor  includes a fixed isotropic pre-stress $p_0$ and an evolving shear component $\tau_t$
\begin{equation*}
\sig_t = p_0 \,\mathbf{I} + \tau_t \,\n\odot\t,
\end{equation*}
where the amplitude of the shear component $\tau_t\geq 0$ is controlled by the loading condition  $\veps^\s(\u_t)=t$ on the average strain in the direction $\s= {(\n\odot\t)}/{\sqrt{2}}$.
In this case, represented in blue in Figure~\ref{fig:examples-VM}, $\sigI = p_0+\tau_t$, $\sigII =p_0$, $ \sigIII = p_0-\tau_t$. Since, $\sigI+\sigIII=2\sigII$, it is possible to nucleate a cohesive crack with normal $\n$ when the stress reaches the yield criterion, \emph{i.e.} when $\tau_t=\tau_c$. 
The shear then gradually decreases as $\tau_t = \hat{\kappa}(\hat{\alpha}_t)\tau_c$ with increasing damage. Meanwhile, the normal stress $\boldsymbol{\sigma}_t\mathbf{n} \cdot \mathbf{n}$ remains constant and equal to $p_0$. When $\hat{\alpha} = 1$, the shear stress vanishes, but the normal stress still equals $p_0$. In other words, the von Mises criterion leads to the formation of a cohesion-less shear band rather than a genuine Griffith crack, since the resulting jump set can still support normal stresses. The residual intrinsic domain reduces to the half-line $(-\infty, 0] \times \{0, 0\}$.
Displacement jumps remain tangential, and at full damage ($\hat{\alpha} = 1$), the tangential component becomes completely unconstrained, while the normal jump remains zero. In terms of surface energy, the condition $\phi_{\mathbf{n}}(1, \mathbf{d}_0) = \Gc$ for some directions $\mathbf{d}_0$ is not sufficient; it must hold for all directions to characterize a true Griffith-type crack.\\

\noindent Two remedies can partially address this intrinsic limitation of the von Mises criterion:

\begin{enumerate}
\item Elastic moduli, notably the bulk modulus, can be made damage-dependent, as in~\cite{AleMarVid15}, ensuring the material cannot sustain stresses once damage reaches unity. However, this approach suffers drawbacks: volumetric behavior no longer explicitly involves damage at the limit $\ell\to 0$ (thus allowing any spherical stress). This retains the weaknesses of standard Griffith's theory and the related classical phase-field approximations.

\item Combining the von Mises criterion with a constraint on normal stresses (\emph{e.g.} maximum tension criterion). Here, both tangential and positive normal stresses continuously decrease to zero as damage approaches unity. Only negative normal stresses remain possible at the limit, which inherently cannot vanish due to non-penetration conditions. This results in a satisfactory cohesive crack model, that is particularly relevant when replacinging the von Mises criterion with Tresca, as it will be discussed in Section~\ref{sec:examples-tresca-cap}.
\end{enumerate}

\subsubsection{The antiplane framework and link with an existing Gamma-convergence result}

Let us consider the case of an antiplane setting as a particular reduction of the general von Mises model. The domain is taken to be a cylindrical region of the form $\Omega = \omega \times \mathbb{R}$, where $\omega$ is a bounded planar domain in the $(x,y)$-plane spanned by the basis vectors $\mathbf{e}_x$ and $\mathbf{e}_y$, and $\mathbf{e}_z$ denotes the unit vector orthogonal to this plane. The displacement field is assumed to be of the form $\mathbf{u}(\mathbf{x}) = u(x,y)\,\mathbf{e}_z$, and is thus described by the scalar function of two scalar variables $u$.
Only the $xz,yz$ (and symmetric) components of the strain and stress tensors are non-zero, and these components depend solely on the in-plane coordinates $(x, y)$. The nonlinear strain is thereby purely deviatoric and can be represented by a planar vector $\mathbf{p}$;  the model becomes effectively two-dimensional.
As a result, only the shear modulus $\mu_0$ contributes to the energy; the bulk modulus plays no role. 

In this setting, a $\Gamma$-convergence result is established in~\cite{DalOrlToa16} for a phase-field model based on the von Mises strength criterion, showing convergence towards a cohesive fracture model. Although currently unique in the literature, this result remains limited in scope: it addresses only anti-plane configurations and does not incorporate the irreversibility of the phase-field. 

The modeling assumptions adopted in~\cite{DalOrlToa16} differ in several respects from those considered here, as summarized below for comparison.
The authors consider the phase-field energy 
\[
\mathcal{E}_\ell(u, \mathbf{p}, \alpha) = \int_\Omega \left( \frac{1 - \alpha}{2}\mu_0\, (\nabla u - \mathbf{p}) \cdot (\nabla u - \mathbf{p}) + \k(\alpha)\, {\tau_c}\, \|\mathbf{p}\| +\frac{\Gc}{4c_\w} \left(\frac{\w(\alpha)}{\ell}\,  +  \ell\, \|\nabla \alpha\|^2 \right)\right) \dS.
\]

Unlike the present model, the shear modulus in~\cite{DalOrlToa16} depends explicitly on the phase-field variable and is defined as $\mu(\alpha) = (1 - \alpha)\mu_0$. The last term proportional to $\Gc$ coincides with the one consider in the present work, under similar assumptions on the potential function $\w(\alpha)$. As for the function $\k‡(\alpha)$ governing the evolution of the yield criterion, it is assumed to be merely non-increasing, with $\k(0) = 1$, and allowed to vanish only at $\alpha = 1$ (but not necessarily satisfying $\k(1) = 0$).
For technical reasons, admissible values of $\alpha$ are bounded from below by a parameter that tends to $0$ faster than $\ell$. This ensures that the shear modulus remains strictly positive. 
In this setting, the internal variable $\mathbf{p}$ can be eliminated by minimizing the energy $\mathcal{E}_\ell(u, \mathbf{p}, \alpha)$ with respect to $\mathbf{p}$, yielding a two-field formulation. It is then shown in~\cite{DalOrlToa16} that minimizers $(u_\ell, \alpha_\ell)$ of the phase-field energy converge to a minimizer $u$ of a limiting energy defined over functions of bounded variation:

\[
\mathcal{E}_0(u) = \int_{\Omega \setminus S_u} \hat{\psi}_0(\nabla u)\,dS + {\tau_c}  \,|D^c u|(\Omega) + \int_{S_u} \phi(|\jump{u}|)\,\dS,
\]
with 
\begin{equation} \label{Phi0}
\phi(d) = \min\left\{ \Gc,\ \phi_*\left( \frac{\tau_c d}{\Gc} \right) \Gc \right\},
\end{equation}
where $\phi_*$ is defined in~\eqref{phistar}. Here, $\hat{\psi}_0$ denotes the elastic energy density for antiplane deformations as defined in~\eqref{psiVM}, and $|D^c u|(\Omega)$ is the Cantor part of the gradient of $u$.
To obtain the expression~\eqref{Phi0} for the surface energy, one must use the change of variable $\alpha \mapsto \hat{\alpha}$ defined in~\eqref{eq:hatalpha}. \\

\noindent This result prompts several remarks:

\begin{itemize}
\item The elimination of the phase-field variable is possible only because irreversibility is not enforced. However, it is conceivable that the $\Gamma$-convergence result could be extended to account for irreversibility.
\item When $\hat{\k}(1) = 0$, one has $\phi_*(\delta) \le 1$, and thus the surface energy $\phi$ is exactly given by $\phi_*$, which corresponds to the limiting model proposed in the previous section. Without this assumption, $\phi_*$ grows linearly and exceeds $1$ beyond a certain value of $d$, reflecting the fact that the material can still carry stress and the cohesive force does not vanish. In that case, the only way to recover the Griffith-type surface energy is via the dependence of the shear modulus on $\alpha$ and its vanishing as $\alpha \to 1$. One may therefore reasonably conjecture that the $\Gamma$-convergence result would still hold if $\mu$ were independent of $\alpha$, provided $\hat{\k}(1) = 0$. Of course, a full proof would require reworking the entire argument.
\item The authors report that they were unable to extend the $\Gamma$-convergence result to the fully three-dimensional setting, citing difficulties associated with enforcing the constraint $\mathrm{tr}\left(\mathbf{p}\right) = 0$ in 3D. This limitation may stem from the fact that the limiting model cannot exhibit fully cohesive behavior in three dimensions as discussed above: the outer normals to the von Mises strength domain $\K_0$ are not, in general, compatible with a displacement jump in the 3D setting, whereas such compatibility holds in the special case of antiplane deformation.
\item Throughout the present work, we omit the Cantor term in the expression of the energy of the limiting cohesive model, although its presence cannot be excluded. 
\end{itemize}

\subsubsection{A linearly incompressible material with a Drucker-Prager criterion}\label{S-Incompressible}
\label{sec:example-DP-incomp}
We consider a material where the stress are limited by the Drucker-Prager criterion introduced in Section~\ref{sec:strength}. The corresponding expression of the strength domain $\K_0$ in the stress space and the  support function are given by Equations~\eqref{eq:DP} and~\eqref{piDP}. The intrinsic domain $\mathcal{K}_0$ and its support function coincide with the one of the Mohr-Coulomb criterion. 
At any point on the intrinsic curve other than the apex $(\sigma_c, 0)$, the outer unit normal $\mathbf{d}_0$ of $\mathcal{S}_0$ satisfies $\mathbf{d}_0 \cdot \mathbf{n} = \sin \varphi_k$, while at the apex, the only requirement is $\mathbf{d}_0 \cdot \mathbf{n} \ge \sin \varphi_k$, see Figure~\ref{fig:examples-DP}-left.
\begin{figure}[t]
    \begin{center}
\includegraphics[width=.8\textwidth]{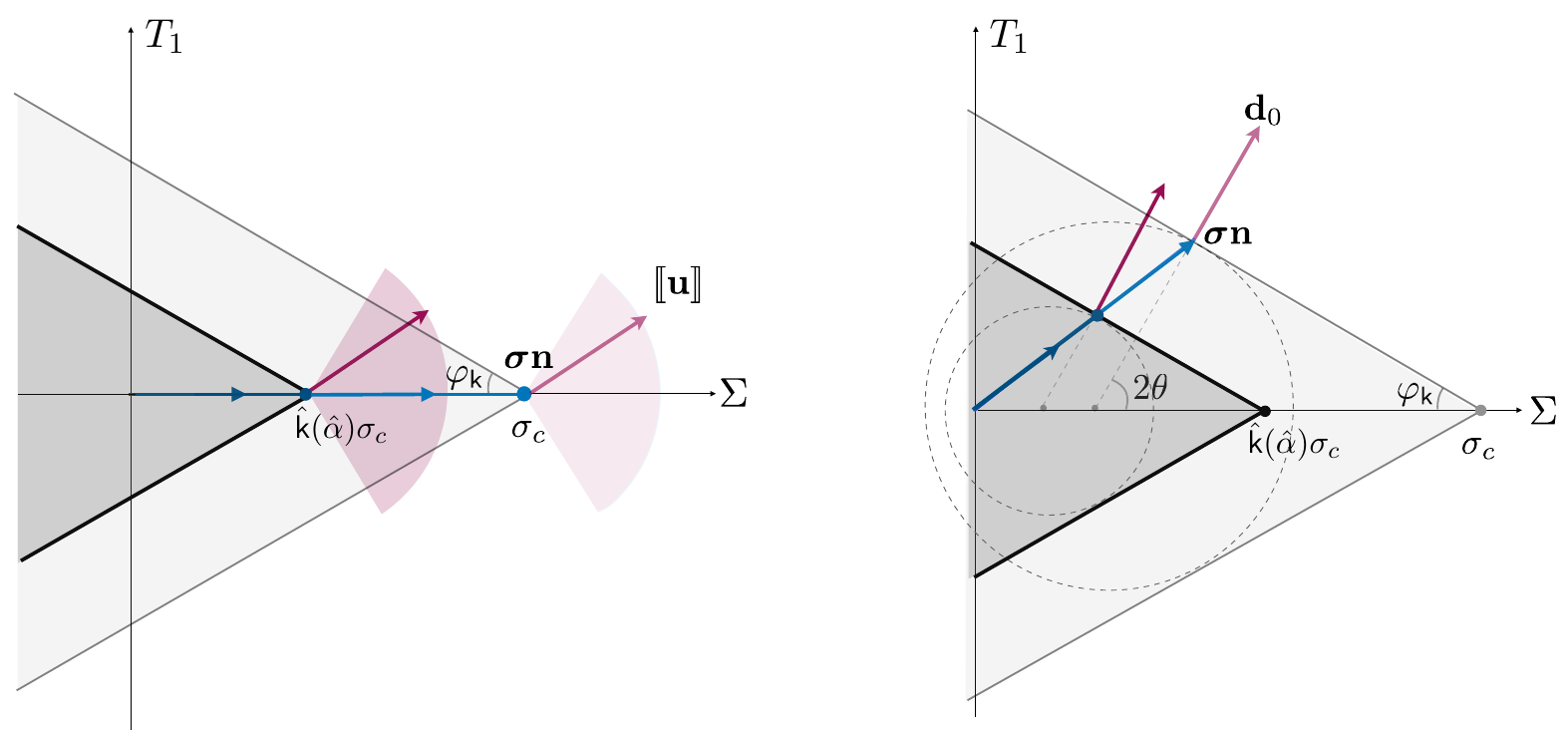}
    \end{center}
\caption{
Mohr's representation of the stress evolution under the Drucker-Prager criterion for an incompressible material in the multiaxial model problem defined in Problem~\ref{pb:fundamental-problem}. Left: Purely isotropic loading direction. Right: Loading direction with a deviatoric component, compatible with a displacement jump, see Eq.~\eqref{eq:DP-direction}. Blue arrows indicate the loading paths; pink arrows represent admissible normals to the intrinsic surface (not uniquely defined at the apex). A cohesive crack nucleates when the stress state reaches the boundary of the intrinsic surface. The intrinsic surfaces, shown in gray, progressively shrink during loading due to damage accumulation.}
\label{fig:examples-DP}
\end{figure}

We consider the specific case in which the material exhibits \textit{incompressible} linear behavior, characterized by an infinite bulk modulus ($K_0 \to \infty$) and a Poisson's ratio of $1/2$. Although this scenario slightly deviates from the general framework, minor modifications enable the reuse of the main analytical results. Nonlinear compressibility arises once the stress reaches the initial strength threshold. The volumetric strain energy density, expressed as a function of the strain and internal strain variables, is given by:
\[
\varphi_{0}(\boldsymbol{\varepsilon}, \mathbf{p}) = 
\begin{cases}
\mu_0 (\boldsymbol{\varepsilon}^D - \mathbf{p}^D) \cdot (\boldsymbol{\varepsilon}^D - \mathbf{p}^D) + \mathcal{H}_{\mathbb{K}_0}(\mathbf{p}) & \text{if } \mathrm{tr}(\boldsymbol{\varepsilon}) = \mathrm{tr}(\mathbf{p}), \\\\
+\infty & \text{ otherwise}.
\end{cases}
\]
Using~\eqref{piDP}, this becomes a function of $(\boldsymbol{\varepsilon}, \mathbf{p}^D)$:
\begin{equation*}
\varphi_{0}(\boldsymbol{\varepsilon}, \mathbf{p}^D) = 
\begin{cases}
\mu_0 (\boldsymbol{\varepsilon}^D - \mathbf{p}^D)^2 + \sigma_c\,\mathrm{tr}(\boldsymbol{\varepsilon}) & \text{if } \mathrm{tr}(\boldsymbol{\varepsilon}) \ge 3\sqrt{2}k \|\mathbf{p}^D\|, \\\\
+\infty & \text{ otherwise}.
\end{cases}
\end{equation*}

Minimizing $ \varphi_{0} $ over $ \mathbf{p}^D $ yields the elastic strain energy:
\begin{equation*} 
\psi_{0}(\boldsymbol{\varepsilon}) =
\begin{cases}
\mu_0 \left( \left( \| \boldsymbol{\varepsilon}^D \| - \dfrac{\mathrm{tr}(\boldsymbol{\varepsilon})}{3\sqrt{2}k} \right)^+ \right)^2 + \sigma_c \, \mathrm{tr}(\boldsymbol{\varepsilon}) & \text{if } \mathrm{tr}(\boldsymbol{\varepsilon}) \ge 0, \\\\
+\infty & \text{ otherwise},
\end{cases}
\end{equation*}
where $x^+ = \max\{0, x\}$. The stress-strain relationship is summarized below:
\[
\begin{cases}
\boldsymbol{\varepsilon}^D = \dfrac{\boldsymbol{\sigma}^D}{2\mu_0} \text{ and } \mathrm{tr}(\boldsymbol{\varepsilon}) = 0 & \text{ if } \|\boldsymbol{\sigma}^D\| < k\sqrt{2}(3\sigma_c - \mathrm{tr}(\boldsymbol{\sigma})), \\
\boldsymbol{\varepsilon}^D = \left( \dfrac{1}{2\mu_0} + \dfrac{\mathrm{tr}(\boldsymbol{\varepsilon})}{3\sqrt{2}k\|\boldsymbol{\sigma}^D\|} \right) \boldsymbol{\sigma}^D \text{ and } \mathrm{tr}(\boldsymbol{\varepsilon}) \ge 0 & \text{ if } \|\boldsymbol{\sigma}^D\| = k\sqrt{2}(3\sigma_c - \mathrm{tr}(\boldsymbol{\sigma})) \ne 0, \\
\mathrm{tr}(\boldsymbol{\varepsilon}) \ge 3\sqrt{2}k\|\boldsymbol{\varepsilon}^D\| & \text{ if } \boldsymbol{\sigma}^D = \mathbf{0} \text{ and } \mathrm{tr}(\boldsymbol{\sigma}) = 3\sigma_c.
\end{cases}
\]

We now study the {multiaxial model problem} posed on a cube $\Omega_L$ with one face normal to $\mathbf{n}$, under two distinct loading directions: first with purely spherical stress, then with a deviatoric component. In both cases, we seek a solution with uniform stress.

\paragraph{{Spherical stress loading direction.}} Consider the {multiaxial model problem} on the cube where the imposed stress direction is the identity tensor $\mathbf{I}$, and loading is controlled via the average volumetric deformation $\int_{\Omega_L} \mathrm{tr}(\boldsymbol{\varepsilon}_t)\,dV = tL^3$. We look for a solution with spherical stress: $\boldsymbol{\sigma}_t(\mathbf{x}) = \sigma_t(\mathbf{x})\mathbf{I}$. Equilibrium then requires $\nabla\sigma_t = 0$, so the field is uniform: $\boldsymbol{\sigma}_t(\mathbf{x}) = \sigma_t\mathbf{I}$. The loading condition imposes a volume change from $t>0$. 

Let us examine the homogeneous response: since the material is initially incompressible, stresses reach the threshold $\sigma_c$ immediately for $t > 0$, and there is no linear elastic phase. In stress space, this corresponds to the apex $(\mathrm{tr}(\boldsymbol{\sigma})/3=\sigma_c, \boldsymbol{\sigma}^D = \mathbf{0})$ of the cone $\mathbb{K}_0$, and in the Mohr representation, to the apex $(\Sigma = \sigma_c, 0, 0)$ of $\mathcal{K}_0$. As this point lies on the intrinsic curve, the homogeneous solution is unstable and a cohesive crack nucleates immediately for $t > 0$ with damage appearing where displacements are discontinuous. Thus, $\sigma_t < \sigma_c$ for all $t > 0$. 

In the non-cracked part, $\boldsymbol{\varepsilon}_t = \mathbf{0}$, and displacements are rigid body motions in each connected component. Since a global volume change is imposed, the crack must split the cube into at least two connected components as soon as $t > 0$. The crack normal $\mathbf{n}$ can be arbitrary, as all directions are eigenvectors of the stress tensor. In quasi-static conditions, infinitely many such configurations are admissible.
We therefore consider the case where the crack is a plane section $\Gamma$ through the center of the cube with normal $\mathbf{n}^*$, dividing it into two parts $\Omega_L^\pm$. Each part undergoes a rigid motion, and the displacement jump across $\Gamma$ is:
\[ \jump{\mathbf{u}_t}(\mathbf{x}') = \boldsymbol{\xi}_t + \boldsymbol{\omega}_t \wedge \mathbf{x}', \]
where $\mathbf{x}'$ is the position on $\Gamma$ and $\mathbf{x}' = \mathbf{0}$ corresponds to the cube center. The loading implies $\boldsymbol{\xi}_t \cdot \mathbf{n}^* = tL^3 / \text{area}(\Gamma)$, so (almost) all points on $\Gamma$ have nonzero displacement jump. This requires stress to reach the threshold: $\sigma_t = \hat{\k}(\alpha_t(\mathbf{x}'))$, and thus damage is uniform on $\Gamma$: $\alpha_t(\mathbf{x}') = \alpha_t$.

Using~\eqref{piDP}, the damage criterion becomes:
\begin{equation}\label{criloc}
\hat{\k}'(\alpha_t)\sigma_c\jump{\mathbf{u}_t}(\mathbf{x}') \cdot \mathbf{n}^* + \Gc \ge 0,\quad\forall\,\mathbf{x}' \in \Gamma,
\end{equation}
with equality when damage evolves. Integrating over $\Gamma$ and using the loading, we obtain:
\begin{equation}\label{criglo}
\hat{\k}'(\alpha_t)\sigma_c t L^3 + \Gc \text{area}(\Gamma) \ge 0.
\end{equation}
Since damage appears for $t > 0$, equality holds near $t=0$ and persists as long as $\alpha_t < 1$. From~\eqref{criglo} we deduce the damage evolution, and from~\eqref{criloc} the uniform normal displacement jump:
\[ \alpha_t = (-\hat{\k}')^{-1}\left( \frac{\Gc\, \text{area}(\Gamma)}{\sigma_c t L^3} \right), \quad \jump{\mathbf{u}_t}(\mathbf{x}') \cdot \mathbf{n}^* = \frac{t L^3}{\text{area}(\Gamma)}. \]
The relative rotation between $\Omega_L^+$ and $\Omega_L^-$ must be around $\mathbf{n}^*$: $\boldsymbol{\omega}_t = \omega_t \mathbf{n}^*$. The tangential translation and rotation remain partially arbitrary, constrained only by:
\[ \| \jump{\mathbf{u}_t} \| \sin\varphi_k \le \jump{\mathbf{u}_t} \cdot \mathbf{n}^* = \frac{tL^3}{\text{area}(\Gamma)}. \]

\paragraph{{Non-spherical stress loading direction.}} 
We now impose a non-spherical stress direction $\mathbf{s}$ on the cube. 
Only  specific loading directions are compatible with displacement jumps. Indeed, only the points on $\partial\K_0$ with  
 $$2\sigII=\sigI(1-\sin\varphi_k)+\sigIII(1+\sin\varphi_k)$$  
 give stress vectors on the boundary of the intrinsic section (see Remark~\ref{rem:jumpcompatibility} in Section~\ref{sec:Mohr-DP}). Hence, a crack can nucleate only if the loading direction respects this condition. 
 
 Let us consider a loading direction $\s$ in the stress space with eigenvalues
\[ s_\mathrm{I} = 1 - \eta,\quad s_\mathrm{II} = 1 - \eta\sin\varphi_k,\quad s_\mathrm{III} = 1 + \eta,\quad \eta > 0 \]
and eigenvectors $\mathbf{e}_\mathrm{I}$, $\mathbf{e}_\mathrm{III}$ in the plane defined by directions $\mathbf{n}$ and $\mathbf{t}$:
\begin{equation}
    \label{eq:DP-direction}\mathbf{e}_\mathrm{I} = -\sin\theta\,\mathbf{n} + \cos\theta\,\mathbf{t},\quad \mathbf{e}_\mathrm{III} = \cos\theta\,\mathbf{n} + \sin\theta\,\mathbf{t},\quad \theta = \frac{\pi}{4} - \frac{\varphi_k}{2}. 
\end{equation}
The parameter $\eta$ measures the non-sphericity of $\mathbf{s}$. We are particularly interested in small $\eta$, \emph{i.e.}, near-spherical directions. 
We seek a uniform stress solution $\boldsymbol{\sigma}_t = \sigma_t \mathbf{s}$.
The evolution of the states and of the intrinsic domain is represented in Figure~\ref{fig:examples-DP}-right.

In the first loading phase, while stresses are inside the initial strength domain (light gray region), there is no volume change. Then $2\mu_0\boldsymbol{\varepsilon}_t = \sigma_t \mathbf{s}^D$ and $\boldsymbol{\sigma}_t = \sigma_t \mathbf{s}$. Since $t = \mathbf{s} \cdot \boldsymbol{\varepsilon}_t$, we find:
\[ t = \eta^2\left(1 + \frac{1}{3}\sin^2\varphi_k\right)\frac{\sigma_t}{\mu_0}. \]
This phase ends for the average deformation $\varepsilon_e^\s$ such that $\boldsymbol{\sigma}_{t}\mathbf{n}$ reaches the boundary of $\mathcal{K}_0$ at the tangency point with the Mohr's circle. It follows from the geometrical construction in Figure~\ref{fig:examples-DP}-right:
\[ \sigma_\mathrm{I} = (1 - \eta)\sigma_e,\quad \sigma_\mathrm{III} = (1 + \eta)\sigma_e^\s,\quad \sigma_e^\s = \frac{\sin\varphi_k}{\sin\varphi_k + \eta}\,\sigma_c. \]
Here $\sigma_e^\s$ is the stress intensity at $\varepsilon_e^\s$. Note that $\varepsilon_e^\s$ is of second order and $\sigma_e^\s \approx \sigma_c$ for small $\eta$.

For $t > \varepsilon_e^\s$, a cohesive crack forms on section $\Gamma$ with normal $\mathbf{n}$. The displacement jump is collinear with the outward normal to $\mathcal{K}_0$, \emph{i.e.}, $\mathbf{d}_0 = \sin\varphi_k\,\mathbf{n} + \cos\varphi_k\,\mathbf{t}$ (pink arrow in the figure). Hence, $\sigma_t < \sigma_e^\s$ as soon as $t > \varepsilon_e^\s$. In $\Omega_L \setminus \Gamma$, the stress lies inside $\mathbb{K}_0$ so nonlinear strains vanish and we have:
\[ \boldsymbol{\varepsilon} = \frac{\sigma_t}{2\mu_0}\,\mathbf{s}^D \quad\text{in}\quad \Omega_L \setminus \Gamma. \]
The uniformity implies that the displacement jump across $\Gamma$ is a rigid body motion collinear with $\mathbf{d}_0$:
\[ \jump{\mathbf{u}_t}(\mathbf{x}') = \boldsymbol{\xi}_t + \boldsymbol{\omega}_t \wedge \mathbf{x}' = d_t(\mathbf{x}')\,\mathbf{d}_0 \quad\text{on } \Gamma. \]
Since $\mathbf{d}_0$ is not collinear with $\mathbf{n}$, we must have $\boldsymbol{\omega}_t = \mathbf{0}$, so the jump is uniform: $\jump{\mathbf{u}_t} = d_t \mathbf{d}_0$.

The loading condition becomes:
\[ t = \|\mathbf{s}^D\|^2 \frac{\sigma_t}{2\mu_0} + \frac{d_t}{L}\,\mathbf{s}\mathbf{n}\cdot\mathbf{d}_0 = \frac{\sigma_t}{2\mu_0}\left(1 + \frac{1}{3}\sin^2\varphi_k\right)\eta^2 + \frac{d_t}{L}(\sin\varphi_k + \eta). \]
For small $\eta$, the leading order term gives $d_t \approx tL / \sin\varphi_k$. Applying Proposition~\ref{P-Evolhat1}, we deduce that damage grows uniformly on $\Gamma$ from $0$ to $1$, reached when $d_t = \Gc / (\sigma_c |\hat{\k}'(1)|)$.\\

The fundamental distinction between spherical and non-spherical loading lies in the conditions for crack initiation. Under spherical stress, crack nucleation occurs automatically once the stress reaches the failure criterion. In contrast, for non-spherical (i.e., deviatoric) stress states, crack initiation occurs only for specific loading directions. The presence of deviatoric components imposes additional constraints on the admissible crack orientations and associated displacement jumps. Furthermore, under the assumption of uniform stress, the solution can be determined up to symmetry, as the volumetric strain governs the surface displacement jump and, consequently, the damage.

Replacing the Drucker-Prager criterion with the Mohr-Coulomb criterion—whose normals are compatible with a displacement jump for all loading directions—allows crack initiation to occur without restrictions on the loading direction. This situation will be illustrated in the following example, which considers the specific case of the Tresca criterion with a tension cut-off.

\subsubsection{Tresca criterion with a tension cut-off}
\label{sec:examples-tresca-cap}
The Tresca criterion with a tension cut-off is an intrinsic criterion requiring the maximum tension to be smaller that a tensile strength $\sigma_c$ and the maximum shear to be smaller that a shear strength $\tau_c$. 
The strength criterion, the intrinsic section and the corresponding support functions have been introduced in Section~\ref{sec:Mohr-DP}. Figure~\ref{fig:tresca-cup} shows the intrinsic section in the Mohr's representation.

As before, we discuss
 the solution to the multiaxial loading problem for a generic loading direction $\s$.
If this direction corresponds to a hydrostatic compression, \emph{i.e.} if $\mathbf s=-\mathbf I/\sqrt3$, then the response will remain purely elastic, as the criterion will never be reached. In all other cases, a cohesive crack will nucleate once a critical loading level $\varepsilon_e^\s$ is reached, and will then degrade until rupture. Let us distinguish between several cases, denoting by $(s_\mathrm{I},s_\mathrm{II},s_\mathrm{III})$ the ordered eigenvalues of  $\s$ and by $(\mathbf  e_\mathrm{I},\mathbf  e_\mathrm{II},\mathbf e_\mathrm{III})$ the corresponding principal directions. In all cases, the stresses are uniform within the cube: $\sig=\sigma\mathbf s$, where $\sigma$ denotes their magnitude (the direction $\mathbf s$ being normalized).

\begin{figure}[htbp]
\centering
{\includegraphics[width=0.4\textwidth]{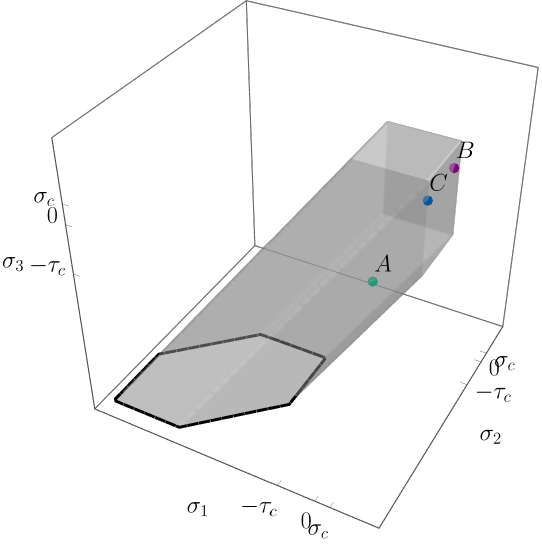}}%
\hspace{0.05\textwidth}
\raisebox{1.cm}{\includegraphics[width=0.4\textwidth]{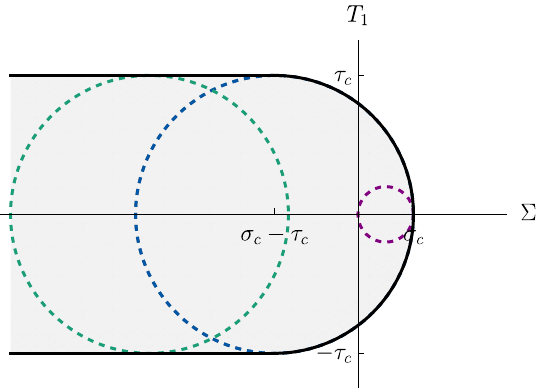}}
\caption{Tresca criterion with a tension cut-off: Strength criterion (left) and  intrinsic section (right). 
The dashed circles in the intrinsic section of the right panel represent the outer Mohr's circles at the onset of crack nucleation for the three state states $A$, $B$, and $C$ in the left panel, that are examples of the critical states for the three class of loading directions described in Section~\ref{sec:examples-tresca-cap}.
}
\label{fig:tresca-cup}
\end{figure}

\begin{enumerate}
\item[(A)] If $0\ne\dfrac{s_\mathrm{I}-s_\mathrm{III}}{2\tau_c}>\dfrac{s_\mathrm{I}}{\sigma_c}$, the outer Mohr's circle is tangent to the intrinsic curve at a point where $\norm{\T}=\tau_c$ when the loading reaches the critical value $\varepsilon_e^\s={2\tau_c}/({A_0^\s(s_\mathrm{I}-s_\mathrm{III})})$, see point $A$ in  Figure~\ref{fig:tresca-cup}. A cohesive crack will then appear, with orientation $\mathbf n_*$ forming an angle $\theta=\pm\pi/4$ with the principal directions $\mathbf e_\mathrm{I}$ and $\mathbf e_\mathrm{III}$. This corresponds to shear failure.

\item[(B)] If $0\le\dfrac{s_\mathrm{I}-s_\mathrm{III}}{2\tau_c}<\dfrac{s_\mathrm{I}}{\sigma_c}$, the outer Mohr's circle is tangent to the intrinsic curve at the point $(\sigma_c,0)$ when the loading reaches the critical value $\varepsilon_e^\s={\sigma_c}/({A_0^\s\,s_\mathrm{I}})$, see point $B$ in Figure~\ref{fig:tresca-cup}. A cohesive crack will then appear with an orientation aligned with the principal direction $\mathbf e_\mathrm{I}$. This corresponds to tensile failure.

\item[(C)]If $\dfrac{s_\mathrm{I}-s_\mathrm{III}}{2\tau_c}=\dfrac{s_\mathrm{I}}{\sigma_c}>0$, the maximum shear and maximum tension criteria are reached simultaneously, and the orientation of the crack is arbitrary within the plane defined by the principal directions $\mathbf e_\mathrm{I}$ and $\mathbf e_\mathrm{III}$, see point $C$ in Figure~\ref{fig:tresca-cup}.
\end{enumerate}

In all cases, at the end of the loading process, the formed crack can no longer sustain stresses other than hydrostatic pressure.

The cohesive model obtained with this criterion appears to be a complete model to describe the nucleation and propagation of cohesive cracks which is particularly appealing for numerical applications. The strength surface can be easily tuned to fit the experimental data for the tensile and shear strength $\sigma_c$ and $\tau_c$. The fracture toughness $\Gc$ appears as the additional material parameter, together with the Young modulus and Poisson ratio characterizing the linear elastic behavior. A further generalization of the present criterion to allow for additional flexibility could include, for example, a Mohr-Coulomb model with a tension cut-off~\cite{Sal83,CHEN75}.

\section{Conclusions}
\label{sec:Conclusion}

We begin by summarizing the overall construction and the results obtained before concluding and outlining a few perspectives.

Aiming to develop a fracture model capable of accounting for both the initiation and propagation of displacement discontinuity surfaces---while incorporating a bulk strength criterion expressed in terms of stress---we started from a nonlinear Hencky-type elastic law. 
This law is based on a constant linear elastic stiffness, which governs linear deformations with respect to stress, and a fixed convex set in stress space that introduces a material strength criterion and determines nonlinear deformations via the normality rule.

We then introduced a phase-field variable, which modifies the strength criterion through a softening law and entails an energetic cost that, due to the presence of a small parameter, forces localization of the phase-field variable within narrow bands of controlled thickness. 
Unlike existing phase-field fracture models, the phase-field variable does not degrade the linear elastic stiffness, but the strength, \emph{i.e.} the rate of the linear growth of the energy for large strains.
The evolution law for displacements, stresses, and the phase-field model was formulated using a standard variational approach based on three general principles: the irreversibility of the phase-field variable, an energy-based stability criterion for admissible states, and an energy balance between strain energy, dissipated energy, and external work. 

By solving an evolution problem on a cube with an arbitrary convex strength criterion, we showed that a cohesive crack nucleates as soon as the stress reaches the boundary of the strength domain, provided that an outward normal to the convex set at the contact point is compatible with a displacement jump. 
For isotropic strength criteria, this condition is equivalent to requiring that the stress vector lies on the boundary of the intrinsic domain $\mathcal{K}_0$ (see Section~\ref{sec:intrinsic-curve}) in Mohr's representation. 
Conversely, when this normal is not compatible with a displacement jump, strain localization and hence crack nucleation cannot occur; the response remains homogeneous up to strain levels diverging as the small parameter tends to zero.
The nucleated crack is cohesive, in the sense that the stress vector is a continuous function of the displacement jump, progressively decreasing from the initial strength to zero. 
We characterize this cohesive law in terms of the constitutive parameters of the models and the strength criterion. 

 
\medskip

Based on these results, we proposed a limiting fracture model that meets the initial objectives.
In the bulk, it corresponds to a nonlinear Hencky-type elastic behavior, while on the surface, it corresponds to a cohesive fracture law incorporating damage. 
The phase-field variable becomes a damage parameter taking values between $0$ and $1$, acting only along displacement jumps, and whose evolution is governed by a local (without gradient terms), irreversible law. 
It contributes to the surface energy, which includes a reversible elastic component associated with the displacement jump and an irreversible part increasing with damage. 
When it reaches $1$, the surface energy equals Griffith's energy, and cohesive forces vanish (although the stress vector may not vanish, depending on the chosen strength criterion).

\medskip
The method is constructive: the limiting model is \emph{deduced} from the initial ingredients---namely, the phase-field model, the strength criterion, and the softening law coupling them. The cohesive fracture evolution law obtained inherits these initial choices.

Our model is consistent with and extends existing results on free-discontinuity problems involving cohesive surface energies:
\begin{itemize}
    \item \citet{BouBraBut95} began with a variational model involving linear elasticity in the bulk and a cohesive surface energy with a finite slope at the origin. 
    They showed that the principle of energy minimization, combined with the arbitrariness of the crack set, necessitates a relaxed formulation in which the bulk behavior incorporates a strength calibrated to the slope at the origin of the cohesive law.
    Instead, we started  with a nonlinear elastic law in the bulk, incorporating an arbitrary prescribed strength domain, and deduced the corresponding cohesive law.
    Our formulation suggests a natural path to extend the scalar (anti-plane) results of~\citet{BouBraBut95} to the vector-valued elasticity settings with arbitrary strength criteria.
    
    \item \citet{ChaLavMar06} established that cohesive fracture models derived from energy minimization naturally produce a strength domain represented by an intrinsic curve in the Mohr diagram. 
    However, their analysis assumes linear bulk elasticity, and no relaxation result is available in the setting of vector-valued elasticity. 
    Our model starts from a convex strength domain in the bulk and leads to the same class of cohesive surface energies, while naturally incorporating the self-consistent bulk behavior in the sense of the relaxation result of~\citet{BouBraBut95}.  
\end{itemize}

Let us now revisit the assumptions made and distinguish between those that could potentially be relaxed and those that appear to be fundamental.

\begin{itemize}
\item  We assumed that nonlinear deformations are reversible. This was done to simplify the presentation, but there is no conceptual obstacle to introducing irreversibility and thereby starting from an elastic-perfectly plastic model. 
    This is essentially what was proposed in \cite{AleMarVid14,AleMarVid15} for the von Mises criterion. 
    It would lead to a cohesive energy containing irreversibility with respect to displacement jumps.
    To date, this inherited irreversible cohesive law has not been clearly identified in the general vector-valued setting.

\item We assumed that the linear stiffness tensor $\A$ is invariant and independent of the phase-field. This assumption enabled a simplified presentation and allowed us to demonstrate the nucleation of cohesive cracks without requiring degradation of the elastic coefficients. 
Nonetheless, building on the present results, reintroducing stiffness degradation remains a promising direction for future work.
 It may allow the inclusion of additional effects relevant to the complex phenomenology of fracture mechanics, such as fatigue, unloading behavior, and more intricate strength criteria.

\item  We assumed that the strength criterion shrinks homothetically as damage evolves. This greatly simplified the analysis but nothing prevents considering a more general dependence. In any case, crack nucleation depends only on the initial strength criterion, while its evolution enters only through the cohesive law. This highlights the need to introduce some degree of phenomenology—or to more thoroughly account for physical mechanisms—in order to better guide the choice of the cohesive law and the nonlinear evolution of the strength domains.

\item  The choice of the strength criterion appears essential. Indeed, the fact that crack nucleation can only occur when the stress vector lies on the boundary of the intrinsic domain (a necessary and sufficient condition), and not merely when stress reaches the threshold (a necessary but not always sufficient condition), establishes a hierarchy among criteria. For example, the Tresca criterion is ``superior'' to the von Mises criterion, in the sense that a crack may (and even must) nucleate as soon as stress reaches the threshold for the former, whereas for the latter, the intermediate principal stress must also be suitably positioned. This casts strength criteria in a new light. Criteria such as the Tresca or Mohr-Coulomb with a tension cut-off appear suitable to produce acceptable models for the most common multiaxial material behavior, see  Section~\ref{sec:strenght-examples} and Section~\ref{sec:examples-tresca-cap}.

\item  The entire construction relies essentially on the variational approach: existence of an energy functional, normality rule, stability condition, energy balance, \ldots\  depend on it. In return, this provides a framework for addressing evolution problems from both mathematical and numerical perspectives. It raises the hope of obtaining true convergence results from the phase-field model to the cohesive fracture model, generalizing the first such result obtained in \cite{DalOrlToa16} in the restricted setting of the von Mises criterion in anti-plane shear without irreversibility. From a numerical standpoint, we retain the initial phase-field model with its parameter, which can be chosen arbitrarily small, allowing for computations with regular fields that approximate displacement discontinuities without any \emph{a priori} assumption on their spatiotemporal evolution. It remains to implement and test this method on examples to validate the approach. In any case, departing from this well-established and now well-understood ``variational'' path based on energy minimization principles to venture into uncharted territory seems highly risky. We showed that this is not necessary to obtain a consistent cohesive fracture model with arbitrary (convex) strength criteria. We shall not advise attempting it.
\end{itemize}

In conclusion, provided that complete mathematical results and conclusive numerical tests support this construction, we believe that the proposed approach can serve as the backbone of a general theory of fracture. 
Such a theory would bridge various concepts that have traditionally been developed separately, including strength criteria, plasticity laws, limit analysis, Griffith's theory, ductile fracture, cohesive laws, and damage mechanics.

\section{Acknowledgements}
The authors wish to thank Patrick Ballard for sharing with them a proof of~\eqref{eq:suptensors}.
BB  acknowledges the support of the Natural Sciences and Engineering Research Council of Canada (NSERC), RGPIN-2022-04536 and the Canada Research Chair program.
CM and CZ received funding from the European Union’s Horizon 2020 research
and innovation program under the Marie Skłodowska-Curie grant agreement No. 861061 -
NEWFRAC Project. 
CM and JJM gratefully acknowledge the \emph{Fondation des Treilles} for hosting their research residency on variational fracture mechanics.
\bibliography{24-Nucleation}
\end{document}